\def\p{\partial}
\def\bs{\backslash}
\def\subset{\subseteq}
\def\Z{\mathbb{Z}}
\def\eps{\varepsilon}
\def\del{\delta}
\def\gam{\gamma}
\def\cC{\mathcal{C}}
\def\cE{\mathcal{E}}
\def\cG{\mathcal {G}}
\def\cI{\mathcal{I}}
\def\cL{\mathcal {L}}
\def\cO{\mathcal{O}}
\def\cP{\mathcal{P}}
\def\cX{\mathcal{X}}
\def\1{\mathbf{1}}
\def\lam {\lambda}
\def\gam{\gamma}
\def\Gam{\Gamma}
\def\eps{\varepsilon}
\def\del{\delta}
\def\colset{\mathcal X}
\newtheorem*{theorem*}{Theorem}
\newtheorem{theorem}{Theorem}
\newtheorem{lemma}[theorem]{Lemma}
\newtheorem{cor}[theorem]{Corollary}
\newtheorem{defn}[theorem]{Definition}
\newtheorem*{defn*}{Definition}
\newtheorem*{prop*}{Proposition}
\newtheorem*{conj*}{Conjecture}
\newtheorem{claim}[theorem]{Claim}
\newtheorem{question}[theorem]{Question}
\newtheorem*{fact*}{Fact}
\newtheorem{fact}[theorem]{Fact}
\newtheorem{remark}{Remark}
\begin{document}
\title{Algorithms for \#BIS-hard problems \\ on expander graphs}

\author{Matthew Jenssen\thanks{University of Birmingham, m.jenssen@bham.ac.uk.}\and Peter Keevash\thanks{University of Oxford, keevash@maths.ox.ac.uk.\newline Research supported in part by ERC Consolidator Grant 647678.} \and Will Perkins\thanks{University of Illinois at Chicago, math@willperkins.org. \newline Research supported in part by EPSRC grant EP/P009913/1 and NSF Career award DMS-1847451.}}

\date{March 23, 2020}

\maketitle

\begin{abstract}
We give an FPTAS and an efficient sampling algorithm for the high-fugacity hard-core model on bounded-degree bipartite expander graphs and the low-temperature ferromagnetic Potts model on bounded-degree expander graphs.  The results apply, for example, to random (bipartite) $\Delta$-regular graphs, for which no efficient algorithms were known for these problems (with the exception of the Ising model) in the non-uniqueness regime  of the infinite $\Delta$-regular tree.  We also find efficient counting and sampling algorithms for proper $q$-colorings of random $\Delta$-regular bipartite graphs when $q$ is sufficiently small as a function of $\Delta$. 
\end{abstract}

\section{Introduction}

There are two natural computational problems associated to a statistical physics spin model on a graph: the \textit{approximate counting} problem of approximating the partition function of the model and the \textit{sampling} problem of  obtaining a random spin configuration approximately distributed according to the model.

A prominent example is  the hard-core model of weighted independent sets. For a graph $G$ and fugacity parameter $\lam>0$, the hard-core model is the probability distribution $\mu_{G,\lam}$ on the collection $\cI(G)$ of independent sets of $G$ given by
\begin{align*}
\mu_{G,\lam}(I) &= \frac{\lam^{|I|}}{Z_G(\lam)} 
\end{align*}
where 
\begin{align*}
Z_G(\lam) &= \sum_{I \in \cI(G)} \lam^{|I|}
\end{align*}
is the hard-core partition function (also known as the independence polynomial in graph theory).

A \textit{fully polynomial-time approximation scheme} (FPTAS) is an algorithm that for every $\eps>0$ outputs an $\eps$-relative approximation to $Z_G(\lam)$ (that is, a number $\hat Z$ so that $e^{- \eps} \hat Z \le Z_G(\lam) \le e^{\eps} \hat Z$) and runs in time polynomial in $|V(G)|$ and $1/\eps$.  A \textit{polynomial-time sampling algorithm} is a randomized algorithm that for every $\eps>0$ runs in time polynomial in $|V(G)|$ and $1/\eps$ and outputs an independent set $I $ with distribution $\mu_{\text{alg}}$ so that $\| \mu_{G,\lam} - \mu_{\text{alg}} \|_{TV} <\eps$ (this terminology is not completely standard, but it defines the natural sampling equivalent of the running-time guarantees of an FPTAS). 

The computational complexity of the approximate counting and sampling problems for the hard-core model is well understood for bounded-degree graphs.  For graphs of maximum degree at most $\Delta$, when $\lam<\lam_c (\Delta) = \frac{(\Delta-1)^{\Delta-1}   }{ (\Delta-2)^\Delta  }$, there is an FPTAS and a polynomial-time sampling algorithm due to Weitz~\cite{weitz2006counting} (see also the recent~\cite{anari2020spectral}); whereas when $\lam > \lam_c(\Delta)$ both computational problems are hard: there is no polynomial-time algorithm unless NP$=$RP~\cite{sly2010computational,sly2014counting,galanis2016inapproximability}.  The value $\lam_c(\Delta)$ is the \textit{uniqueness threshold} of the hard-core model on the infinite $\Delta$-regular tree~\cite{kelly1985stochastic}. 

On the other hand, if we restrict ourselves to bipartite graphs, then classifying the computational complexity of these tasks are open problems.  The class \#BIS is the class of problems polynomial-time equivalent to approximating the number of independent sets of a bipartite graph~\cite{dyer2004relative}, and many interesting approximate counting and sampling problems have been shown to be \#BIS-hard~\cite{goldberg2012approximating,chebolu2012complexity,galanis2016ferromagnetic} (that is, at least as hard as approximating the number of independent sets in a bipartite graph).  In particular, Cai, Galanis, Goldberg, Guo, Jerrum, {\v{S}}tefankovi{\v{c}}, and Vigoda~\cite{cai2016hardness} showed that  for all $\Delta\ge 3$ and all $\lam > \lam_c(\Delta)$, it is \#BIS-hard to approximate the hard-core partition function at fugacity $\lam$ on a bipartite graph of maximum degree $\Delta$.  Resolving the complexity of \#BIS is a major open problem in the field of approximate counting.

One direction for partial progress on any intermediate complexity class is to find subclasses of instances for which the problem is tractable (e.g. results showing that the Unique Games problem is tractable on expander graphs~\cite{arora2008unique,makarychev2010play}). For \#BIS, we would like to find subclasses of bipartite graphs on which we can efficiently approximate the number of independent sets or the hard-core partition function.  One example is the algorithm of Liu and Lu~\cite{liu2015fptas}  which works when $\lam < \lam_c(\Delta)$ and one side of the bipartition has maximum degree $\Delta $ but the other side of the bipartition is allowed unbounded degree.

Recently, Helmuth, Perkins, and Regts~\cite{helmuth2018contours} gave efficient algorithms for the hard-core model at high fugacity on the torus $(\Z/n\Z)^{d}$ and subsets of the lattice $\Z^d$ with certain boundary conditions. The algorithms are based on contour models from Pirogov-Sinai theory~\cite{pirogov1975phase} along with the cluster expansion and Barvinok's polynomial interpolation method~\cite{barvinok2016computingP,barvinok2017combinatorics}.  

Here we extend this approach and use abstract polymer models and the cluster expansion to give efficient approximate counting and sampling algorithms for all bounded-degree, bipartite expander graphs at sufficiently high fugacity. We say that a bipartite graph $G=(\cO,\cE, E)$ is a \emph{bipartite $\alpha$-expander} if $|\partial S|\geq (1+\alpha) |S|$ for all $S\subseteq \cO$ with $|S|\le |\cO|/2$ and all $S\subseteq \cE$ with $|S|\le |\cE|/2$. Here $\p S$ denotes the \emph{vertex boundary} of $S$, the set of all vertices in $V(G)\backslash S$ with a neighbor in $S$.

\begin{theorem}
\label{thmhard-core}
There exists an absolute constant $C$ such that for every $\alpha >0$, $\Delta \ge 3$, and  any $\lam > C\Delta^{4/\alpha}$, there exists an FPTAS and a polynomial-time sampling algorithm for the hard-core model at fugacity $\lam$ on bipartite $\alpha$-expander graphs of maximum degree $\Delta$.
\end{theorem}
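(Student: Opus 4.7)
The plan is to decompose $Z_G(\lam)$ as $Z_\cE(\lam)+Z_\cO(\lam)$, where the two summands correspond to the two ``phases'' that dominate at high fugacity, namely the ground states $I=\cE$ and $I=\cO$. Concretely, $Z_\cE(\lam)$ sums $\lam^{|I|}$ over independent sets $I$ with $|I\cap\cE|>|I\cap\cO|$, and $Z_\cO$ is analogous with the inequality reversed (the at-most-one tied configuration can be split by convention). By symmetry it suffices to treat $Z_\cE$. Factoring out $\lam^{|\cE|}$, each such $I$ is encoded by its defect set, with $D_\cO := I\cap\cO$ and $D_\cE := \cE\setminus I$, subject to the independence constraint $N(D_\cO)\subseteq D_\cE$. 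I define a polymer $\gam$ to be a connected component of $D_\cO\cup D_\cE$ in an auxiliary graph obtained by joining any two vertices at $G$-distance at most two, and assign it weight $w(\gam)=\lam^{|\gam\cap\cO|-|\gam\cap\cE|}$; thus $Z_\cE(\lam)=\lam^{|\cE|}\,\Xi_\cE(\lam)$ with $\Xi_\cE$ the standard abstract polymer partition function over families of pairwise non-adjacent polymers.

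The main technical step is to verify a \kotecky--Preiss convergence criterion for the cluster expansion of $\log\Xi_\cE$. The bipartite $\alpha$-expansion enters here in an essential way: for a polymer $\gam$ with $|\gam\cap\cO|=k\le|\cO|/2$, the constraint $N(\gam\cap\cO)\subseteq\gam\cap\cE$ combined with $|\partial(\gam\cap\cO)|\ge(1+\alpha)k$ forces $|\gam\cap\cE|\ge(1+\alpha)k$, so that $|\gam|\ge(2+\alpha)k$ and $w(\gam)\le\lam^{-\alpha k}$, while polymers with no $\cO$-vertex are singletons of weight $\lam^{-1}$. Combined with the standard bound that the number of connected subgraphs of size $s$ in the auxiliary graph (of maximum degree at most $\Delta^2$) through a fixed vertex is at most $(e\Delta^2)^s$, the \kotecky--Preiss criterion reduces to the summability of $(e\Delta^2)^s\lam^{-\alpha s/(2+\alpha)}$, which holds for $\lam>C\Delta^{4/\alpha}$ for an absolute constant $C$.

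With convergence of the cluster expansion in hand, I would apply the algorithmic framework of Helmuth--Perkins--Regts: truncate the cluster expansion at clusters of total size $O(\log|V(G)|)$, enumerate all such clusters by connected-subgraph enumeration in time polynomial in $|V(G)|$ (using boundedness of $\Delta$), and thereby obtain an $\eps$-approximation of $\log\Xi_\cE(\lam)$ in polynomial time. The same argument applied to $\Xi_\cO$, combined, yields the FPTAS for $Z_G(\lam)$. For approximate sampling, one first selects a phase with probability proportional to the two partition-function estimates and then samples a polymer configuration, either via self-reducibility or a polymer-level Markov chain / rejection sampler of the kind previously developed for the lattice case.

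The main obstacle I anticipate is the treatment of ``oversized'' polymers whose $\cO$-part exceeds $|\cO|/2$, for which the expansion hypothesis does not apply directly. These must either be excluded from the polymer model---at the cost of an error term that is negligible because any such configuration carries total weight at most $\lam^{-\Omega(|\cO|)}$ at the fugacities considered---or handled by a more delicate truncated definition of the phase decomposition. Verifying that such a modification gives $Z_G=Z_\cE+Z_\cO$ up to exponentially small error, that no independent set is double-counted across phases, and that the truncated polymer partition functions enumerate exactly the correct objects, is the bookkeeping that needs to be done carefully.
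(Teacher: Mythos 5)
Your proposal is close in spirit to the paper's proof: the same phase decomposition into an ``$\cE$-dominant'' and an ``$\cO$-dominant'' branch, a polymer/cluster-expansion treatment of each branch using the bipartite expansion to verify a \kotecky--Preiss criterion, and the Helmuth--Perkins--Regts truncation framework plus self-reducibility for sampling. The difference is in the polymer encoding, and it matters quantitatively. The paper's polymers for the $\cE$-branch are just the small $G^2$-connected sets $\gamma\subseteq\cE$ of occupied even vertices; the $\cO$-side is integrated out, which is what produces the prefactor $(1+\lam)^{|\cO|}$ and the weight $w_\gamma=\lam^{|\gamma|}/(1+\lam)^{|\partial\gamma|}\le (1+\lam)^{-\alpha|\gamma|}$. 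This gives a per-vertex decay $(1+\lam)^{-\alpha}$, so K-P summability against the $(e\Delta^2)^{|\gamma|}$ enumeration bound reduces to $(1+\lam)^\alpha \gtrsim \Delta^4$, i.e.\ $\lam > C\Delta^{4/\alpha}$. Your two-sided polymer $\gamma=D_\cO\cup D_\cE$ with weight $\lam^{|\gamma\cap\cO|-|\gamma\cap\cE|}$ is also a legitimate polymer model (the local constraint $N(\gamma\cap\cO)\subseteq\gamma\cap\cE$ does factor over $G^2$-components, as you implicitly need), but the polymer has size $s\ge(2+\alpha)k$ while the weight decays only as $\lam^{-\alpha k}\le\lam^{-\alpha s/(2+\alpha)}$; so the per-vertex decay is $\lam^{-\alpha/(2+\alpha)}$, and the K-P condition requires $\lam^{\alpha/(2+\alpha)}\gtrsim \Delta^2$, i.e.\ $\lam\gtrsim\Delta^{2(2+\alpha)/\alpha}=\Delta^{4/\alpha+2}$. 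That is not of the form $C\Delta^{4/\alpha}$ with an absolute constant $C$, so as written your proposal does not recover the stated threshold; the loss is entirely the $(2+\alpha)$ stretch between the defect size $k$ and the polymer size $s$.

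Beyond the constant, the places you flag at the end as bookkeeping are indeed the places the paper spends its effort, and your description is slightly off there. First, $\lam^{|\cE|}\Xi_\cE$ without a size cap equals all of $Z_G(\lam)$, not the majority-$\cE$ share $Z_\cE$: the encoding $I\leftrightarrow(D_\cO,D_\cE)$ is a bijection onto all independent sets, which is why a restriction on polymer size is not a cosmetic truncation but the actual definition of the two branches. Second, the right notion of ``small'' is componentwise (each $G^2$-component of the defect is small), not a global majority condition; the paper's Lemma~\ref{lemsmall} is exactly the statement that every $I$ falls into at least one branch, and Lemma~\ref{lemhcapprox} then bounds the double counting of $I$ that are small on both sides. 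Also, polymers with $\gamma\cap\cO=\emptyset$ need not be singletons---any $G^2$-connected subset of $\cE$ is allowed---though this only helps you since larger such polymers have even smaller weight. So: your route would work with the size cap made precise and the approximation lemma proved, but it yields a threshold $\lam > C\Delta^{4/\alpha+2}$ rather than $C\Delta^{4/\alpha}$; the paper's one-sided polymer with the $(1+\lam)$-normalized weight is what buys back the factor of $\Delta^2$.
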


In general the running time of our approximate counting algorithm in Theorem~\ref{thmhard-core} (and the theorems that follow) is $(n/\eps)^{O(\log \Delta)}$ for an $\eps$-relative approximation to $Z_G(\lam)$ for an $n$-vertex graph $G$.  Since $\Delta$ is fixed this running time is polynomial in $n$ and $1/\eps$, but one might hope to improve the dependence on $\Delta$.  In fact, the larger $\lam$ is and the larger $\alpha$ is, the faster the running time, and for large enough values of the parameters the algorithm runs in almost linear time.  We discuss this more below.  See also~\cite{PolymerMarkov} in which the authors find faster sampling algorithms in similar settings using abstract polymer models and Markov chains.  

We can extend our methods to obtain efficient counting and sampling algorithms for the hard-core model on random regular bipartite graphs for much smaller values of $\lam$, all the way down to $\lam = \Omega\left( \frac{\log^2 \Delta}{\Delta}\right)$.

Let $\cG^{\text{bip}}(n , \Delta)$ be the set of all $\Delta$-regular bipartite graphs on $n$ vertices (where $n$ is even), and let $\mathbf G^{\text{bip}}_{n,\Delta}$ be a uniformly chosen graph from $\cG^{\text{bip}}(n , \Delta)$.   We say that a property holds for \textit{almost every} $\Delta$-regular bipartite graph if the property holds with  probability $\to 1$ as $n\to \infty$ for  $\mathbf G^{\text{bip}}_{n,\Delta}$.
\begin{theorem}
\label{thmHCrandombip}
There exists a  constant $\Delta_0$ so that for every $\Delta\ge \Delta_0$ and all $\lam > \frac{50 \log^2 \Delta}{\Delta}$, there is an FPTAS and an efficient sampling algorithm for the hard-core model at fugacity $\lam$ on almost every $\Delta$-regular bipartite graph.  The running time of the approximate counting algorithm is $(n/\eps)^{1+O(\log^2 \Delta/\Delta)}$. 
\end{theorem}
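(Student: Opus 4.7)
The plan is to refine the polymer model / cluster expansion approach underlying Theorem~\ref{thmhard-core} so as to exploit the much stronger expansion properties that random $\Delta$-regular bipartite graphs enjoy---properties far better than those of a generic $(1+\alpha)$-expander. The bound $\lam > C\Delta^{4/\alpha}$ in Theorem~\ref{thmhard-core} is driven by the worst-case ratio $|\p S|/|S|$ across \emph{all} $S$, which can be as small as $1+\alpha$; but for $\mathbf{G}^{\text{bip}}_{n,\Delta}$ this ratio is close to $\Delta$ whenever $|S|$ is small, and only mildly degrades as $|S|$ grows toward $n/2$. This gain is what lets us drop the fugacity all the way to $\lam = \Omega(\log^2\Delta/\Delta)$.

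First I would prove a structural lemma via the configuration model: almost surely, every $S$ on one side of $\mathbf{G}^{\text{bip}}_{n,\Delta}$ with $|S| \le |\cO|/2$ satisfies $|\p S| \ge (\Delta - h(|S|/n))\,|S|$, where $h(x) = O(\log \Delta)$ for $x \le \eta$ (with an absolute constant $\eta>0$) and $h(x) < \Delta-1-c$ throughout for some absolute $c>0$. The proof is a standard first-moment calculation: for a fixed pair $(S,T)$ with $T \supseteq \p S$, estimate the probability that every half-edge from $S$ is matched inside $S \cup T$ in the configuration model, then union-bound over the $\binom{n/2}{|S|}\binom{n/2}{|T|}$ choices of $(S,T)$.

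Second, I would reuse the polymer representation from the proof of Theorem~\ref{thmhard-core}. Hard-core configurations are classified by their dominant phase (even vs.\ odd side), and the deviation from a pure phase is encoded by a family of pairwise-compatible connected polymers $\gamma$ living on the minority side, each carrying weight close to $\lam^{|\gamma|}/(1+\lam)^{|\p\gamma|}$. The number of connected polymers of size $k$ rooted at a fixed vertex is $(e\Delta)^{O(k)}$. Convergence of the cluster expansion via the \kotecky--Preiss criterion then reduces to verifying $\lam\,(1+\lam)^{-(\Delta - h)} \le (e^2\Delta)^{-1}$; for $\lam > 50\log^2\Delta / \Delta$ and large $\Delta$ one has $(1+\lam)^\Delta \ge e^{25\log^2\Delta}$, which swamps the entropy factor $(e\Delta)^k$ whenever $h = O(\log \Delta)$, confirming convergence for all polymers with $|\gamma| \le \eta n$. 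Polymers with $|\gamma| > \eta n$ are excluded by combining the minority-side constraint $|\gamma| \le |\cO|/2$ with the medium-regime expansion $h < \Delta - 1 - c$ from Step~1.

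Given convergence, the algorithms follow the template already used for Theorem~\ref{thmhard-core}: truncating the cluster expansion at $O(\log(n/\eps))$ terms, each evaluable by Patel--Regts enumeration of connected subgraphs, yields the FPTAS; the sampling algorithm follows from the standard counting-to-sampling reduction within the polymer framework, and the stated running time $(n/\eps)^{1+O(\log^2\Delta/\Delta)}$ emerges from the precise truncation depth dictated by the geometric rate in the \kotecky--Preiss bound. The principal difficulty is calibrating the small-set expansion bound in Step~1 tightly enough: obtaining $|\p S| \ge (\Delta - O(\log \Delta))|S|$ \emph{uniformly} over $|S| \le \eta n$ (and not merely for $|S| = o(n)$) requires a careful union bound balancing the entropy $\binom{n/2}{|S|}$ against the probability of $\p S$ being $\log\Delta$-compressed, and the constant $50$ in the threshold $50\log^2\Delta/\Delta$ is dictated by exactly this trade-off.
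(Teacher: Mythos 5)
Your overall architecture — exploit the much stronger boundary expansion available in $\mathbf G^{\text{bip}}_{n,\Delta}$, keep the two-phase polymer representation from Theorem~\ref{thmhard-core}, verify \kotecky--Preiss with the improved Peierls bound, and run the same truncation and self-reducible sampling — is exactly the right plan and matches the paper's strategy. However, your Step~1, as stated, is false and that is precisely where the technical content lives.

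You claim that for an \emph{absolute} constant $\eta>0$ (independent of $\Delta$), almost surely all $S$ on one side with $|S|\le\eta n$ satisfy $|\p S|\ge(\Delta-O(\log\Delta))|S|$. This cannot hold: with $m=|\cO|=|\cE|=n/2$, any $S\subseteq\cO$ has $|\p S|\le m$, so for $|S|=\eta n$ the ratio $|\p S|/|S|\le 1/(2\eta)$ is a constant, while $\Delta-O(\log\Delta)\to\infty$. Hence the largest $\eta$ for which $(\Delta-O(\log\Delta))$-expansion can hold uniformly is $O(1/\Delta)$, not an absolute constant. Your emphasis at the end (``uniformly over $|S|\le\eta n$ and not merely for $|S|=o(n)$'') cements that you really did intend an absolute $\eta$; that version is unachievable. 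The correct threshold is of the form $\Theta(\log\Delta/\Delta)\cdot m$. The paper defines polymers to be \emph{tiny} sets ($|\gamma|\le\tfrac{4\log\Delta}{\Delta}m$), invokes Bassalygo's asymptotic-expander theorem to get $(\tfrac{4\log\Delta}{\Delta},\tfrac{\Delta}{4\log\Delta}-\tfrac12)$-expansion w.h.p.\ (Lemma~\ref{lem:bassexp}), and separately shows that for every independent set $I$ at least one of $I\cap\cE$, $I\cap\cO$ is tiny (Lemma~\ref{lemsmallrandom}), so that the two polymer models together capture all of $Z_G(\lambda)$ up to a controlled double-counting term. If you replace ``absolute constant $\eta$'' with the $\Delta$-dependent threshold $\Theta(\log\Delta/\Delta)$ the rest of your outline falls into place; the configuration-model first-moment derivation is then a reasonable alternative to citing Bassalygo. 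One further small correction: the entropy factor for rooted connected subgraphs of $G^2$ is $(e\Delta^2)^k$, not $(e\Delta)^k$, which is why the paper's computations carry $2\log\Delta$ rather than $\log\Delta$ in the exponents and why the final fugacity constant lands at $50$ after choosing the decay function $g(\gamma)=|\gamma|\cdot\tfrac{\Delta}{10\log\Delta}\log(1+\lambda)$.
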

In particular, by setting $\lam=1$ this gives an FPTAS for counting the total number of independent sets in random $\Delta$-regular bipartite graphs for large enough $\Delta$.   

\begin{remark}\label{rmkAlg}
 The guarantees of the algorithm of Theorem~\ref{thmHCrandombip} (and of those below in Corollary~\ref{corPottsrandom} and Theorem~\ref{thmColor}) are of the following form: the counting algorithm will output an estimate of the partition function for every $\Delta$-regular bipartite graph. With probability $1-o(1)$ over the choice of graph $\mathbf G^{\text{bip}}_{n,\Delta}$ the estimate will satisfy the guarantees of an FPTAS.  In fact what is needed is that the graph satisfy some expansion conditions which hold with probability $1-o(1)$ for $\mathbf G^{\text{bip}}_{n,\Delta}$.  It would be desirable to find  algorithms with a stronger type of guarantee: an algorithm could give no answer on a vanishing fraction of graphs but if it gives an answer, the answer must satisfy the stated guarantees (e.g. the sampling algorithm in~\cite{blanca2018sampling}).  We find such an algorithm for the Potts model on random regular graphs below (Corollary~\ref{corPottsrandom}) by certifying expansion using spectral methods.  We believe such an algorithm can be found for the hard-core model as well (also using spectral results, e.g.~\cite{tanner1984explicit,brito2018spectral}), but certifying the specific expansion properties used in Lemma~\ref{lemsmallrandom} below seems more difficult, and so we do not pursue it here.   
\end{remark}

To the best of our knowledge, no efficient counting or sampling algorithms for random regular bipartite graphs were previously known for any $\lam > \lam_c(\Delta)$.

Moreover, for any $\del>0$, taking $\Delta= \Omega(\log^2(1/\del)/\del )$ gives an approximate counting algorithm running in time $O((n/\eps)^{1+\del})$.

\subsection{The Potts Model}
Given a graph $G$ and $q\in\mathbb{N}$, let $\Omega = [q] = \{1,\dots,q\}$ and let $\Omega^{V(G)}$ be the set of all colorings $\omega: V(G)\to \Omega$.
Given $\omega\in\Omega^{V(G)}$ let $m(G,\omega)$ denote the number of monochromatic edges induced by the coloring $\omega$, that is
\begin{align*}
m(G,\omega):=\sum_{\{i,j\}\in E(G)}\del_{\omega(i),\omega(j)}
\end{align*}
where $\del$ is the Kronecker delta function.
The $q$-color Potts model on $G$ at inverse temperature $\beta$ is the probability distribution on $\Omega^{V(G)}$ defined by
\begin{align*}
\mu_{G,q,\beta}(\omega) &= \frac{e^{ \beta\cdot m(G,\omega)}} {Z_{G,q}(\beta) },  \quad \omega \in \Omega^{V(G)}
\end{align*}
where
\[
Z_{G,q}(\beta):=\sum_{\omega\in\Omega^{V(G)}}e^{\beta\cdot m(G,\omega)}
\]
is the Potts model partition function.  When $\beta>0$ the model is \textit{ferromagnetic} (monochromatic edges preferred) and when $\beta<0$ the model is \textit{antiferromagnetic} (bichromatic edges preferred). 

Goldberg and Jerrum \cite{goldberg2012approximating} showed that approximating the partition function of the ferromagnetic
Potts model for $q\ge3$ is \#BIS-hard. Later,
Galanis, {\v{S}}tefankovi{\v{c}}, Vigoda, and Yang~\cite{galanis2016ferromagnetic} refined this result by showing that this task is \#BIS-hard on graphs of maximum degree $\Delta$ when $\beta > \beta_o(q,\Delta)$, the order/disorder threshold of the infinite $\Delta$-regular tree (see~\cite{galanis2016ferromagnetic} for a precise definition of $\beta_o(q,\Delta)$; in particular, $\beta_o(q,\Delta) > \beta_c(q,\Delta)$, the uniqueness threshold on the infinite $\Delta$-regular tree). 

Our next theorem gives efficient counting and sampling algorithms for the ferromagnetic Potts model at low enough temperatures on expander graphs.    We say that a graph $G$ is an \emph{$\alpha$-expander} if $|\p_e(S)|\geq \alpha |S|$ for all subsets $S\subseteq V(G)$ with $|S|\le |V(G)|/2$.
Here $\p_e(S)$ denotes the \emph{edge boundary} of $S$, the set of edges of $G$ with one endpoint in $S$ and the other in $V(G)\backslash S$.

\begin{theorem}\label{thm:pottspartition}
For all $\alpha>0$, $\Delta \ge 3$, $q\geq2$ and $\beta \ge   \frac{ 4+   2\log (q \Delta)   }{ \alpha  }$, there is an FPTAS and polynomial-time sampling algorithm for the $q$-color ferromagnetic Potts model at inverse temperature $\beta$ on all $\alpha$-expander graphs of maximum degree $\Delta$.
\end{theorem}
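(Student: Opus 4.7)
The plan is to apply the abstract-polymer-model and cluster-expansion framework of Helmuth, Perkins and Regts~\cite{helmuth2018contours}, with the expansion hypothesis replacing their use of lattice geometry. At low temperature a typical Potts configuration is close to one of the $q$ monochromatic ground states, so I decompose $Z_{G,q}(\beta)$ into $q$ symmetric ``ground state'' partition functions $Z^r$ plus an exponentially small correction, and then encode each $Z^r$ as a polymer model admitting a convergent, truncatable cluster expansion.

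\emph{Polymer model.} Fix a color $r \in [q]$. A polymer (with respect to $r$) is a pair $\gamma = (S_\gamma, \tau_\gamma)$ where $S_\gamma \subseteq V(G)$ induces a connected subgraph with $|S_\gamma| \le n/2$, and $\tau_\gamma: S_\gamma \to [q]$ satisfies $\tau_\gamma(v) \neq r$ at every $v \in S_\gamma$ with a $G$-neighbor in $V(G)\setminus S_\gamma$. Two polymers are compatible iff no $G$-edge joins their supports, and the weight is $w(\gamma) = e^{-\beta B(\gamma)}$, where $B(\gamma)$ counts bichromatic edges of $G[S_\gamma]$ under $\tau_\gamma$ plus $|\partial_e S_\gamma|$ (these boundary edges being automatically bichromatic by the condition on $\tau_\gamma$). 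Let $\Omega^r$ be the set of $\omega \in [q]^{V(G)}$ for which every connected component of $\{v : \omega(v) \neq r\}$ has size at most $n/2$; configurations in $\Omega^r$ then biject with finite collections of pairwise compatible polymers, and writing $\Xi^r := \sum_\Gamma \prod_{\gamma \in \Gamma} w(\gamma)$ and $Z^r := \sum_{\omega \in \Omega^r} e^{\beta m(G,\omega)}$, we have $Z^r = e^{\beta|E(G)|}\Xi^r$.

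\emph{Kotecky-Preiss and algorithm.} Since $|S_\gamma|\le n/2$, the $\alpha$-expansion hypothesis gives $B(\gamma) \ge |\partial_e S_\gamma| \ge \alpha|S_\gamma|$, so $w(\gamma) \le e^{-\beta\alpha|S_\gamma|}$. Combined with the standard bound of at most $(e\Delta)^k$ connected subgraphs of size $k$ containing a given vertex and $q^k$ choices of $\tau_\gamma$, one verifies the Kotecky-Preiss condition
\[
\sum_{\gamma \not\sim \gamma^*} w(\gamma)\, e^{|S_\gamma|} \le |S_{\gamma^*}|
\]
as a geometric series in $e^{2+\log(q\Delta)-\beta\alpha}$ that absorbs the $(\Delta+1)|S_{\gamma^*}|$ root-vertex factor once $\beta\alpha \ge 4 + 2\log(q\Delta)$, which is exactly our hypothesis. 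Hence $\log \Xi^r$ has an absolutely convergent cluster expansion with geometric tail, and truncating at clusters of size $K = O(\log(n/\eps))$ yields an $\eps$-additive approximation computable in time $(n/\eps)^{O(\log(q\Delta))}$ by enumerating connected subgraphs as in~\cite{helmuth2018contours}.

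\emph{Assembly and sampling.} By symmetry $Z^r = Z^1$ for all $r$. The error $Z_{G,q}(\beta) - qZ^1$ comes from configurations lying in no $\Omega^r$ or in two sets $\Omega^r \cap \Omega^{r'}$ with $r \neq r'$; in either case $\omega$ has at least $\alpha n/2$ bichromatic edges. Indeed, if $\omega \notin \bigcup_r \Omega^r$ then $|V_r| < n/2$ for every $r$ (else $|V\setminus V_r|\le n/2$ forces $\omega \in \Omega^r$), so $\sum_r |\partial_e V_r| \ge \alpha n$; and if $\omega \in \Omega^r \cap \Omega^{r'}$ then $\max(|V\setminus V_r|, |V\setminus V_{r'}|)\ge n/2$ decomposes into components of size $\le n/2$ whose boundary edges (all bichromatic) are disjoint, giving $\alpha n/2$ bichromatic edges by componentwise expansion. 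A crude $q^n$ union bound then gives total Boltzmann weight $\le q^n e^{\beta|E(G)|-\beta\alpha n/2}$, exponentially smaller than $Z^1 \ge e^{\beta|E(G)|}$ because $\beta\alpha > 2\log q$. Thus the FPTAS for $\Xi^1$ upgrades to an FPTAS for $Z_{G,q}(\beta)$. Sampling then follows by the polymer-sampling reduction of~\cite{helmuth2018contours}: pick a ground state $r$ with probability $\propto \tilde Z^r$, then sample a compatible polymer configuration from its Gibbs measure via self-reducibility using the FPTAS for $\Xi^r$ on smaller polymer systems. The main obstacle is the careful setup of the polymer model (so that polymer configurations biject with $\Omega^r$ and weights factor through $B(\gamma)$) together with the exponential tail estimate $|Z_{G,q}(\beta) - qZ^1| \ll Z^1$; once these are in place, the Kotecky-Preiss verification and the algorithmic wrap-up are essentially as in~\cite{helmuth2018contours}.
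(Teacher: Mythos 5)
This is essentially the paper's proof: the same ground-state decomposition into near-monochromatic configurations, the same polymer structure (connected subsets of size at most $n/2$), the same use of edge expansion to bound polymer weights, and a Koteck\'y--Preiss verification reaching the identical threshold $\beta\alpha \ge 4 + 2\log(q\Delta)$. Your labeled polymers $(S_\gamma,\tau_\gamma)$ correspond to the paper's unlabeled polymers after summing out the label, since for a fixed support $S$ one has $\sum_{\tau : S\to[q]\setminus\{r\}} e^{-\beta B(S,\tau)} = e^{-\beta|\nabla(S)|}\,Z_{G[S],q-1}(\beta)$, so the two models coincide. One slip to fix: for the claimed bijection between $\Omega^r$ and polymer configurations you must require $\tau_\gamma(v)\neq r$ at \emph{every} $v\in S_\gamma$, not only at vertices with a $G$-neighbor outside $S_\gamma$; otherwise a single coloring $\omega\in\Omega^r$ is produced by many polymer configurations (enlarge a support by absorbing interior $r$-colored vertices whose whole neighborhood lies inside), and $e^{\beta|E(G)|}\Xi^r$ strictly overcounts $Z^r$. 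Also, when invoking the truncation algorithm you should take a decay function $g(\gamma)=|S_\gamma|$ (as in Theorem~\ref{thmPolymerCount}, condition~(ii)) rather than $g\equiv 0$; the same arithmetic still gives the Koteck\'y--Preiss bound under your hypothesis on $\beta$, and your size-truncation ``$K=O(\log(n/\eps))$'' then matches the paper's.
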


Our algorithms apply to  the Potts model on the random $\Delta$-regular graph as well. Let $\cG(n, \Delta)$ be the set of all $\Delta$-regular graphs on $n$ vertices, and let  $\mathbf G_{n,\Delta}$ be a uniformly chosen graph from $\cG(n, \Delta)$ (as long as this set is non-empty).  We say that a property holds for \textit{almost every} $\Delta$-regular graph if the property holds with  probability $\to 1$ as $n\to \infty$ for  $\mathbf G_{n,\Delta}$.

\begin{cor}
\label{corPottsrandom} 
There is an absolute constant $C>0$ so that for every $\Delta \ge 3$, $q\ge 2$,  and all $\beta > \frac{C\log( q\Delta)}{\Delta}$, there is an FPTAS and polynomial-time sampling algorithm for the ferromagnetic Potts model at inverse temperature $\beta$ on almost every $\Delta$-regular graph.  Moreover, there is a polynomial-time algorithm to certify conditions on a $\Delta$-regular graph $G$ that suffice for the guarantees of the FPTAS and sampling algorithm, and these conditions hold with probability $1-o(1)$ for $\mathbf G_{n,\Delta}$.
\end{cor}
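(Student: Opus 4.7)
The plan is to reduce Corollary~\ref{corPottsrandom} to Theorem~\ref{thm:pottspartition} by showing that, up to an $o(1)$ loss as $\Delta$ grows, almost every $\Delta$-regular graph is an $\alpha$-expander with $\alpha=\Omega(\Delta)$, and that this can be certified in polynomial time via the second eigenvalue. Concretely, let $\lambda_2(G)$ denote the second largest eigenvalue (in absolute value) of the adjacency matrix of a $\Delta$-regular graph $G$. The discrete Cheeger-type inequality states that for any $S\subseteq V(G)$ with $|S|\le |V(G)|/2$,
\begin{equation*}
|\partial_e(S)| \;\ge\; \frac{\Delta-\lambda_2(G)}{2}\,|S|,
\end{equation*}
so $G$ is an $\alpha$-expander with $\alpha=(\Delta-\lambda_2(G))/2$.

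Step one is to compute (or sufficiently upper bound) $\lambda_2(G)$ in polynomial time. Eigenvalues of an $n\times n$ symmetric matrix can be approximated to arbitrary precision in time polynomial in $n$, so given a $\Delta$-regular graph $G$, in polynomial time one can verify whether $\lambda_2(G)\le 2\sqrt{\Delta-1}+1$ (or any fixed threshold of this form). Step two is to invoke Friedman's theorem, which asserts that for a uniformly random $\Delta$-regular graph $\mathbf G_{n,\Delta}$, one has $\lambda_2(\mathbf G_{n,\Delta})\le 2\sqrt{\Delta-1}+o(1)$ with probability $1-o(1)$ as $n\to\infty$. Hence with high probability the spectral certification above succeeds, witnessing that $\mathbf G_{n,\Delta}$ is an $\alpha$-expander with
\begin{equation*}
\alpha \;\ge\; \frac{\Delta-2\sqrt{\Delta-1}-1}{2} \;\ge\; \frac{\Delta}{4}
\end{equation*}
for all $\Delta\ge \Delta_0$, where $\Delta_0$ is an absolute constant.

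Step three is to plug this $\alpha$ into Theorem~\ref{thm:pottspartition}: on any certified graph, the FPTAS and polynomial-time sampling algorithm of that theorem apply for any $\beta\ge (4+2\log(q\Delta))/\alpha$. Using $\alpha\ge \Delta/4$, this is guaranteed whenever $\beta\ge 4(4+2\log(q\Delta))/\Delta$, which can be absorbed into the form $\beta>C\log(q\Delta)/\Delta$ for a suitable absolute constant $C$ (adjusting $C$ and $\Delta_0$ to handle the additive constant and the case $q=2$, $\Delta$ small). For small values of $\Delta<\Delta_0$ the statement is vacuously covered by increasing $C$. This yields the algorithmic conclusion, and the certification procedure provides the stronger guarantee described in Remark~\ref{rmkAlg}.

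The main obstacle, conceptually, is the transition from the edge-expansion condition used in Theorem~\ref{thm:pottspartition} to a property that is efficiently certifiable. The bridge is provided by the Cheeger-type inequality, which converts the purely combinatorial expansion requirement into a spectral condition that can be checked via eigenvalue computation; Friedman's theorem then guarantees that the spectral condition actually holds with high probability for $\mathbf G_{n,\Delta}$. All other ingredients, such as verifying the numeric value of the constant $C$ and handling the regime of small $\Delta$, are routine once the spectral reduction is in place.
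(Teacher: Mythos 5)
Your proposal follows exactly the same route as the paper's proof: certify expansion of a $\Delta$-regular graph by its second eigenvalue via the Cheeger-type inequality $h(G)\ge(\Delta-\lambda_2)/2$ (the paper's Lemma~\ref{lemh}), use Friedman's theorem to show the spectral bound holds with probability $1-o(1)$ for $\mathbf G_{n,\Delta}$, and feed the resulting $\alpha=\Omega(\Delta)$ into Theorem~\ref{thm:pottspartition}. The only wrinkle is your treatment of small $\Delta$, which as written does not work. You check the threshold $\lambda_2(G)\le 2\sqrt{\Delta-1}+1$, but for $\Delta=3$ (and $4$) this bound is already $\ge\Delta$, so it certifies nothing and the Cheeger bound gives $\alpha\le 0$. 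You then assert that for $\Delta<\Delta_0$ the statement is ``vacuously covered by increasing $C$,'' which is not true: for any fixed $\Delta\ge 3$ and $q\ge 2$, the condition $\beta>C\log(q\Delta)/\Delta$ defines a nonempty ray of $\beta$'s for which the algorithm must still be supplied, no matter how large $C$ is. The clean fix, and what the paper does, is to use a smaller additive slack (e.g.\ test $\lambda(G)\le 2\sqrt{\Delta-1}+1/100$), observe that $\Delta-2\sqrt{\Delta-1}-1/100\ge\Delta/20$ for every $\Delta\ge 3$, and hence obtain $\alpha\ge\Delta/40$ uniformly; with this $\alpha$ a single constant (the paper takes $C=200$) covers all $\Delta\ge 3$ without a separate small-$\Delta$ case.
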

Again to the best of our knowledge no efficient counting or sampling algorithms were known previously for the $q \ge 3$ Potts model on random regular graphs for $\beta $ above the uniqueness threshold $\beta_c(q,\Delta)$ of the infinite $\Delta$-regular tree.  The lower bound on $\beta$ needed in Corollary~\ref{corPottsrandom} is necessarily above the ordering threshold $\beta_o(q,\Delta)$ on the infinite tree since our proof involves a phase coexistence result (Lemma~\ref{lem:starapprox}). We note however that our lower bound on $\beta$ is within a constant factor of $\beta_o$ for $\Delta$ fixed and $q$ large (since $\beta_o(q,\Delta) = (1+o_q(1)) \frac{2 \log q}{\Delta}$ as $q \to \infty$~\cite{galanis2016ferromagnetic}).

\subsection{Counting and sampling proper colorings on bipartite graphs}
\label{secIntroColorings}

   For $q \ge 3$, let $\colset_q(G)$ be the set of all proper $q$-colorings of $G$.  Let $Z_G(q) = | \colset_q(G)|$ and let $\mu_{G,q}$ be the uniform distribution on $\colset_q(G)$. In particular, if $G$ is bipartite then $\colset_q(G)$ is guaranteed to be non-empty.  We would like to approximate $Z_G(q)$ and sample from $\mu_{G,q}$.  Galanis, Stefankovic, Vigoda, and Yang~\cite{galanis2016ferromagnetic} show that these problems are \#BIS-hard on bipartite graphs of maximum degree $\Delta$ when $q \le \Delta/(2 \log \Delta)$. Here we show that under the stricter condition that $q \le c \sqrt{\Delta}/\log^{2} \Delta$ (or equivalently $\Delta \ge C q^2 \log^2 q$), these problems are tractable on random regular bipartite graphs. This solves Conjecture~1 from the earlier extended abstract of this paper~\cite{JenssenAlgorithmsSODA}.

\begin{theorem}
\label{thmColor}
There is an absolute constant $C>0$ so that for all $q \ge 3$, and for  $\Delta \ge C q^2 \log^2 q$,  there is an FPTAS and polynomial-time sampling algorithm for proper colorings on almost every $\Delta$-regular bipartite graph.
\end{theorem}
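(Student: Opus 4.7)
The plan is to follow the polymer-model / cluster-expansion template developed for Theorems~\ref{thmhard-core}--\ref{thm:pottspartition} and adapt it to proper colorings on bipartite graphs. For a bipartite $G = (\cO, \cE, E)$ with strong vertex-expansion, the dominant contributions to $Z_G(q)$ come from colorings that largely respect a partition of the palette between the two sides: for each ordered pair $(A, B)$ with $A \sqcup B = [q]$ and $A, B$ non-empty, the ``ground-state'' colorings put $\cO$-vertices in $A$ and $\cE$-vertices in $B$, contributing $|A|^{|\cO|}|B|^{|\cE|}$ in total. Defects --- $\cO$-vertices using a color from $B$, or $\cE$-vertices using a color from $A$ --- organise into polymers: connected components in the squared bipartite graph, together with their color labels. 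Define a restricted partition function $Z^{A,B}_G$ summing over proper colorings whose defect polymers all have size at most $|V(G)|/2$, and then show
\[
Z_G(q) = (1+o(1)) \sum_{(A,B)} Z^{A,B}_G
\]
by ruling out macroscopic defect clusters using bipartite expansion (a phase-coexistence argument analogous to Lemma~\ref{lem:starapprox}), with a small inclusion--exclusion correction for any overlap between phases.

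First I would establish the expansion properties of $\mathbf G^{\text{bip}}_{n,\Delta}$ that the analysis needs: for almost every such graph we get strong vertex-expansion on both sides at multiple scales (a near-$\Delta$ factor on very small sets, together with a constant $\alpha$ close to $1$ on sets up to half of each side). These follow from standard configuration-model first-moment / union-bound arguments, in the same spirit as Lemma~\ref{lemsmallrandom}.

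Next I would verify the Kotecký--Preiss condition for each polymer model. For a polymer $\gamma$ with $k$ vertices, the enumeration costs at most $(e\Delta^2)^k$ (connected subgraphs in the 2-step graph from a fixed root) times $q^k$ (color labels for defects). The weight incorporates constraints from the $\Omega(\Delta k)$ boundary edges, each reducing the boundary vertex's palette by one color and contributing a factor of order $(1 - 1/q)$. The net weight per vertex is thus at most $\exp\bigl(O(\log(q\Delta)) - \Omega(\Delta/q)\bigr)$; convergence of the cluster expansion requires this to be small, and the hypothesis $\Delta \ge Cq^2\log^2 q$ provides the margin. The extra factor of $q$ beyond the naive $\Delta \gtrsim q\log q$ estimate arises from the need to control, simultaneously, polymers with many defects clustered on a single side (whose boundary-to-volume ratio is the worst), the $q^k$ enumeration of defect colors, and the truncation/sampling slack needed to turn the asymptotic cluster expansion into an actual FPTAS accurate to within arbitrary $\eps$.

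Once convergence is established, the FPTAS follows by truncating each cluster expansion at depth $O(\log(n/\eps))$ as in Helmuth--Perkins--Regts and summing over the $2^q = O(1)$ partitions; sampling then proceeds via the standard polymer-model sampling-to-counting reduction. The main obstacle I anticipate is verifying the Kotecký--Preiss condition tightly enough to obtain the stated threshold $\Delta \ge Cq^2\log^2 q$: this requires a careful accounting of polymer weights in the coloring setting --- where the $q^k$ enumeration of defect colors, the sharing of boundary edges between nearby polymers, and the joint interaction of defects on the two sides of the bipartition must all be handled at once --- and is, I expect, the crux of the proof.
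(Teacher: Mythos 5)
Your proposal follows essentially the same route as the paper: patterns $(A,B)$ with $A\sqcup B=[q]$ as ground states, defect sets organized into polymers, a phase-coexistence argument in the spirit of Lemma~\ref{lem:starapprox} with an inclusion--exclusion correction for overlaps, and a Kotecký--Preiss verification that leans on the strong vertex-expansion of small sets in $\mathbf G^{\text{bip}}_{n,\Delta}$. Two surface-level differences worth noting: the paper does not carry color labels on polymers (a polymer $\gamma$ is a bare set, and $w_{A,B}(\gamma)$ aggregates over all colorings of $\gamma^+$ that disagree with $(A,B)$ on $\gamma$ and agree on $\partial\gamma$), and it takes $G^3$-connected rather than $G^2$-connected sets as polymers, precisely so that $\gamma$ is characterized by $G[\gamma^+]$ being connected, which is convenient for the subgraph-enumeration step.

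There is one internal inconsistency that would become a genuine gap if you followed it through. You cap defect polymers at size $|V(G)|/2$, but in the Kotecký--Preiss bound you charge each polymer $\gamma$ a boundary of $\Omega(\Delta|\gamma|)$. Those two choices are incompatible: $\mathbf G^{\text{bip}}_{n,\Delta}$ only supplies near-$\Delta$ vertex expansion for sets of size $O(m\log\Delta/\Delta)$ on each side (Lemma~\ref{lem:bassexp}), while for sets of size $\Theta(m)$ the expansion factor is merely a constant, so the weight bound $w(\gamma)\le e^{-\Omega(\Delta/q^2)\,|\gamma|}$ fails for large $\gamma$ and the cluster expansion does not converge with your cap. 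The paper resolves this by defining polymers to be \emph{little}, of size at most $4q\tfrac{\log\Delta}{\Delta}m$, so that every polymer (after splitting into at most $q$ slices of size at most $4\tfrac{\log\Delta}{\Delta}m$, one per color used) enjoys the full near-$\Delta$ expansion; the price is that the phase-coexistence/sparseness estimate must then be redone at this much smaller scale (Lemma~\ref{lemcolapprox} and Claim~\ref{claimsparse}), which is where the careful counting of sparse-but-not-little sets and the inclusion--exclusion over patterns actually takes place. With that correction made, your plan lines up with the paper's proof and the threshold $\Delta\ge Cq^2\log^2 q$ falls out for the reasons you anticipated.
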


Adapting techniques from~\cite{helmuth2018contours,JenssenAlgorithmsSODA}, Liao, Lin, Lu, and Mao~\cite{liao2019counting} have recently and independently proved similar results to Theorems~\ref{thmHCrandombip} and~\ref{thmColor} with slightly stronger conditions needed on $\lam$ and $\Delta$ respectively to obtain efficient algorithms.  

\subsection{Discussion}

We take $\Delta \ge 3$ in all of our theorems since computing the relevant partition functions exactly on paths and cycles takes linear time.  

We also note that for the $q=2$ case of the Potts mode (the Ising model), efficient algorithms are known for \textit{all} graphs and all temperatures: the approximate counting algorithm of Jerrum and Sinclair~\cite{jerrum1993polynomial}, turned into a sampling algorithm via self-reducibility by Randall and Wilson~\cite{randall1999sampling}; see also the recent proof of Guo and Jerrum~\cite{guo2018} showing polynomial-time mixing of the random-cluster dynamics.

The approximate counting and sampling problems for the hard-core and Potts models on random graphs have received considerable attention, with positive algorithmic results in the low-fugacity, high-temperature uniqueness regimes of the infinite $\Delta$-regular tree, and some negative algorithmic results, in the form of torpid mixing of certain Markov chains, in the high-fugacity and low-temperature regimes.  Our results are novel in providing positive algorithmic results in the high-fugacity and low-temperature regimes.

In the low fugacity regime with $\lam< \lam_c(\Delta)$, Weitz's algorithm applies  to  $\mathbf G^{\text{bip}}_{n,\Delta}$. Efthymiou, Hayes, {\v{S}}tefankovic, Vigoda, and Yin~\cite{efthymiou2016convergence} have also shown that the Glauber dynamics have mixing time $O( n \log n)$ on $\mathbf G^{\text{bip}}_{n,\Delta}$ for $\lam < (1-\eps(\Delta)) \lam_c(\Delta)$ for some $\eps (\Delta) \to 0$ as $\Delta \to \infty$.   For $\lam > \lam_c(\Delta)$, the Glauber dynamics for the hard-core model on  $\mathbf G^{\text{bip}}_{n,\Delta}$ are known to mix slowly~\cite{mossel2009hardness}, and perhaps Theorem~\ref{thmHCrandombip} can be improved to work for all $\lam > \lam_c(\Delta)$, though this would likely require new ideas.

For the Potts model, a natural conjecture for the optimal bound on $\beta$ for the particular polymer-based algorithm we use here is the order/disorder transition point $\beta_o(q,\Delta) = \log \frac{q-2}{(q-1)^{1-2/\Delta}-1}$. Galanis, {\v{S}}tefankovi{\v{c}}, Vigoda, and Yang~\cite{galanis2016ferromagnetic} have shown that the Swendsen-Wang dynamics mix slowly at $\beta_o(q,\Delta)$ on the random $\Delta$-regular graph (for $q$ large enough).  In fact their analysis shows that the approximation lemmas we use below fail for $\mathbf G_{n,\Delta}$ and $\beta\le \beta_o(q,\Delta)$. The bound we obtain in Corollary~\ref{corPottsrandom} is at worst a factor of order $\log \Delta$ away from this natural barrier and matches up to a constant factor when $q$ and $\Delta$ are polynomially related.  

In the high-temperature regime, Blanca, Galanis, Goldberg, {\v{S}}tefankovic, Vigoda, and Yang~\cite{blanca2018sampling} have recently given an efficient algorithm to obtain an $n^{-c}$-approximate sample (that is, a sample within total variation distance $n^{-c}$ for $n$-vertex graphs for some constant $c>0$) from the Potts model (ferromagnetic and anti-ferromagnetic) on $\mathbf G_{n,\Delta}$ when the parameters lie the uniqueness regime for the infinite  $\Delta$-regular tree.

While efficient counting and sampling algorithms for these problems on random regular graphs were previously only known for the uniqueness regime, the probabilistic properties of these models are well understood at all fugacities and temperatures.  Sly and Sun~\cite{sly2014counting} showed that the limiting free energy (the normalized log partition function) of \textit{any} sequence of locally tree-like bipartite graphs converges to the replica symmetric solution predicted by the cavity method from statistical physics.  This result applies in particular to the hard-core model on random bipartite $\Delta$-regular graphs. Dembo, Montanari, Sly, and Sun~\cite{dembo2014replica} then showed that the limiting free energy of the ferromagnetic Potts model on a sequence of graphs converging locally to the infinite $\Delta$-regular tree is given by the replica symmetric solution from the cavity method.

For other recent work on algorithms for the Potts model and low temperature and the hard-core model at high fugacity, see~\cite{barvinok2017weighted, PolymerMarkov, cannon2019counting}.  For further algorithmic applications of abstract polymer models and the cluster expansion see~\cite{casel2019zeros}.

\subsection{Proof ideas}

Our  main technical contribution is to show that the hard-core, Potts and coloring models are well approximated by mixtures of \textit{polymer models} with convergent cluster expansions in the relevant range of parameters.  These polymer models each represent deviations from  one of a collection of \textit{ground states}. For example, in the case of the Potts model, the ground states are the monochromatic configurations, while in the case of the hard-core model the ground states are the two collections of independent sets with no occupied even or odd vertices respectively.

The main steps in the proofs  of Theorems~\ref{thmhard-core} and~\ref{thm:pottspartition} are as follows.  
\begin{enumerate}
\item First we show that the partition function of the relevant model on (bipartite) expanders is dominated by configurations that are `close' to one of the ground states. 
\item For each ground state we define a polymer model representing deviations from the given state. We show that $\alpha$-expansion implies a strong upper bound on the polymer weights, which allows us to verify the Koteck\'y-Preiss condition for the convergence of the cluster expansion~\cite{kotecky1986cluster}. 
\item This last step allows us to implement a version of the approximate counting algorithm from~\cite{helmuth2018contours}, based on truncating the cluster expansion.  This algorithm is inspired by Barvinok's method of truncating the Taylor series of the log partition function, but here we can work directly with the cluster expansion and avoid any use of complex analysis. 
\item The sampling algorithm is based on a form of self-reducibility for abstract polymer models.
\end{enumerate}

In Section~\ref{secPolymermodels} we define polymer models and the cluster expansion, we state the Koteck\'y-Preiss condition,  and describe the counting algorithm.  We prove our results for the Potts model in Section~\ref{secPottsproof},  for the hard-core model in Section~\ref{secHCproof}, and for proper colorings in Section~\ref{secColorProof}.  We conclude with some discussion and open problems in Section~\ref{secconclude}.

\section{Polymer models} 
\label{secPolymermodels}

\subsection{Abstract polymer models}
We define polymer models in sufficient generality for the purposes of this paper.  A more general treatment can be found, for example, in~\cite{gruber1971general,kotecky1986cluster}.

A \textit{polymer} $\gamma$ is a connected subgraph of $G$. A polymer model consists of a set of allowed polymers $\cC(G)$ along with a complex-valued weight function $w_\gamma$ for each polymer.  We measure the size of a polymer by $| \gamma|$,  the number of vertices of $ \gamma$.

We say two polymers $\gamma, \gamma'$ are \textit{compatible} if $d( \gamma,  \gamma') >1$ and \textit{incompatible} otherwise, where $d(\cdot, \cdot)$ is the graph distance.  
Let $\cG(G)$ be the collection of all finite subsets (including the empty set) of $\cC(G)$ consisting of mutually compatible polymers.

We can then define the polymer model partition function
\begin{align}
\Xi(G):=\sum_{\Gamma\in\cG(G)}\prod_{\gamma\in\Gamma} w_{\gamma}\, .
\end{align}

The prototypical example of a polymer model is the low-fugacity hard-core model on a graph $G$: the set of polymers $\cC(G)$ is simply the set of vertices $V(G)$.  The collection of sets of mutually compatible polymers $\cG(G)$ is exactly $\cI(G)$, the collection of independent sets of $G$.  If we set the weight function of every polymer to be $w_{\gamma} = \lam$, then the abstract polymer partition  function $\Xi(G)$ is exactly the hard-core partition function $Z_G(\lam)$.       

\subsection{Convergent cluster expansions}

A detailed probabilistic understanding of a polymer model can be obtained by showing that the \textit{cluster expansion} of its log partition function converges.

For a multiset of polymers $\Gamma$, the \textit{incompatibility graph} $H(\Gamma)$ has one vertex for each polymer in the multiset (with multiplicity) with an edge between two vertices corresponding to polymers $\gamma, \gamma'$ that are incompatible.  A \textit{cluster} is an ordered list of polymers from $\cC(G)$ (with repetitions allowed) whose incompatibility graph is connected. The size of a cluster is $|\Gamma| = \sum_{\gamma \in \Gamma} |\gamma|$.   Let $\cG^{\text{clust}}(G) $ be the collection of all clusters.   
The cluster expansion is then the (formal) power series in the variables $w_\gamma$, $\gamma \in \cC(G)$,
\begin{align}
\label{eqClusterExpansion}
\log \Xi(G) &=  \sum_{\Gamma \in \cG^{\text{clust}}(G)} \phi( \Gamma) \prod_{\gamma \in \Gamma} w_{\gamma} \,,
\end{align}
where $\phi(\Gam)$ is the \textit{Ursell function} of the incompatibility graph $H=H(\Gamma)$ defined by
\begin{align*}
\phi(H) &=  \frac{1}{|V(H)|!}\sum_{\substack{ A\subseteq E(H) \\ \text{spanning, connected}}} (-1)^{|A|} \,.
\end{align*}
(The Ursell function $\phi(H)$ is an evaluation of the Tutte polynomial of $H$). In fact the cluster expansion is simply the multivariate Taylor series for $\log \Xi(G)$ in the variables $w_\gamma$, as observed by Dobrushin~\cite{dobrushin1996estimates}.  See also Scott and Sokal~\cite{scott2005repulsive} for a derivation of the cluster expansion and much more. 

A sufficient condition for the convergence of the cluster expansion is given by the following specialization of a result of Koteck\'y and Preiss.   

\begin{theorem}[\cite{kotecky1986cluster}]\label{thmKPsimple}
Fix a function $g : \cC(G) \to [0, \infty)$, and extend $g$ to clusters by defining
\begin{align*}
g(\Gamma) &= \sum_{\gamma \in \Gamma} g(\gamma) \,.
\end{align*}
 Suppose that for all $\gamma \in \cC(G)$
\begin{equation}\label{eqKPsimple}
\sum_{\gamma':d(\gamma', \gamma)\leq1} |w_{\gamma'}| e^{|  \gamma'| + g(\gamma')} \le |  \gamma | \,.
\end{equation}
Then the cluster expansion converges absolutely and, moreover, for every $v \in V(G)$,
\begin{align}
\label{eqpinnedEstimate}
\sum_{\substack{\Gamma \in \cG^{\text{clust}}(G)\\ \Gamma \ni v}} \left | \phi(\Gamma) \prod_{\gamma \in \Gamma} w_\gamma   \right| e^{g(\Gamma)} \le 1 \,,
\end{align}
where we write $v \in \Gamma$ if there exists $\gamma \in \Gamma$ so that $v \in  \gamma$. 
\end{theorem}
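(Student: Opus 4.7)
\smallskip

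The plan is to follow the standard route to the Kotecký--Preiss criterion, combining the Penrose tree-graph inequality for the Ursell function with an inductive bound on sums over clusters rooted at a fixed polymer.

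The first step is to obtain a non-negative tree bound for $|\phi(H)|$. The alternating sum over spanning connected subgraphs $A$ of $H$ admits a cancellation scheme (Penrose's identity) that reduces it to a non-negative sum, yielding
\[
|\phi(H)| \le \frac{1}{|V(H)|!}\,\tau(H),
\]
where $\tau(H)$ is the number of spanning trees of $H$. This converts the cluster sum into a sum over pairs (cluster $\Gamma$, spanning tree of $H(\Gamma)$), i.e.\ over trees of mutually incompatible polymers.

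The second step is to fix a polymer $\gamma^*$ and bound
\[
T(\gamma^*) \;=\; \sum_{\Gamma \in \cG^{\text{clust}}(G),\; \gamma^* \in \Gamma} \left| \phi(\Gamma) \prod_{\gamma \in \Gamma} w_\gamma \right| e^{g(\Gamma)}.
\]
I would prove by induction on the truncation size $k$ that $T_k(\gamma^*) \le |w_{\gamma^*}|\, e^{|\gamma^*| + g(\gamma^*)}$, where $T_k$ restricts to clusters with $|\Gamma| \le k$. After applying the tree bound and rooting each tree at $\gamma^*$, expanding at the root introduces a sum over children $\gamma'$ incompatible with $\gamma^*$, and each subtree is controlled by the inductive hypothesis. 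The symmetry factor $1/|V(H)|!$ organises these children into an exponential, so the crucial estimate reduces to controlling $\sum_{\gamma' \,:\, d(\gamma',\gamma^*)\le 1} |w_{\gamma'}|\,e^{|\gamma'|+g(\gamma')}$, which by the hypothesis \eqref{eqKPsimple} is at most $|\gamma^*|$. The $e^{|\gamma^*|}$ in the target bound is exactly what absorbs this factor through the identity $\sum_{j\ge 0} |\gamma^*|^j/j! = e^{|\gamma^*|}$, closing the induction and giving absolute convergence of \eqref{eqClusterExpansion} on passing $k\to\infty$.

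The third step is to deduce the pinned-at-vertex bound \eqref{eqpinnedEstimate}. Every cluster $\Gamma$ with $v \in \Gamma$ contains some polymer $\gamma^* \ni v$, so
\[
\sum_{\Gamma \ni v} \left|\phi(\Gamma)\prod_{\gamma\in\Gamma} w_\gamma\right| e^{g(\Gamma)} \;\le\; \sum_{\gamma^* \ni v} T(\gamma^*) \;\le\; \sum_{\gamma^* \ni v} |w_{\gamma^*}|\, e^{|\gamma^*|+g(\gamma^*)}.
\]
Bounding the final sum by $1$ uses the KP condition in a degenerate form, e.g.\ by testing against a single-vertex polymer at $v$ (or, equivalently, normalising so that the right-hand side of \eqref{eqKPsimple} becomes $1$ for the trivial test object), since $\{\gamma^* \ni v\} \subseteq \{\gamma^* : d(\gamma^*,v) \le 1\}$.

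The main obstacle is the bookkeeping around ordered versus unordered clusters and the symmetry factors that arise when polymers repeat inside a cluster, since it is essential that the $1/|V(H)|!$ factor in $\phi(H)$ precisely matches the combinatorics of the rooted-tree expansion. A cleaner alternative, which I would adopt if the direct induction becomes cumbersome, is to introduce the formal generating function
\[
F(\gamma^*) \;=\; |w_{\gamma^*}|\, e^{g(\gamma^*)} \exp\Bigl( \sum_{\gamma' \,:\, d(\gamma',\gamma^*)\le 1} F(\gamma') \Bigr),
\]
observe that the Penrose tree bound gives $T(\gamma^*) \le F(\gamma^*)$, and verify that the ansatz $F(\gamma^*) \le |w_{\gamma^*}|\, e^{|\gamma^*|+g(\gamma^*)}$ is a fixed point of this recursion precisely under the hypothesis \eqref{eqKPsimple}.
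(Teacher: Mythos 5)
The paper cites this theorem from Kotecký--Preiss \cite{kotecky1986cluster} and gives no proof of it, so there is no in-paper argument to compare against; what I can assess is whether your reconstruction is a correct proof of the stated claim. It is, and it follows the standard route: Penrose's tree-graph inequality $|\phi(H)|\le \tau(H)/|V(H)|!$ (with $\tau(H)$ the number of spanning trees), followed by induction on truncated cluster size to show $T(\gamma^*)\le |w_{\gamma^*}|\,e^{|\gamma^*|+g(\gamma^*)}$, the inductive step closing via $\sum_{j\ge 0}x^j/j!=e^{x}$ with $x=\sum_{\gamma':d(\gamma',\gamma^*)\le 1}|w_{\gamma'}|e^{|\gamma'|+g(\gamma')}\le |\gamma^*|$. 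Your generating-function alternative is the Fern\'andez--Procacci packaging of the same induction and works equally well.

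The one step that deserves scrutiny is the third: hypothesis \eqref{eqKPsimple} is quantified over polymers $\gamma\in\cC(G)$, yet the conclusion \eqref{eqpinnedEstimate} is pinned at a vertex. To deduce $\sum_{\gamma^*\ni v}|w_{\gamma^*}|e^{|\gamma^*|+g(\gamma^*)}\le 1$ you test \eqref{eqKPsimple} against the singleton $\{v\}$, which is legitimate only when $\{v\}\in\cC(G)$; if singletons were excluded, the argument would only give $\le|\gamma|$ for an arbitrary polymer $\gamma\ni v$, which is weaker than $1$. This is not a flaw in your argument but a (well-spotted) looseness in how the theorem is abstracted from the original: in every polymer model the paper actually deploys (Potts, hard-core, colourings), singletons are polymers, and moreover the paper always verifies the strictly stronger per-vertex inequality $\sum_{\gamma'\ni v}|w_{\gamma'}|e^{|\gamma'|+g(\gamma')}\le$ a small constant and sums up to obtain the per-polymer condition \eqref{eqKPsimple}. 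So the ``degenerate'' singleton test you invoke is available in all intended applications, and your flagging of this bookkeeping point, alongside the ordered-vs-unordered cluster factorials being absorbed by the $1/|V(H)|!$ in $\phi$, is exactly where the care is needed.
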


We remark that the theorem appearing in 
\cite{kotecky1986cluster} is more general.
For instance it allows a general choice of function 
$f : \cC(G) \to [0, \infty)$ in lieu of the function $|\cdot|$ appearing in \eqref{eqKPsimple}.
A more careful choice of the function $f$ might lead to better dependencies between the parameters in our results, although as far as we can tell only constant factor improvements would be gained.

\subsection{Algorithms}

We can use Theorem~\ref{thmKPsimple} to approximate the partition function.  For a given function $g : \cC(G) \to [0, \infty)$ for which~\eqref{eqKPsimple} holds, define the truncated cluster expansion
\begin{align*}
T_m(G) &= \sum_{\substack{\Gamma \in \cG^{\text{clust}}(G)\\ g(\Gamma)< m}} \phi(\Gamma) \prod_{\gamma \in \Gamma} w_{\gamma} \,.
\end{align*}
We refer to $g$ as a \emph{decay function} for the polymer model.
Under the Kotecky-Preiss condition, truncating the cluster expansion with large enough $m$ gives a good approximation to the log partition function.  
\begin{fact}
\label{factKP}
If condition~\eqref{eqKPsimple} holds and $m \ge \log (|V(G)|/\eps)$,  then
\begin{align*}
\left |\log \Xi(G) -  T_m(G) \right| \le \eps \,.
\end{align*}
\end{fact}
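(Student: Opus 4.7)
The plan is to bound the error $|\log\Xi(G) - T_m(G)|$ directly by the tail of the cluster expansion and then apply the pinned estimate~\eqref{eqpinnedEstimate} summed over all vertices of $G$.

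First, I would note that under the Kotecky-Preiss condition the cluster expansion converges absolutely, so we may legitimately rearrange terms and write
\begin{equation*}
\bigl|\log \Xi(G) - T_m(G)\bigr| \;=\; \biggl|\sum_{\substack{\Gamma \in \cG^{\text{clust}}(G) \\ g(\Gamma) \ge m}} \phi(\Gamma)\prod_{\gamma \in \Gamma} w_\gamma\biggr| \;\le\; \sum_{\substack{\Gamma \in \cG^{\text{clust}}(G) \\ g(\Gamma) \ge m}} \Bigl|\phi(\Gamma)\prod_{\gamma \in \Gamma} w_\gamma\Bigr|.
\end{equation*}
The standard tail trick is to insert the factor $e^{g(\Gamma)}e^{-g(\Gamma)}=1$ and use $e^{-g(\Gamma)} \le e^{-m}$ on the truncation region, giving the bound
\begin{equation*}
\bigl|\log \Xi(G) - T_m(G)\bigr| \;\le\; e^{-m}\sum_{\Gamma \in \cG^{\text{clust}}(G)} \Bigl|\phi(\Gamma)\prod_{\gamma \in \Gamma} w_\gamma\Bigr| e^{g(\Gamma)}.
\end{equation*}

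Next, I would control the remaining sum using~\eqref{eqpinnedEstimate}. Every nonempty cluster $\Gamma$ contains at least one vertex of $G$, so
\begin{equation*}
\sum_{\Gamma \in \cG^{\text{clust}}(G)} \Bigl|\phi(\Gamma)\prod_{\gamma \in \Gamma} w_\gamma\Bigr| e^{g(\Gamma)} \;\le\; \sum_{v \in V(G)} \sum_{\substack{\Gamma \in \cG^{\text{clust}}(G) \\ \Gamma \ni v}} \Bigl|\phi(\Gamma)\prod_{\gamma \in \Gamma} w_\gamma\Bigr| e^{g(\Gamma)} \;\le\; |V(G)|,
\end{equation*}
where the final inequality applies the estimate~\eqref{eqpinnedEstimate} at each vertex. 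Combining this with the preceding display yields $|\log\Xi(G)-T_m(G)| \le e^{-m}|V(G)|$, and the choice $m \ge \log(|V(G)|/\eps)$ makes the right-hand side at most $\eps$.

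Essentially no step is a real obstacle here; this is a short consequence of the pinned estimate, and the only thing to be mildly careful about is justifying the rearrangement of the series (absolute convergence from the Kotecky-Preiss hypothesis) and the overcounting observation that each cluster is counted at least once in $\sum_v \sum_{\Gamma \ni v}$. Both are routine.
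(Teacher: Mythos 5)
Your proof is correct and matches the paper's intended argument: the paper simply remarks that Fact~\ref{factKP} follows by summing~\eqref{eqpinnedEstimate} over all $v \in V(G)$, and you have filled in exactly the expected details (extract the $e^{-m}$ factor from the tail via $e^{-g(\Gamma)} \le e^{-m}$, then bound the sum over all clusters by $\sum_v \sum_{\Gamma \ni v}$).
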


This fact follows from summing \eqref{eqpinnedEstimate} over all $v \in V(G)$.

Under  mild conditions on the polymer model, and under the non-trivial condition of zero-freeness of the partition function in a disc in the complex plane, Helmuth, Perkins, and Regts gave an efficient approximation algorithm for the partition function \cite[Theorem 6]{helmuth2018contours}.  The following theorem is an adaptation of that theorem.
\begin{theorem}
\label{thmPolymerCount}
Fix $\Delta$ and let $\mathfrak G$ be  some class of graphs of maximum degree at most $\Delta$.
Suppose the following hold for a given polymer model with decay function $g ( \cdot)$:
\begin{enumerate}
\item[(i)] There exists constants  $c_1,c_2>0$ such that given a connected subgraph $ \gamma$, determining  whether $\gamma \in \cC(G)$ and then computing $w_{\gamma}$ and $g(\gamma)$ can be done in time $O\left(|\gamma|^{c_1} e^{c_2|  \gamma|} \right)$.  \label{item:surface}
\item[(ii)] There exists $\rho(\Delta)=\rho>0$ so that for every $G \in \mathfrak G$ and every $\gamma \in \cC(G)$, $g(\gamma) \ge \rho | \gamma|$.  \label{item:Peierl}
\item[(iii)] The Koteck\'y-Preiss condition~\eqref{eqKPsimple} holds with the given function $g( \cdot)$.  \label{item:zerofree}
\end{enumerate}
Then there is an FPTAS for $\Xi(G)$ for $G \in \mathfrak G$.  The running time of this algorithm is $n \cdot (n/\eps)^{ O( (\log \Delta+c_2)/ \rho) }  $. 
\end{theorem}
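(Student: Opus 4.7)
The plan is to approximate $\log \Xi(G)$ by the truncated cluster expansion $T_m(G) = \sum_{\Gamma : g(\Gamma) < m} \phi(\Gamma) \prod_{\gamma \in \Gamma} w_\gamma$ with $m = \log(|V(G)|/\eps)$, and output $\exp(T_m(G))$ as the estimate of $\Xi(G)$. Assumption (iii) supplies the Koteck\'y--Preiss condition, so Fact~\ref{factKP} gives $|\log \Xi(G) - T_m(G)| \le \eps$ and hence $\exp(T_m(G))$ is an $\eps$-relative approximation. The algorithmic task thus reduces to evaluating $T_m(G)$ exactly (up to negligible numerical precision).

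To enumerate the contributing clusters, I would first observe that condition (ii) forces every such cluster to have total size $|\Gamma| := \sum_{\gamma \in \Gamma} |\gamma| < K := m/\rho = \log(|V(G)|/\eps)/\rho$, and in particular to contain at most $K$ polymers. Since each polymer is a connected subgraph of $G$ and any two incompatible polymers lie at graph-distance at most $1$, the support $\bigcup_{\gamma \in \Gamma} V(\gamma)$ of a cluster is a connected vertex set in $G$ of size at most $K$. The algorithm proceeds in three stages: (a) enumerate every connected subgraph of $G$ of size at most $K$ anchored at each vertex, of which there are at most $n(e\Delta)^K$ by the standard bounded-degree counting bound; (b) for each such support $S$, enumerate all ordered tuples of polymers contained in $S$ that form a cluster, using condition (i) to test whether $\gamma \in \cC(G)$ and to compute $w_\gamma$ and $g(\gamma)$ in time $O(|\gamma|^{c_1} e^{c_2|\gamma|})$ per polymer; (c) evaluate the Ursell function $\phi(\Gamma)$ for each candidate cluster (whose incompatibility graph has at most $K$ vertices) in time $2^{O(K)}$ by a standard subset-sum dynamic program. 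Anchoring each cluster at the lexicographically smallest vertex of its support prevents double counting. Combining these three stages yields total running time $n \cdot (e\Delta)^K \cdot e^{O((1+c_2)K)} = n \cdot (n/\eps)^{O((\log \Delta + c_2)/\rho)}$, matching the claim.

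The main obstacle is stage (b): clusters are ordered tuples of polymers with repetitions, so a naive enumeration overcounts, and one must verify connectivity of the incompatibility graph for every tuple produced. The cleanest approach I know is to enumerate unordered multisets of polymers with combined size at most $K$ whose combined vertex set equals $S$, check connectivity of the incompatibility graph directly, and assemble $\phi(\Gamma)$ from its defining sum over spanning connected edge subsets. This strategy was already used in \cite{helmuth2018contours}; since both the number of multisets supported on $S$ and the Ursell computation are bounded by $e^{O(K)}$, the overall running-time bound is unaffected. The remaining steps — the bound $(e\Delta)^K$ on connected subgraphs containing a fixed vertex, the decay-driven truncation, and the passage from a $\log \Xi$ approximation to a relative $\Xi$ approximation — are routine.
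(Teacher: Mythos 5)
Your proposal follows essentially the same strategy as the paper: truncate the cluster expansion at scale $m=\log(n/\eps)$, invoke Fact~\ref{factKP} via condition~(iii) for the $\eps$-approximation, use condition~(ii) to restrict to clusters of total size $<K=m/\rho$, enumerate those clusters, evaluate Ursell functions, and sum. The paper delegates the enumeration step to \cite[Theorem 6]{helmuth2018contours} (and through it to \cite{patel2019computing} and \cite{bjorklund2008computing}), whereas you sketch it inline as a support-then-multisets scheme. Both arrive at the stated $n\cdot(n/\eps)^{O((\log\Delta+c_2)/\rho)}$ bound.

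One caveat: your stage (b) asserts without argument that the number of multisets of polymers of combined size at most $K$ supported on a fixed connected set $S$ of size at most $K$ is $e^{O(K)}$. This is the crux of the enumeration bound and is not self-evident — naively, multisets of up to $K$ elements drawn from up to $2^{|S|}$ connected subgraphs could in principle be much larger — and the standard argument (as in \cite{helmuth2018contours}) instead enumerates clusters directly, polymer-by-polymer along a spanning tree of the incompatibility graph, using the degree bound to control the branching. The constraint that the multiset actually form a \emph{cluster} (connected incompatibility graph) is needed to obtain the $\Delta^{O(K)}$ count; enumerating all multisets on $S$ first and filtering afterward is not obviously within the time budget. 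Since the paper simply cites \cite{helmuth2018contours} here, your delegating to the same reference is fine in spirit, but if you intend your two-stage enumeration to stand on its own, the $e^{O(K)}$ multiset bound needs a proof (or you should replace it with the direct cluster-enumeration and its standard encoding argument).
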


The algorithm is as follows.  Given $\eps>0$, and a graph $G$ on $n$ vertices,  let $m = \log(n/\eps)$ and:
\begin{enumerate}
\item Enumerate all clusters $\Gamma \in  \cG^{\text{clust}}(G)$ with $|\Gamma| < m/\rho$; call the list of such clusters $\cL$.
\item For each cluster $\Gamma \in \cL$ compute $\phi(\Gamma)$, $\prod_{\gamma \in \Gamma} w_{\gamma}$, and $g(\Gamma)$.
\item Compute $T_m(G)$ by summing:
\begin{align*}
T_m(G) &= \sum_{\substack{ \Gamma \in \cL \\ g(\Gamma) < m   } } \phi(\Gamma) \prod_{\gamma \in \Gamma} w_\gamma \,.
\end{align*}

\item Output $\exp(T_m(G))$. 
\end{enumerate}

\begin{proof}[Proof of Theorem~\ref{thmPolymerCount}]
Fact~\ref{factKP} along with condition (iii) of the theorem tells us that $\exp(T_m(G))$ is an $\eps$-relative approximation to $\Xi(G)$ and so we just need to show that steps 1--3 can be done in time $n \cdot (n/\eps)^{ O(( \log \Delta+c_2)/ \rho) }   $. 

By condition (ii) of the theorem, we need to list clusters $\Gamma$ with $| \Gamma | < m/\rho$, compute the Ursell function for each cluster, and compute the product of polymer weights in each cluster.  Using the algorithm of~\cite[Theorem 6]{helmuth2018contours}, we can list all clusters of size at most $m/\rho$ in time $n \Delta^{O(m/\rho)}$ and compute their associated Ursell functions (this result relies of the algorithms of~\cite{patel2019computing} for enumerating connected subgraphs in a bounded-degree graph and~\cite{bjorklund2008computing} for evaluating the Tutte polynomial, and hence the Ursell function, of a graph in time exponential in its number of vertices).   Computing $\prod w_{\gamma}$  for a single cluster can be done in time $(m/\rho)^{c_1} e^{O(c_2 m/\rho)} $ using condition (i).  Putting this together yields an algorithm with running time $n \cdot (n/\eps)^{ O( (\log \Delta+c_2)/ \rho) }  $.
\end{proof}

We can also sample efficiently from a polymer model satisfying condition~\eqref{eqKPsimple}.  We define a probability measure  $\nu_{G}$ on $\cG(G)$:
\begin{align}
\nu_{G} (\Gamma) &=  \frac{ \prod_{\gamma\in\Gamma}w_\gamma  }{ \Xi(G)    },   \quad \Gamma \in \cG(G) \,.
\end{align}

Following~\cite[Theorem 10]{helmuth2018contours}, the approximate counting algorithm also provides a polynomial-time sampling algorithm for $\nu_G$. 
\begin{theorem}[\cite{helmuth2018contours}, Theorem 10]
\label{thmPolymerSample}
Under the conditions of Theorem~\ref{thmPolymerCount}, there is a polynomial-time sampling algorithm for $\nu_{G}$ for all $G \in \mathfrak G$. 
\end{theorem}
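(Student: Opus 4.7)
My plan is to adapt the standard self-reducibility argument to the polymer setting, using the FPTAS of Theorem~\ref{thmPolymerCount} as a subroutine to compute marginal probabilities. Fix an arbitrary vertex ordering $v_1, \dots, v_n$ of $G$ and process vertices in order, maintaining a partial configuration consisting of a set $\bar\Gamma$ of polymers already committed to $\Gamma$ and a set of vertices already declared to be uncovered by $\Gamma$. At step $i$, if $v_i$ has not yet been touched by any polymer in $\bar\Gamma$, sample the (unique) polymer of $\Gamma$ containing $v_i$, or else declare $v_i$ uncovered, from the correct conditional distribution given all previous decisions.

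These conditional probabilities can be written as ratios of partition functions of restricted polymer models: the conditional probability that no polymer of $\Gamma$ contains $v_i$ is $\Xi(G,\cC')/\Xi(G,\cC)$, where $\cC$ is the current allowed polymer set and $\cC'=\cC\setminus\{\gamma:v_i\in\gamma\}$; similarly, the conditional probability of a specific polymer $\gamma\ni v_i$ is $w_\gamma\,\Xi(G,\cC^{\perp\gamma})/\Xi(G,\cC)$ where $\cC^{\perp\gamma}$ is the subset of $\cC$ compatible with $\gamma$. Each restricted model still satisfies hypotheses (i)--(iii) of Theorem~\ref{thmPolymerCount} (the Koteck\'y--Preiss sum~\eqref{eqKPsimple} and decay function $g$ only inherit from the original model upon restricting $\cC$), so each ratio can be approximated to multiplicative error $e^{\pm\eta}$ in polynomial time by running the FPTAS on numerator and denominator.

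Enumerating all polymers $\gamma\ni v_i$ is infeasible in general, since polymers may have arbitrary size; however, applying the pinned estimate~\eqref{eqpinnedEstimate} to single-polymer clusters together with $g(\gamma)\ge\rho|\gamma|$ gives that the probability that $v_i$ lies in any polymer of size larger than $L$ is at most $e^{-\rho L}$. Taking $L=\rho^{-1}\log(n/\eps)$ makes this at most $\eps/n$, so it suffices to enumerate polymers of size at most $L$ containing $v_i$, which can be done in polynomial time by~\cite{patel2019computing} since $G$ has bounded degree. Choosing $\eta=\eps/n$ in the marginal approximation and summing the per-vertex TV errors over the $n$ steps yields total TV distance $O(\eps)$ from $\nu_G$, at cost polynomial in $n$ and $1/\eps$.

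The main technical point to verify is that the restricted polymer models $(G,\cC)$ arising at each step of the algorithm really do satisfy the hypotheses of Theorem~\ref{thmPolymerCount} uniformly, with the same constants $\rho$, $c_1$, $c_2$ as the original; this is immediate since restricting $\cC$ weakens~\eqref{eqKPsimple}, preserves the decay bound $g(\gamma)\ge\rho|\gamma|$, and does not affect the per-polymer computations of (i), provided one can test membership in the current $\cC$ efficiently (which reduces to checking compatibility with the polymers already committed). Given these observations, the argument is a direct translation of the method of~\cite[Theorem 10]{helmuth2018contours} to our setting.
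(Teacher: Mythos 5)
Your proposal is correct and follows essentially the same route as the paper: both run polymer self-reducibility over a fixed vertex ordering, compute the needed marginals as ratios of restricted polymer partition functions approximated by the FPTAS of Theorem~\ref{thmPolymerCount}, justify this via the observation that the Koteck\'y--Preiss condition~\eqref{eqKPsimple} is preserved under restriction of the polymer set, and truncate to polymers with $g(\gamma)$ at most order $\log(n/\eps)$. Your explicit use of the pinned estimate~\eqref{eqpinnedEstimate} on single-polymer clusters together with $\nu_G(\gamma\in\Gamma)\le w_\gamma$ to control the truncation tail simply unpacks what the paper invokes more tersely through Fact~\ref{factKP}, and the per-step error budget $\eta=\eps/n$ is the standard accounting; no substantive difference in the argument.
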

Theorem~\ref{thmPolymerSample} is an application of self-reducibility to the abstract polymer model.  We briefly summarize the algorithm here for completeness and since we will use some of its properties later.

Let $n = |V(G)|$, and given $\eps>0$, let $m = \log(2n/\eps)$.  Let $\cC^{(m)}(G)$ be the set of all polymers $\gamma$ so that $g(\gamma)< m$, and let $\cG^{(m)}(G)$ be the collection of pairwise compatible sets of polymers from $\cC^{(m)}(G)$.  By Fact~\ref{factKP}, with probability at least $1- \eps/2$, $\Gamma$ drawn from $\nu_G$ belongs to $\cG^{(m)}(G)$, and so it suffices to sample from $\cG^{(m)}(G)$. The algorithm builds such a configuration one polymer at a time.  

Fix an arbitrary ordering of the vertices of $G$, $v_1, \dots , v_n$.   Let $\Gamma_0 = \emptyset$, and let $\cC_1 = \cC^{(m)}(G)$.  Then for $j=1, \dots , n$ do the following:
\begin{enumerate}
\item List all polymers $\gamma \in \cC_j$ so that $v_j \in  \gamma$; call this list $\cL_j$. 
\item For each $\gamma \in \cL_j$, approximate the probability $\Pr[\gamma \in \Gamma | \Gamma_{j-1} \subseteq \Gamma]$.    
\item Choose at most one polymer from $\cL_j$ according to these approximate probabilities and add it to $\Gamma_{j-1}$ to form $\Gamma_j$;  if no polymer is chosen let $\Gamma_j  = \Gamma_{j-1}$. 
\item Let $\cC_{j+1}= \cC_j \setminus \cL_j$.  
\end{enumerate}
Output $\Gamma = \Gamma_{n}$. 

 The algorithm is simply self-reducibility applied to the polymer model.  Its correctness relies on two facts: 1) at most one polymer from $\cL_j$ can appear in $\Gamma$ since any two polymers in $\cL_j$ are incompatible with each other 2) for each polymer $\gamma \in \cC^{(m)}(G)$ there is exactly one step at which it can potentially be added to $\Gamma$: the step $j$ corresponding to the first vertex $v_j$ in $\gamma$ according to the arbitrary vertex ordering. The efficiency of the sampling algorithm relies on the fact that $\cL_j$ can be formed efficiently and that the conditional probabilities can be computed efficiently.  The latter is done by approximating partition functions of polymer models with various subsets of polymers.  Crucially, if condition~\eqref{eqKPsimple} holds for $\cC(G)$ then it holds for any subset $\cC' \subset \cC(G)$.

While the algorithm of Theorem~\ref{thmPolymerCount} does not use complex analysis or refer the zeros of a partition function, it is closely related to and inspired by the approach of Barvinok~\cite{barvinok2017combinatorics} for a broad range of approximation problems that involves truncating the Taylor series of the log partition function and using the absence of zeros of the partition function in the complex plane to deduce convergence.  The cluster expansion itself is the multivariate Taylor series of $\log  \Xi$ and the fact that it is supported on clusters which are connected objects is related to the method for efficient computation of the coefficients of graph polynomials in~\cite{patel2016deterministic}.  The algorithm of~\cite{helmuth2018contours} for approximating a polymer model partition function uses the cluster expansion in an indirect way: the cluster expansion is used to prove that the partition function does not vanish in a disc around $0$ in the complex plane; then Barvinok's algorithm of truncating the univariate Taylor series is applied.  Here we use the cluster expansion directly, truncating it to approximate the log partition function, and using the guarantees of~\eqref{eqpinnedEstimate} to bound the approximation error.  The technical reason zero-freeness and the Taylor series were needed in~\cite{helmuth2018contours} is that the weight functions of the more complicated contour models used there are ratios of partition functions, instead of the explicit polymer weights used here.

To prove Theorems~\ref{thmhard-core}, \ref{thmHCrandombip}, \ref{thm:pottspartition}, and~\ref{thmColor} we will take $\mathfrak G$ to be the class of (bipartite) $\alpha$-expander graphs of maximum degree $\Delta$.  We will show that the hard-core and Potts partition functions, at sufficiently high fugacity and low temperature respectively, and the number of proper $q$-colorings for $q$ sufficiently small, can be approximated well by sums of  partition functions of abstract polymer models. This involves showing that the expansion condition implies that almost all the weight of the given partition function comes from configurations close to one of a small number of ground states.

 We then verify in each case conditions (i) of Theorem~\ref{thmPolymerCount}, which is straightforward.  

Conditions (ii) and (iii), choosing the function $g$ and showing the Koteck\'y-Preiss condition holds,  are non-trivial, and it is here that the dependence of the temperature and fugacity on the degree and expansion is determined.

We note that although we are working with low-temperature models, we are able to use the polymer model formulation instead of the more complex \textit{contour model} formulation of Pirogov-Sinai theory used for the algorithms on $\Z^d$ in~\cite{helmuth2018contours}.   The reason polymer models suffice is that the strong expansion condition allows us to express our partition functions in terms of deviations from the ground states directly and not in the recursive fashion of a contour model.

\section{The Potts model}
\label{secPottsproof}

\subsection{Approximation by a polymer model}
In this section we show that the Potts model partition function of an expander graph can be well-approximated by the partition function of a certain polymer model.
Recall that our measure of approximation is the following.
\begin{defn}
Let $Z$ be a real number. We call $\hat Z$ an \emph{$\eps$-relative approximation} to $Z$ if 
\[
e^{-\eps} \hat Z\leq Z\leq e^{\eps} \hat Z\, .
\]
\end{defn}

Given a graph $G$ and a set $S\subseteq V(G)$, recall that we let $\p S$ denote the set of vertices in $S^c$ (the complement of $S$ in $V(G)$) adjacent to a vertex in $S$ and we let $\p_e(S)$ denote the set of edges in $G$ with one endpoint in $S$ and the other in $S^c$.  We let $G[S]$ denote the subgraph of $G$ induced by the vertex set $S$.
Let $S^+=S\cup \p S$ and let $\nabla(S) = E(G[S]) \cup \partial_e(S)$, the set of edges of $G$ that are incident to a vertex in $S$. Recall our notion of expansion. 
\begin{defn}
A graph $G$ is an $\alpha$-expander if $|\p_e(S)|\geq \alpha |S|$ for all subsets $S\subseteq V(G)$ with $|S|\le |V(G)|/2$. 
\end{defn}
Let $\mathfrak G(\alpha,\Delta)$ denote the class of all $\alpha$-expander graphs with maximum degree at most~$\Delta$.
For the remainder of this section we fix a graph $G\in\mathfrak G(\alpha,\Delta)$ on $n$ vertices.

First we first show that the main contribution to the Potts model partition function of $G$ comes from colorings where one color dominates. To make this precise we make a few definitions. First let $\Omega=[q]$ and let  $\Omega^n$ be the set of all (not necessarily proper) colorings $\omega: V(G)\to [q]$, and recall that  $m(G,\omega)$ denotes the number of monochromatic edges of $G$ induced by $\omega$. The Potts model partition function is then 
\[
Z_{G,q}(\beta):=\sum_{\omega\in\Omega^n}e^{ \beta\cdot m(G,\omega)}\, .
\]

For $j\in[q]$, let
\[
\Omega^n_j=\{\omega\in\Omega^n: |\omega^{-1}(\{j\})|> n/2\}\, ,
\]
let
\[
Z^j_G(\beta):=\sum_{\omega\in\Omega^n_j}e^{ \beta\cdot m(G,\omega)} ,
\]
and let
\[
Z^\ast_G(\beta)=\sum_{j=1}^qZ^j_G(\beta)\, .
\]

\begin{lemma}\label{lem:starapprox}
For $\beta>2\log(eq)/\alpha$, $Z_G^\ast(\beta)$ is an $e^{-n}$-relative approximation to $Z_{G,q}(\beta)$.
\end{lemma}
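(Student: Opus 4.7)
The plan is to bound the error $Z_{G,q}(\beta) - Z^\ast_G(\beta)$, which equals the total weight of colorings in which no color occupies more than $n/2$ vertices; call such colorings \emph{balanced}. Since the sets $\Omega^n_j$ are pairwise disjoint one has $Z^\ast_G(\beta) \le Z_{G,q}(\beta)$ for free, so it suffices to show that the balanced contribution is at most $e^{-n} Z^\ast_G(\beta)$; this yields $Z_{G,q}(\beta)/Z^\ast_G(\beta) \le 1 + e^{-n} \le e^{e^{-n}}$, the required $e^{-n}$-relative approximation.

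The key step, where the expansion hypothesis enters, is an isoperimetric upper bound on the number of monochromatic edges in a balanced coloring. Given a balanced $\omega$ with color classes $S_1,\dots,S_q$, every $|S_j| \le n/2$, so $\alpha$-expansion gives $|\partial_e S_j| \ge \alpha |S_j|$. Summing over $j$ and noting that every bichromatic edge of $\omega$ is counted in exactly two of the sets $\partial_e S_j$, I would obtain
$$2\bigl(|E(G)| - m(G,\omega)\bigr) = \sum_{j=1}^q |\partial_e S_j| \ge \alpha \sum_{j=1}^q |S_j| = \alpha n,$$
hence $m(G,\omega) \le |E(G)| - \alpha n/2$ for every balanced $\omega$.

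For a matching lower bound on $Z^\ast_G(\beta)$, each of the $q$ monochromatic colorings lies in some $\Omega^n_j$ and contributes $e^{\beta |E(G)|}$, so $Z^\ast_G(\beta) \ge q\, e^{\beta |E(G)|}$. Combining with the trivial count of at most $q^n$ balanced colorings, I get
$$\frac{Z_{G,q}(\beta) - Z^\ast_G(\beta)}{Z^\ast_G(\beta)} \le \frac{q^n \cdot e^{\beta(|E(G)| - \alpha n/2)}}{q\, e^{\beta |E(G)|}} = q^{n-1}\, e^{-\beta \alpha n/2}.$$
The hypothesis $\beta > 2\log(eq)/\alpha$ gives $\beta \alpha n/2 > n(\log q + 1)$, whence the right-hand side is at most $q^{-1} e^{-n} < e^{-n}$, finishing the argument.

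The only non-routine ingredient is the isoperimetric step; everything else is bookkeeping, and the tight numerical constant $2\log(eq)/\alpha$ is essentially forced by comparing the entropic $q^n$ against the energetic gain $e^{\beta \alpha n/2}$ coming from expansion.
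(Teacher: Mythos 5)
Your proof is correct and follows essentially the same route as the paper: identify balanced colorings as the error term, apply edge expansion to each color class with the double-counting of bichromatic edges to get $m(G,\omega)\le e(G)-\alpha n/2$, bound the error crudely by $q^n e^{\beta(e(G)-\alpha n/2)}$, and compare against the trivial lower bound $q e^{\beta e(G)}$ coming from the monochromatic configurations. The only cosmetic difference is that you normalize by $Z^\ast_G(\beta)$ while the paper normalizes by $Z_{G,q}(\beta)$; this is immaterial.
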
 
\begin{proof}
Let $\Omega^n_\ast=\bigcup_{i=1}^q\Omega^n_j$ and note that this is a disjoint union. Let $\omega\notin \Omega^n_\ast$, then for each $j\in[q]$ we have 
\[
|\omega^{-1}(\{j\})|\le \frac{n}{2}\, .
\]
Letting $S_j=\omega^{-1}(\{j\})$ it follows that $|\p_e(S_j)|\ge \alpha |S_j|$. The set $S_j$ consists of all vertices of $G$ with the color $j$ and so every edge lying between $S_j$ and $S_j^c$ is bichromatic. Summing over colors $j$ we thus have at least
\[
\frac{1}{2}\sum_{j=1}^q |\p_e(S_j)|\ge\frac{1}{2}\sum_{j=1}^q  \alpha |S_j|=\frac{\alpha n}{2}
\]
bichromatic edges and so
\[
m(G,\omega)\le e(G)-\frac{\alpha n}{2}\, .
\]
Using the crude bound $|\Omega^n\backslash\Omega^n_\ast|\le |\Omega^n| =  q^n$ we then have
\[
Z_{G,q}(\beta)-Z_G^\ast(\beta)=\sum_{\omega\notin \Omega^n_\ast}e^{\beta\cdot m(G,\omega)}\le q^n e^{\beta(e(G)-\alpha n/2)}\, ,
\]
and so
\[
\left|1-\frac{Z_G^\ast(\beta)}{Z_{G,q}(\beta)}\right|\le\frac{q^n e^{\beta(e(G)-\alpha n/2)}}{Z_{G,q}(\beta)}\le q^{n-1} e^{-\beta\alpha n/2}\le e^{-n}\, ,
\]
where for the second inequality we use the trivial lower bound $Z_{G,q}(\beta)>qe^{\beta e(G)}$.
\end{proof}

This allows us to focus on approximating $Z_G^\ast(\beta)$. Henceforth, let us fix $r\in[q]$.  We will refer to $r$ as the color `red'. By symmetry 
\begin{align*}
Z_G^\ast(\beta)=q\cdot Z_G^r(\beta) \, ,
\end{align*}
and so we may in fact focus on approximating $Z_G^r(\beta)$.

We now define a polymer model, whose partition function will serve as an approximation to 
$Z_G^r(\beta)$.

Define a polymer to be a set $\gam\subset V(G)$ such that 
$G[\gam]$ is connected and $|\gam|\le n/2$.
Following the set-up in Section~\ref{secPolymermodels}, we say that two polymers $\gam_1, \gam_2$ are compatible if $d(\gam_1, \gam_2)>1$.
We let $\cC=\cC(G)$ denote the set of all polymers of $G$ and let $\cG=\cG(G)$ denote the family of all sets of mutually compatible polymers from $\cC$. 
To each polymer $\gamma\in\cC$,
we assign the weight
\[
w_\gam \coloneqq e^{-\beta|\nabla(\gamma)|}Z_{G[\gam],q-1}(\beta)\, .
\]
This defines a polymer model with partition function
\[
\Xi(G)=\sum_{\Gam\in \cG}\prod_{\gam\in\Gam} w_\gam\, .
\]
We remark that since $\cG$ includes the empty set we have $\Xi(G)\geq1$.

\begin{lemma}\label{lempolymerpotts}
For $\beta>2\log(eq)/\alpha$
\[
\tilde Z_G(\beta):= e^{\beta\cdot e(G)}\cdot\Xi(G)
\]
is an $e^{-n}$-relative approximation to $Z^r_G(\beta)$.
\end{lemma}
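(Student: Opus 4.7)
The plan is to rewrite $Z_G^r(\beta)$ exactly as $e^{\beta e(G)}\Xi^*(G)$, where $\Xi^*(G)$ is the polymer partition function restricted to configurations with total polymer size $<n/2$, and then bound the surplus $\Xi(G) - \Xi^*(G)$ by an $e^{-n}$ relative amount. The approach mirrors Lemma~\ref{lem:starapprox}: the all-red coloring is a ground state, and $\alpha$-expansion exponentially suppresses large deviations from it.

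I would parametrise each $\omega \in \Omega^n_r$ by the set of non-red vertices $A = \omega^{-1}([q]\setminus\{r\})$ (which has $|A|<n/2$) and the restriction $\omega|_A : A \to [q]\setminus\{r\}$. Splitting monochromatic edges into those inside $V\setminus A$ (all red-red) and those inside $G[A]$ gives $m(G,\omega) = e(G) - |\nabla(A)| + m(G[A], \omega|_A)$, so summing over $\omega|_A$ yields
\[
\sum_{\omega : \omega^{-1}(r) = V\setminus A} e^{\beta m(G,\omega)} = e^{\beta(e(G) - |\nabla(A)|)} Z_{G[A], q-1}(\beta).
\]
Each such $A$ corresponds bijectively to the polymer configuration $\Gamma$ whose elements are the connected components of $G[A]$: distinct components are at graph-distance $>1$ in $G$ (else they would merge), so they are pairwise compatible, and each has size $\le |A|<n/2$, so $\Gamma \in \cG$. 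Combined with the factorisations $|\nabla(A)| = \sum_\gamma |\nabla(\gamma)|$ and $Z_{G[A],q-1}(\beta) = \prod_\gamma Z_{G[\gamma],q-1}(\beta)$ across components, this gives $Z_G^r(\beta) = e^{\beta e(G)}\Xi^*(G)$, where $\Xi^*(G)$ sums $\prod_\gamma w_\gamma$ over $\Gamma \in \cG$ with $|A(\Gamma)|:=\sum_{\gamma\in\Gamma}|\gamma| < n/2$.

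It then remains to bound $\Xi(G) - \Xi^*(G) = \sum_{\Gamma : |A(\Gamma)| \ge n/2} \prod_\gamma w_\gamma$. For any polymer $\gamma$ (which has $|\gamma|\le n/2$), $\alpha$-expansion gives $|\partial_e(\gamma)| \ge \alpha|\gamma|$, and combining with the trivial estimate $Z_{G[\gamma], q-1}(\beta) \le (q-1)^{|\gamma|} e^{\beta e(G[\gamma])}$ yields the weight bound
\[
w_\gamma \le (q-1)^{|\gamma|} e^{-\beta |\partial_e(\gamma)|} \le \rho^{|\gamma|}, \qquad \rho := (q-1) e^{-\alpha\beta}.
\]
Since each configuration $\Gamma$ is determined uniquely by $A(\Gamma)$ (as the components of $G[A(\Gamma)]$), the error is at most $\sum_{s\ge n/2}\binom{n}{s}\rho^s$. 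The hypothesis $\beta>2\log(eq)/\alpha$ gives $\rho < (q-1)/(eq)^2 \le 1/(4e^2)$ via the elementary inequality $(q-2)^2\ge 0$, so $2\sqrt\rho \le 1/e$ and the sum is bounded by $2^n\rho^{n/2} \le e^{-n}$ up to a geometric-series factor. Dividing by $Z_G^r(\beta) \ge e^{\beta e(G)}$ (the all-red contribution) delivers the claimed $e^{-n}$-relative approximation. The main delicate point is that $(q-1)/q^2 \le 1/4$ is tight at $q=2$, leaving no slack in the constant, so one must be careful (e.g.\ using Stirling for $\binom{n}{n/2}$) to absorb the factor $2^n$ cleanly into the exponent.
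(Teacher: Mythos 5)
Your proof is correct and essentially mirrors the paper's argument: both parametrize colorings by the non-red vertex set, put sparse sets in bijection with polymer configurations via connected components, and bound the surplus coming from configurations of total size $\geq n/2$ using $\alpha$-expansion; the paper simply bounds $m(G,\omega)\leq e(G)-\alpha n/2$ for large sparse $S$ and counts $\leq q^n$ colorings, rather than factoring $w_\gamma\leq\rho^{|\gamma|}$ polymer-wise as you do. Your closing worry about tightness at $q=2$ is unfounded --- the strict hypothesis $\beta>2\log(eq)/\alpha$ already gives $\rho<1/(4e^2)$ strictly, so $\sum_{s\geq n/2}\binom{n}{s}\rho^s\leq 2^n\rho^{n/2}=(2\sqrt{\rho})^n<e^{-n}$ with no Stirling estimate required.
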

\begin{proof}
For $S\subset V(G)$,
let $\Omega^n(S)$ denote the set of colorings 
$\omega\in\Omega^n$
such that $\omega(S)\subset[q]\bs\{r\}$ and $\omega(S^c)=\{r\}$.
Note that for $\omega\in\Omega^n(S)$,
we have
\begin{align}\label{eqmononabla}
m(G,\omega)=e(G)-|\nabla(S)|+m(G[S],\omega)\, .
\end{align}
Note also that if $\{\gam_1,\ldots,\gam_k\}$ are the connected components of $G[S]$ then
\begin{align}\label{eqcolmult}
e^{-\beta|\nabla(S)|}
Z_{G[S],q-1}(\beta)=
\prod_{i\in[k]}e^{-\beta|\nabla(\gam_i)|}
Z_{G[\gam_i],q-1}(\beta)\, .
\end{align}
We call a subset $S\subset V(G)$ \emph{small} if $|S|\le n/2$ and \emph{large} otherwise.
We call $S$ \emph{sparse} if each of the connected components of $G[S]$ is small.
Note that there is a one to one correspondence between sparse subsets of $V(G)$ and elements of $\cG$. We thus have by~\eqref{eqmononabla} and~\eqref{eqcolmult}  that
\begin{align}
\tilde Z_G(\beta)&=
e^{\beta\cdot e(G)} 
\sum_{\Gam\in \cG}
\prod_{\gam\in\Gam}
e^{-\beta|\nabla(\gamma)|}
Z_{G[\gam],q-1}(\beta)\ ,\\
&= e^{\beta\cdot e(G)} 
\sum_{S\text{ sparse}}
e^{-\beta|\nabla(S)|}
Z_{G[S],q-1}(\beta)\, ,\\
&=\sum_{S\text{ sparse}}\sum_{\omega\in \Omega^n(S)}e^{\beta\cdot m(G,\omega)}\, .
\end{align}
On the other hand 
\begin{align}
Z^r_G(\beta)=
\sum_{S\text{ small}}\sum_{\omega\in \Omega^n(S)}e^{\beta\cdot m(G,\omega)}\, ,
\end{align}
and so 
\begin{align}
\tilde Z_G(\beta)-Z^r_G(\beta)&=
\sum_{\substack{
S\text{ sparse,}\\ \text{large}}}\sum_{\omega\in \Omega^n(S)}e^{\beta\cdot m(G,\omega)}.
\end{align}
Observe that if $S$ is large and sparse then $|\p_e(S)|\ge\alpha|S|\ge\alpha n/2$, and so
by~\eqref{eqmononabla} $m(G,\omega)\le e(G)-\alpha n/2$ for all $\omega\in \Omega^n(S)$.
It follows that 
\begin{align}
\left|1- \frac{\tilde Z_G(\beta)}{Z^r_G(\beta)}\right|\le
e^{-\beta\cdot e(G)}\left(\tilde Z_G(\beta)-Z^r_G(\beta)\right)\le
q^ne^{-\beta\alpha n/2}
\le
e^{-n}\ ,
\end{align}
where we have the lower bound $\tilde Z_G(\beta)\ge e^{\beta\cdot e(G)}$ (since $\Xi(G)\geq1$) and the crude upper bound of $q^n$ on the number of $\omega$ such that $\omega\in\Omega^n(S)$ for some large and sparse set $S$.
\end{proof}

Our aim is to apply Theorem~\ref{thmPolymerCount} to this polymer model in order to obtain an FPTAS for $\Xi(G)$ when $\beta$ is sufficiently large which, by Lemmas~\ref{lem:starapprox} and \ref{lempolymerpotts}, will furnish us with an FPTAS for $Z_{G,q}(\beta)$.

To this end we aim to verify conditions (i)--(iii) of Theorem~\ref{thmPolymerCount} for our polymer model. Verifying condition (i) is essentially immediate. Given $\gamma$, determining  whether $\gamma \in \cC$ amounts to checking whether $G[ \gamma]$ is connected and of size at most $n/2$. This can be done in $O(|  \gamma|)$ time by a depth-first search algorithm. Computing the weight function $w_{\gamma}$ can be done in time $O(\Delta|\gam|(q-1)^{|\gamma|})$ by calculating $m(G[\gam],\omega)$ for all assignments $\omega$ of $(q-1)$ colors to $\gamma$. 

We now turn our attention to verifying conditions (ii) and (iii) for an appropriate choice of function $g$.

\subsection{Verifying the Koteck\'y-Preiss condition}
We choose 
\[ g(\gamma) = |\gamma| \,.\]
Condition (ii) of Theorem~\ref{thmPolymerCount} holds trivially with $\rho=1$. 
It remains to show the Koteck\'y-Preiss condition holds.  That is, 
\begin{align*}\label{eqKPsimple}
\sum_{\gamma':d(\gamma', \gamma)\leq1} w_{\gamma'} e^{2|  \gamma'|}\le |  \gamma | 
\end{align*}
for all $\gamma\in \cC(G)$. 
We begin by bounding $w_{\gamma}$:
\begin{align*}
w_\gamma &= e^{-\beta|\nabla(\gamma)|}Z_{G[\gam],q-1}(\beta) \\
&\le e^{-\beta  |\partial_e \gamma| } (q-1)^{|\gamma|} \\
&\le e^{-\beta \alpha |\gamma| } (q-1)^{|\gamma|}\,.
\end{align*}
For the first inequality we used that $|\nabla(\gam)|=|\p_e\gam|+e(G[\gam])$ and that 
$Z_{G[\gam],q-1}(\beta)\leq (q-1)^{|\gam|}e^{\beta\cdot e(G[\gam])}$. 
For the second inequality we used that $G$ is an $\alpha$-expander.
It thus suffices to show that
\begin{align*}
\sum_{\gamma':d(\gamma',\gamma)\le1} e^{ (2- \alpha \beta    + \log (q-1) ) |\gamma'|  }  &\le  |\gamma| \,.
\end{align*}
If we could show that for each $v\in V(G)$
\begin{equation}\label{eq:vertex}
\sum_{\gamma':\gamma'\ni v}e^{(2-\alpha \beta  +\log(q-1))|\gamma'|}\le \frac{1}{\Delta+1}\, ,
\end{equation}
then by summing this inequality over all $v\in \gamma^+(=\gamma\cup \p \gamma$) and noting that $| \gamma^+|\le(\Delta+1)|\gamma|$, we would be done.

In order to establish \eqref{eq:vertex} we borrow the following lemma:
\begin{lemma}[\cite{galvin2004phase}, Lemma 2.1]\label{lemConCount}
In a graph of maximum degree at most $\Delta$, the number of connected, induced subgraphs of order $t$ containing a fixed vertex $v$ is at most $(e\Delta)^t$.
\end{lemma}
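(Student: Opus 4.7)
The plan is to bound the number $N$ of connected induced subgraphs of order $t$ containing $v$ by the number of rooted subtrees of $G$ of the same size with root $v$, and then apply a Fuss--Catalan-type estimate. Each connected induced subgraph $H$ of order $t$ containing $v$ admits at least one spanning tree, and $H$ is determined by the vertex set of any such spanning tree. Hence $N$ is at most the number of subtrees of $G$ of order $t$ rooted at $v$, and the problem reduces to bounding this tree count.

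For the tree count, I would fix an arbitrary labeling of the edges at each vertex of $G$ using labels in $\{1, \ldots, \Delta\}$ (recall every vertex has at most $\Delta$ incident edges). Each rooted subtree of $G$ rooted at $v$ then embeds injectively into the class of rooted plane trees in which every vertex has $\Delta$ labeled child-slots, each slot either empty or carrying a recursively defined subtree. Injectivity uses that the plane-tree structure together with the fixed edge labeling uniquely recovers the original subtree of $G$: starting from $v$, the labeled slot at each already-placed vertex points to a unique neighbor in $G$ that must be the corresponding child in $G$.

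Let $T(x) = \sum_t a_t x^t$ be the ordinary generating function for this class of plane trees. Because the root contributes one vertex and has $\Delta$ independent child-slots, $T$ satisfies the functional equation $T(x) = x(1+T(x))^\Delta$, and Lagrange inversion gives $a_t = \frac{1}{t}\binom{\Delta t}{t-1}$. The standard estimate $\binom{m}{k} \leq (em/k)^k$ then yields
\begin{align*}
a_t \;\leq\; \frac{1}{t}\binom{\Delta t}{t} \;\leq\; \frac{(e\Delta)^t}{t} \;\leq\; (e\Delta)^t,
\end{align*}
which is the desired bound.

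The main obstacle is obtaining the correct constant in the exponent. The more elementary approach---encoding a spanning tree by a depth-first traversal, viewed as a Dyck-like walk in the alphabet $\{1, \ldots, \Delta\} \cup \{\text{back}\}$---only yields a bound of order $(4\Delta)^{t-1}$ after combining the $\Delta^{t-1}$ forward-choices with the Catalan count $C_{t-1}$ of valid up/down patterns. This is insufficient since $(4\Delta)^{t-1}$ is not dominated by $(e\Delta)^t$ when $t$ is large and $\Delta$ is fixed. The Fuss--Catalan generating-function route above is precisely what is needed to replace $4$ with $e$.
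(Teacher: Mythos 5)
The paper does not prove this lemma itself; it imports it verbatim from Galvin and Kahn (their Lemma 2.1), so there is no internal proof to compare against. Your argument is correct and is the standard route behind that citation: a connected induced subgraph is determined by its vertex set, so picking a spanning tree gives an injection into rooted subtrees; rooted subtrees of a graph of maximum degree $\Delta$ embed injectively (after fixing an edge-labeling at each vertex) into rooted plane trees with $\Delta$ labeled child-slots per vertex; using $\Delta$ slots at \emph{every} vertex rather than $\Delta-1$ at non-root vertices is a harmless over-count; and Lagrange inversion applied to $T=x(1+T)^\Delta$ gives the Fuss--Catalan count $\tfrac1t\binom{\Delta t}{t-1}\le\tfrac1t\binom{\Delta t}{t}\le(e\Delta)^t$. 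Your side remark is also accurate: the naive encoding by a Dyck path of length $2(t-1)$ together with a $[\Delta]$-label on each forward step only yields $C_{t-1}\Delta^{t-1}\le(4\Delta)^{t-1}$, and this is not dominated by $(e\Delta)^t$ as $t\to\infty$ since the ratio grows like $(4/e)^{t}$; the constraint that forward steps issuing from the same vertex must carry distinct labels is exactly what the $(1+T)^\Delta$ factorisation exploits to replace $4$ by $e$.
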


It follows that the number of polymers $ \gamma$ of size $t$ that contain a given vertex $v$ is bounded by $(e\Delta)^t$, and so
\[
\sum_{\gamma':\gamma'\ni v}e^{(2-\alpha \beta + \log(q-1))|\gamma'|}\le\sum_{t=1}^\infty  \left((q-1)\Delta\cdot e^{(3-\alpha\beta)}\right)^t \le \frac{1}{\Delta+1}\, ,
\]
for $\beta \ge  \frac{ 4+   2\log (q \Delta)   }{ \alpha  }$. 

This verifies condition (iii) of Theorem~\ref{thmPolymerCount} 
 and so Theorem~\ref{thmPolymerCount} gives an FPTAS for $\Xi(G)$ for all $\beta \ge  \frac{ 4+   2\log (q \Delta)  }{ \alpha  }$.    

\subsection{Proof of Theorem~\ref{thm:pottspartition}}

We consider two cases separately.  If $\eps \le e^{-n/2}$, then we proceed by brute force, calculating $m(G,\omega)$ for each of the $q^n$ possible colorings of $G$. In this way we can calculate the partition function $Z_{G,q}(\beta)$ exactly in time $O(n \Delta q^n)$ and therefore in time polynomial in $1/\eps$. Similarly we can obtain an exact sample from $\mu_{G,q,\beta}$ by brute force in time polynomial in $1/\eps$.

Now we assume $\eps > e^{-n/2}$. 
Using Theorem~\ref{thmPolymerCount}, we can obtain $Z_{\text{alg}}$, an $\eps/2$-relative approximation to $e^{\beta e(G)} \cdot  \Xi(G)$ in time polynomial in $n$ and $1/\eps$ (here we use the fact that for a polymer $\gamma$, $w_\gamma$ can be computed  in time $O(|\gamma| \Delta q^{|\gamma|})$).
By Lemma~\ref{lempolymerpotts}, $e^{\beta e(G)} \cdot  \Xi(G)$ is an $e^{-n}$-relative approximation to $Z_G^r(\beta)$ and so $qe^{\beta e(G)} \cdot  \Xi(G)$ is an $e^{-n}$-relative approximation to $Z_G^\ast(\beta)$ . 
By Lemma~\ref{lem:starapprox}, it follows that $qe^{\beta e(G)} \cdot  \Xi(G)$ is an $\eps/2$-relative approximation to $Z_{G,q}(\beta)$ and so $qZ_{\text{alg}}$ is an $\eps$-relative approximation to $Z_{G,q}(\beta)$ as required.  Lemmas~\ref{lem:starapprox} and~\ref{lempolymerpotts} apply since $  \frac{ 4+   2\log (q\Delta)    }{ \alpha  } > 2 \log(eq )/\alpha$.

For the approximate sampling algorithm, we will apply Theorem~\ref{thmPolymerSample}.  

Consider the following distribution $\hat \mu$ on $\Omega^n$.  Choose $r \in [q]$ uniformly at random and  then sample $\Gamma$ from the measure  $ \nu_{G,\beta}$ on  $\cG(G)$ defined by
\begin{align*}
 \nu_{G,\beta}(\Gamma ) & = \frac{ \prod_{\gamma \in \Gamma}  w_{\gamma}   }{ \Xi(G)}   \,.
\end{align*}
Then convert $\Gamma$ into a $q$-coloring $\omega \in \Omega^n$ as follows: for each $\gamma \in \Gamma$, choose a $(q-1)$ coloring (excluding color $r$)  $\omega_\gamma$ according to $\mu_{G[\gamma],q-1,\beta}$.   Then set $\omega(v) = \omega_{ \gamma}(v)$ if $v \in \gamma$ for some $\gamma \in \Gamma$, and $\omega(v) = r$ otherwise. The resulting distribution of $\omega$ is $\hat \mu$.

By Lemmas~\ref{lem:starapprox} and~\ref{lempolymerpotts} and the condition on $\eps$,
 \[ \| \hat \mu -  \mu_{G,q,\beta}\|_{TV} \le \eps /2 . \] 
Thus to complete the proof of Theorem~\ref{thm:pottspartition} it suffices to obtain an $\eps/2$-approximate sample from $ \nu^r_{G,\beta}$ in time polynomial in $n$ and $1/\eps$, and to do this we appeal to Theorem~\ref{thmPolymerSample}.
 Crucially, because of the truncation involved in the polymer sampling algorithm of Theorem~\ref{thmPolymerSample}, we have $|\gamma| = O( \log(n/\eps))$ for each $\gamma \in \Gamma$ and so for each $\gamma$ a sample from $\mu_{G[\gamma],q-1,\beta}$ can be obtained by brute force in time polynomial in $n$ and $1/\eps$.

\qed

\subsection{Proof of Corollary~\ref{corPottsrandom}}
In order to prove Corollary~\ref{corPottsrandom} we require a very brief review of spectral graph theory.
For a graph $G$ on $n$ vertices we let
\[
h(G):=\min_{\{S: |S|\leq n/2\}}\frac{|\p_eS|}{|S|}\, .
\]
In other words, $h(G)$ is the largest value of $\alpha$ for which $G$ is an $\alpha$-expander.
The exact determination of $h(G)$, given G, is known to be co\textbf{NP}-complete. 
However, as shown by Alon~\cite{alon1986eigenvalues}, one can efficiently approximate the expansion properties of a graph using its \emph{spectrum}.
For a graph $G$ on $n$ vertices, we let $\lam_1 \geq\ldots\geq\lam_n$ denote the eigenvalues of the adjacency matrix of $G$ 
(that is, the $n \times n$ matrix $A$, with rows and columns indexed by $V(G)$, where $A_{ij}=1$ if $\{i,j\}\in E(G)$ and $A_{ij}=0$ otherwise).

The following lemma, one of Cheeger's inequalities (see e.g. \cite{hoory2006expander}), bounds $h(G)$ in terms of the second largest eigenvalue of $G$ when $G$ is regular.
\begin{lemma}\label{lemh}
If $G$ is a $\Delta$-regular graph, then
\[
h(G)\geq\frac{\Delta-\lam_2}{2}\, .
\]
\end{lemma}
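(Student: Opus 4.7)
This is the ``easy'' direction of Cheeger's inequality, so the plan is to run a standard variational argument using the second-largest eigenvalue of the adjacency matrix $A$. Because $G$ is $\Delta$-regular, the all-ones vector $\mathbf{1}$ is an eigenvector of $A$ with eigenvalue $\Delta$, and the Rayleigh--Ritz characterization gives
\[
\lam_2 \;=\; \max_{\substack{x\perp \mathbf{1}\\ x\ne 0}} \frac{x^{T} A x}{x^{T} x}\, ,
\]
equivalently $x^{T}(\Delta I - A)x \ge (\Delta-\lam_2)\|x\|^2$ for every $x\perp \mathbf{1}$. Crucially, the Laplacian quadratic form has the combinatorial identity
\[
x^{T}(\Delta I - A)x \;=\; \sum_{\{u,v\}\in E(G)} (x_u - x_v)^{2}\, ,
\]
which I will use to turn the eigenvalue bound into an edge-boundary bound.

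Fix $S\subseteq V(G)$ with $|S|\le n/2$ and define the test vector
\[
x_v \;=\; \begin{cases} |V(G)\setminus S| & \text{if } v\in S,\\ -|S| & \text{if } v\notin S.\end{cases}
\]
By construction $\sum_v x_v = 0$, so $x\perp \mathbf{1}$. A direct computation gives $\|x\|^{2} = n\,|S|\,|V(G)\setminus S|$. In the Laplacian sum above, edges with both endpoints in $S$ or both in $S^{c}$ contribute zero, while each edge in $\p_e(S)$ contributes $(|S|+|V(G)\setminus S|)^{2}=n^{2}$. Hence
\[
\sum_{\{u,v\}\in E(G)}(x_u-x_v)^{2} \;=\; n^{2}\,|\p_e(S)|\, .
\]

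Plugging these two computations into the Rayleigh quotient inequality yields
\[
n^{2}\,|\p_e(S)| \;\ge\; (\Delta-\lam_2)\, n\,|S|\,|V(G)\setminus S|,
\]
and rearranging gives $|\p_e(S)|/|S| \ge (\Delta-\lam_2)\,|V(G)\setminus S|/n \ge (\Delta-\lam_2)/2$, where the last step uses $|S|\le n/2$. Taking the infimum over such $S$ establishes the claimed bound $h(G)\ge (\Delta-\lam_2)/2$. There is no real obstacle here: the only thing to be a bit careful about is the choice of test vector (it must be orthogonal to $\mathbf{1}$ and its Laplacian form must count exactly $|\p_e(S)|$ up to a clean factor), and the asymmetric weighting $(|V(G)\setminus S|,-|S|)$ is precisely what achieves both.
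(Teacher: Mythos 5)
Your proof is correct: the Rayleigh-quotient argument with the balanced indicator vector $x_v=|V\setminus S|$ on $S$ and $x_v=-|S|$ off $S$ is exactly the standard proof of the ``easy'' direction of Cheeger's inequality for regular graphs. The paper itself gives no proof of Lemma~\ref{lemh}, citing it as a known fact from \cite{hoory2006expander}, and the argument given there is essentially the one you reproduce.
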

We will also use the following celebrated result of Friedman on the spectral gap of the random regular graph.
For a graph $G$ 
we let $\lam(G)=\max(|\lam_2|, |\lam_n|)$.

\begin{theorem}[Friedman~\cite{friedman2003proof}] \label{thmfriedman}
For every fixed $\eps>0$, almost every $\Delta$-regular graph $G$ satisfies
\begin{align}\label{eqspec}
\lam(G)\leq 2\sqrt{\Delta-1}+\eps\, .
\end{align}
\end{theorem}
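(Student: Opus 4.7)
The plan is to use the trace (moment) method combined with a careful enumeration of closed walks in the random $\Delta$-regular graph, accessed through the configuration model. Since $G$ is $\Delta$-regular, the top eigenvalue is exactly $\Delta$ with the all-ones eigenvector; for any even integer $2k$ we have
\[
\lam(G)^{2k} \le \mathrm{tr}(A^{2k}) - \Delta^{2k} = \sum_{i=2}^n \lam_i^{2k}.
\]
Taking expectations reduces the problem to bounding the expected number of closed walks of length $2k$ in a random $\Delta$-regular graph, minus the contribution $\Delta^{2k}$ of the trivial eigenvalue. If one can show that this expected difference is at most $(2\sqrt{\Delta-1}+\eps)^{2k} \cdot n^{1+o(1)}$ for a suitably chosen $k = k(n) \to \infty$, then Markov's inequality together with taking the $2k$-th root yields the desired bound with high probability.

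The second step is to carry out the walk enumeration. First I would pass from the uniform model on $\Delta$-regular graphs to the configuration (permutation) model via Wormald's contiguity results, so that probabilities of specific multi-edge patterns become tractable. Each closed walk of length $2k$ traces out a multigraph with at most $2k$ edges, and walks are classified by their underlying topological ``type''. It is convenient to further move from walks to non-backtracking walks via an Ihara-style identity, since the spectrum of the non-backtracking operator $B$ is related to that of $A$ by $\det(I - Bu) = (1-u^2)^{m-n} \det(I - Au + (\Delta-1)u^2 I)$, and the non-backtracking structure cleanly separates genuine large eigenvalues from inflation due to backtracks.

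The third step, which is the genuine obstacle, is extracting the \emph{sharp} constant $2\sqrt{\Delta-1}$ (matching Alon--Boppana). A naive trace bound yields only $\lam(G) \le 2\sqrt{\Delta-1}\,(1+o(1))$ after extracting the $2k$-th root for $k$ of order $\log n$; to improve the $1+o(1)$ factor into an additive $\eps$, one must take $k$ much larger (roughly polylogarithmic in $n$), but then ``tangled'' walks (those whose image contains more than one cycle inside some small ball) threaten to dominate the expectation. Friedman's key idea is a dichotomy between \emph{tangle-free} and \emph{tangled} contributions, combined with the fact that the random regular graph is, with high probability, locally tangle-free up to the relevant scale. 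Concretely I would: (i) show that with high probability every ball of radius $\ell = c \log_{\Delta-1} n$ contains at most one cycle; (ii) bound the contribution of tangle-free non-backtracking closed walks by a generating-function computation that reproduces the Kesten--McKay density on $[-2\sqrt{\Delta-1},2\sqrt{\Delta-1}]$, yielding the sharp constant; (iii) control the tangled error by a crude but sufficient combinatorial estimate that loses only an $n^{o(1)}$ factor.

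Combining these ingredients, one obtains $\mathbb{E}[\mathrm{tr}(A^{2k}) - \Delta^{2k}] \le (2\sqrt{\Delta-1}+\eps/2)^{2k}\, n^{1+o(1)}$ for $k = k(n)$ growing like $(\log n)^2$; Markov's inequality and the $2k$-th root then give $\lam(G) \le 2\sqrt{\Delta-1}+\eps$ with probability $1-o(1)$. The whole difficulty is concentrated in step (ii): making the tangle-free walk generating-function analysis tight enough to avoid any constant-factor slack, which in Friedman's proof requires a delicate asymptotic expansion of the relevant walk counts and is what makes the full argument run to roughly a hundred pages rather than being a short exercise.
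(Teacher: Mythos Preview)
The paper does not prove this statement at all: Theorem~\ref{thmfriedman} is quoted from Friedman~\cite{friedman2003proof} and used as a black box in the proof of Corollary~\ref{corPottsrandom}. There is therefore no ``paper's own proof'' to compare your proposal against.

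Your outline is a reasonable high-level sketch of the genuine approach (trace/moment method on the configuration model, reduction to non-backtracking walks, and a tangle-free/tangled dichotomy to handle long walks), and you correctly identify where the real difficulty lies. But as a self-contained proof it is far from complete: step (iii) in particular is not a ``crude but sufficient combinatorial estimate'' in Friedman's argument --- controlling the tangled contribution and the error terms when $k$ is taken large enough to extract the sharp additive $\eps$ is exactly what occupies most of the hundred-plus pages, involving selective traces, a careful asymptotic expansion in powers of $1/n$, and a separate argument to rule out exceptional graphs. Bordenave's later proof follows a cleaner version of the scheme you describe, but even there the tangle-free walk analysis and the concentration step require substantial work that your sketch does not supply. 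For the purposes of this paper, simply citing \cite{friedman2003proof} is the intended and appropriate treatment.
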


\begin{proof}[Proof of Corollary~\ref{corPottsrandom}]
Fix $\eps=1/100$. Since the eigenvalues of a graph $G$ can be calculated in polynomial time, there is a polynomial time algorithm to determine whether a graph $G$ satisfies \eqref{eqspec}. 
If $G$ satisfies  \eqref{eqspec} then by Lemma~\ref{lemh}, $G$ is a  $\Delta/40$-expander (since for $\Delta \ge 3$, $\Delta -2\sqrt{\Delta-1} -1/100\ge \Delta/20$). By Theorem~\ref{thm:pottspartition} it follows that if $\beta > 200\frac{\log( q\Delta)}{\Delta}$, there is an FPTAS and polynomial-time sampling algorithm for the ferromagnetic Potts model at inverse temperature $\beta$ on $G$. We conclude by noting that almost every $\Delta$-regular graph satisfies \eqref{eqspec} by Theorem~\ref{thmfriedman}.
\end{proof}

\section{The hard-core model}
\label{secHCproof}
\subsection{Approximation by a polymer model}\label{subsecapprox}

In this section we prove Theorem~\ref{thmhard-core} following the same strategy as in the previous section. First we approximate the hard-core partition function by a sum of polymer model partition functions, and then we verify the conditions of Theorem~\ref{thmPolymerCount} for these models.

We let $G=(\cO,\cE, E)$ denote a bipartite graph with partition classes $\cO$, $\cE$ and edge set $E$. We will refer to vertices of $\cO$ and $\cE$ as `odd' and `even' vertices respectively.

Recall our notion of expansion for a bipartite graph.
\begin{defn}\label{defbipexpsimple}
For $\alpha >0$, a bipartite graph $G=(\cO,\cE, E)$ is a bipartite $\alpha$-expander if $|\partial S|\geq (1+\alpha) |S|$ for all $S\subseteq \cO$ with $|S|\le |\cO|/2$ and all $S\subseteq \cE$ with $|S|\le |\cE|/2$.
\end{defn}

We remark that for regular bipartite graphs the expansion property of Definition~\ref{defbipexpsimple} also follows from a spectral gap condition.
Indeed if $G$ is a $\Delta$-regular connected bipartite graph whose second largest eigenvalue is $\lam_2$, then a classical result of Tanner~\cite{tanner1984explicit} implies that $G$ is a bipartite $\alpha$-expander where $\alpha=(\Delta^2-\lam_2^2)/(\Delta^2+\lam_2^2)$.

Let  $\mathfrak G^{\text{bip}}(\alpha,\Delta)$ denote the class of all bipartite $\alpha$-expander graphs with maximum degree at most $\Delta$. From this point on, let us fix a graph $G\in\mathfrak G^{\text{bip}}(\alpha,\Delta)$ on $n$ vertices with partition classes $\cO, \cE$. Recall that $\cI(G)$ denotes the family of all independent sets in $G$.

Let us call a set $S\subseteq \cO$ \emph{small} if $|S|\le|\cO|/2$. We define small subsets of $\cE$ similarly. 

\begin{lemma}\label{lemsmall}
Let $I\in \cI(G)$, then at least one of the sets $I\cap\cO, I\cap\cE$ is small.
\end{lemma}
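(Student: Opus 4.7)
My plan is to proceed by contradiction: suppose that $|I\cap\cO| > |\cO|/2$ and $|I\cap\cE| > |\cE|/2$, and derive a contradiction using the bipartite $\alpha$-expansion property. The key structural observation is that because $I$ is independent, no vertex in $I\cap\cO$ can be adjacent to any vertex in $I\cap\cE$. In particular, for any subset $S\subseteq I\cap\cO$ we have $\partial S\subseteq\cE\setminus I$.

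Since $|I\cap\cO| > |\cO|/2 \ge \lfloor|\cO|/2\rfloor$, I can pick a subset $S\subseteq I\cap\cO$ with $|S| = \lfloor|\cO|/2\rfloor$. Because $|S|\le|\cO|/2$, the bipartite expansion condition applies to $S$, giving $|\partial S|\ge(1+\alpha)|S|$. Combining this with the containment $\partial S\subseteq\cE\setminus I$ and the assumption $|I\cap\cE|>|\cE|/2$ (so that $|\cE\setminus I|<|\cE|/2$) yields
\[
(1+\alpha)\lfloor|\cO|/2\rfloor \;\le\; |\partial S| \;\le\; |\cE\setminus I| \;<\; |\cE|/2.
\]
Running the same argument with the roles of $\cO$ and $\cE$ swapped produces the symmetric inequality $(1+\alpha)\lfloor|\cE|/2\rfloor < |\cO|/2$.

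Adding the two inequalities, and using the elementary estimate $\lfloor|\cO|/2\rfloor + \lfloor|\cE|/2\rfloor \ge (n-2)/2$ where $n = |\cO|+|\cE|$, I get $(1+\alpha)(n-2)/2 < n/2$, which simplifies to $\alpha n < 2(1+\alpha)$, i.e.\ $n < 2 + 2/\alpha$. This contradicts the bipartite $\alpha$-expansion assumption once $n$ is larger than a small constant depending on $\alpha$.

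The main obstacle is purely cosmetic: the floor functions when $|\cO|$ or $|\cE|$ is odd. When both sides have even size, the cleaner form is $(1+\alpha)|\cO| \le |\cE|$ and $(1+\alpha)|\cE|\le|\cO|$, whose product gives $(1+\alpha)^2 \le 1$, immediately contradicting $\alpha>0$. The tiny-$n$ regime, where the argument above does not directly give a contradiction, can be handled by noting that the expansion condition applied to singletons forces every vertex to have degree at least $\lceil 1+\alpha\rceil$, ruling out the supposed configurations by inspection. In the applications of this lemma in Section~\ref{secHCproof} this regime is in any case vacuous.
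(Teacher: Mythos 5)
Your proof is correct and rests on the same two ingredients as the paper's: independence of $I$ forces $\partial S \subseteq \cE\setminus I$ (resp.\ $\cO\setminus I$) for $S\subseteq I\cap\cO$ (resp.\ $I\cap\cE$), and then the bipartite expansion condition gives the quantitative bound. The only difference is organizational: the paper normalizes by WLOG $|\cO|\ge|\cE|$ and applies expansion once, directly concluding $|I\cap\cE|<|\cE|/2$ whenever $|I\cap\cO|>|\cO|/2$, whereas you argue by contradiction and apply expansion symmetrically on both sides before combining; both versions quietly rely on taking subsets of size ``$|\cO|/2$'' and so share the same innocuous parity/small-$n$ loose end, which you are simply more explicit about flagging.
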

\begin{proof}
Without loss of generality let $|\cO|\ge|\cE|$ and suppose that $|I\cap\cO|>|\cO|/2$. Since $G$ is a bipartite $\alpha$-expander, by considering a subset of $I\cap\cO$ of size $|\cO|/2$, we have $|\partial(I\cap\cO)|\geq (1+\alpha) |\cO|/2$ and so 
\[
|I\cap\cE|<|\cE|-(1+\alpha) |\cO|/2<|\cE|/2\, .
\]

\end{proof}

We now define two distinct polymer models whose partitions functions we will use to approximate $Z_G(\lam)$. 

First let us introduce some notation and terminology. 
We let $G^k$ denote the $k$th power of the graph $G$, that is the graph on vertex set $V(G)$ where two vertices are adjacent if and only if they are at distance at most $k$ in $G$.

\begin{defn}\label{defklink}
We say a subset $S \subseteq V(G)$ is \emph{$G^k$-connected} if the induced subgraph $G^k[S]$ is connected. We call the connected components of $G^k[S]$, the \emph{$G^k$-connected} components of $S$.
\end{defn}
Throughout this section we will be concerned only with $G^2$-connected sets. Note that a set $S \subset \cO, \cE$ is $G^2$-connected if and only if $G[S^+]$ is connected.

A \textit{polymer} is any \emph{small} $G^2$-connected set $\gamma$ which lies entirely in $\cE$ or entirely in $\cO$; in the first case we say it is an even polymer and in the second an odd polymer.  The size of a polymer $\gamma$, $| \gamma|$,  is again the number of vertices in $\gamma$. We define a compatibility relation on the set of even (odd) polymers, with $\gamma_1, \gamma_2$  compatible if $d_G(\gamma_1, \gamma_2) >2$  (that is if $\gamma_1 \cup \gamma_2$ is not $G^2$-connected); otherwise $\gamma_1$ and $\gamma_2$ are incompatible.  Note that this is consistent with the definition of a polymer model in Section~\ref{secPolymermodels}: a polymer is a connected subgraph in $G^2$ and two polymers $\gamma, \gamma'$ are compatible if their graph distance in $G^2$ is greater than $1$. 

 For each polymer $\gamma$ we assign a weight
\[
w_\gamma:=\frac{\lam^{|\gamma|}}{(1+\lam)^{|\partial \gamma|}}\, .
\]

We let $\cC^{\cE}=\cC^{\cE}(G)$ denote the set of all even polymers of $G$ and let $ \cG^{\cE}=\cG^{\cE}(G)$ denote the family of all sets of mutually compatible polymers from $\cC^\cE$. We define $\cC^{\cO}$, $\mathcal G^{\cO}$ similarly.

The set of polymers $\cC^{\cE}$ ($\cC^{\cO}$) constitutes a polymer model with partition function $\Xi^{\cE}(G)$ ($\Xi^{\cO}(G)$). That is, 
\[
\Xi^{\cE}(G)=\sum_{\Gamma\in {\mathcal G^{\cE}}}\, \prod_{\gamma\in\Gamma}w_\gamma\,\, \, \text{  and  }\, \, \, \Xi^{\cO}(G)=\sum_{\Gamma\in {\mathcal G^{\cO}}}\, \prod_{\gamma\in\Gamma}w_\gamma\, .
\]
Note that $(1+\lam)^{|\cO|}\Xi^{\cE}(G)$ represents the contribution to $Z_G(\lam)$ from independent sets that are dominated by odd occupied vertices, and vice-versa.

We now show that a certain linear combination of the  partition functions $\Xi^{\cE}(G), \Xi^{\cO}(G)$ serves as a good approximation to $Z_G(\lam)$.

\begin{lemma}\label{lemhcapprox}
For $\lam>e^{11/\alpha}$, the polynomial 
\[
\tilde{Z}_G(\lam)=(1+\lam)^{|\cO|}\Xi^{\cE}(G)+(1+\lam)^{|\cE|}\Xi^{\cO}(G)
\]
is an $e^{-n}$-relative approximation to $Z_G(\lam)$.
\end{lemma}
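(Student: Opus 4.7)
The plan is to mirror the strategy used for the Potts model in Lemmas~\ref{lem:starapprox}--\ref{lempolymerpotts}, now with two ground states (all of $\cO$ occupied, and all of $\cE$ occupied) captured by the two polymer partition functions $\Xi^\cE(G)$ and $\Xi^\cO(G)$. The first step is to establish the polymer-to-independent-set identity
\[
(1+\lam)^{|\cO|}\Xi^\cE(G) \;=\; \sum_{\substack{S \subseteq \cE \\ S \text{ sparse}}} \lam^{|S|}(1+\lam)^{|\cO|-|\partial S|} \;=\; \sum_{\substack{I \in \cI(G) \\ I\cap\cE \text{ sparse}}} \lam^{|I|},
\]
where I call $S \subseteq \cE$ \emph{sparse} if all of its $G^2$-components are small. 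For the first equality, a compatible family $\Gamma \in \cG^\cE$ is exactly the $G^2$-component decomposition of $S = \bigcup_{\gamma \in \Gamma}\gamma$, and the compatibility condition $d_G(\gamma,\gamma') > 2$ makes the sets $\{\partial\gamma\}_{\gamma \in \Gamma}$ pairwise disjoint, so $\prod_\gamma w_\gamma = \lam^{|S|}(1+\lam)^{-|\partial S|}$. For the second equality, bipartiteness allows $I \cap \cO$ to range freely over subsets of $\cO \setminus \partial S$ once $I \cap \cE = S$ is fixed, yielding the factor $(1+\lam)^{|\cO|-|\partial S|}$. A symmetric identity holds for $(1+\lam)^{|\cE|}\Xi^\cO(G)$.

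Combining this with Lemma~\ref{lemsmall}, which says every $I \in \cI(G)$ has at least one small side, I would split each ``sparse'' sum into ``small'' plus ``sparse but large'' to write $(1+\lam)^{|\cO|}\Xi^\cE(G) = Z_\cE^\circ + E_\cE$ with $Z_\cE^\circ = \sum_{I: |I\cap\cE| \leq |\cE|/2}\lam^{|I|}$, and similarly $(1+\lam)^{|\cE|}\Xi^\cO(G) = Z_\cO^\circ + E_\cO$. Inclusion--exclusion on Lemma~\ref{lemsmall} gives $Z_G(\lam) = Z_\cE^\circ + Z_\cO^\circ - Z_\cap$ where $Z_\cap$ is the total weight of independent sets with \emph{both} sides small, so
\[
\tilde Z_G(\lam) - Z_G(\lam) \;=\; Z_\cap + E_\cE + E_\cO,
\]
and it suffices to bound each of these three terms by $\tfrac{1}{3}e^{-n}Z_G(\lam)$. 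The denominator $Z_G(\lam) \geq (1+\lam)^{\max(|\cO|,|\cE|)} \geq (1+\lam)^{n/2}$ will be used throughout.

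For $E_\cE$ (and $E_\cO$ symmetrically), componentwise expansion on a sparse $S$ gives $|\partial S| \geq (1+\alpha)|S|$, since each small component $\gamma$ satisfies $|\partial\gamma| \geq (1+\alpha)|\gamma|$ and the boundaries are disjoint by compatibility. Each summand is therefore at most $(1+\lam)^{|\cO|}(\lam/(1+\lam)^{1+\alpha})^{|S|} \leq (1+\lam)^{|\cO|}\lam^{-\alpha|S|}$, and the sum over $S \subseteq \cE$ with $|S| > |\cE|/2$ (using $\binom{|\cE|}{k} \leq 2^{|\cE|}$ together with $\lam^{-\alpha k} \leq \lam^{-\alpha|\cE|/2}$) yields $E_\cE \leq (1+\lam)^{|\cO|}(2\lam^{-\alpha/2})^{|\cE|}$. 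Dividing by $Z_G(\lam) \geq (1+\lam)^{|\cO|}$ and using the fact that bipartite $\alpha$-expansion forces $|\cE|=\Theta(n)$ (taking $|S|=|\cO|/2$ in the expansion definition gives $|\cE| \geq (1+\alpha)|\cO|/2$, hence $|\cE| \geq n(1+\alpha)/(3+\alpha)$), this ratio is far smaller than $e^{-n}$ for $\lam > e^{11/\alpha}$.

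The main obstacle is bounding the double-counting term $Z_\cap$, since it cannot be controlled by a single polymer-weight factor. The crucial observation is that applying expansion simultaneously to both $I \cap \cE$ and $I \cap \cO$ gives
\[
(1+\alpha)|I\cap\cE| + |I\cap\cO| \leq |\cO| \quad\text{and}\quad |I\cap\cE| + (1+\alpha)|I\cap\cO| \leq |\cE|,
\]
whose sum forces $(2+\alpha)|I| \leq n$, i.e., $|I| \leq n/(2+\alpha)$. Consequently $\lam^{|I|} \leq \lam^{n/(2+\alpha)}$, and combining with the trivial bound $2^n$ on the number of such $I$ and $Z_G(\lam) \geq (1+\lam)^{n/2}$ gives $Z_\cap / Z_G(\lam) \leq (2\lam^{-\alpha/(2(2+\alpha))})^n$, which is at most $e^{-n}$ precisely in the regime $\lam > e^{11/\alpha}$ (one checks $(2e)^{2(2+\alpha)/\alpha} < e^{11/\alpha}$ for $\alpha \leq 1$). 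This is the binding constraint that determines the threshold on $\lam$ in the lemma, and the bookkeeping to land on the exact constant $11$ is the main source of technical care.
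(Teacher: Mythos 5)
Your proof is correct and follows essentially the same route as the paper: both identify the error $\tilde Z_G(\lam) - Z_G(\lam)$ as the $\lam$-weight of the double-counted independent sets and control it via the key estimate $|I| < n/(2+\alpha)$ (obtained by applying expansion to the unoccupied neighbourhoods of both sides of $I$), together with the trivial lower bound $Z_G(\lam) \ge \lam^{n/2}$ (equivalently $(1+\lam)^{n/2}$). The only difference is minor bookkeeping: you split the error into $Z_\cap + E_\cE + E_\cO$ and bound each term separately, whereas the paper bounds the single sum over all sparse independent sets at once; your bound on $Z_\cap$ is precisely the paper's bound, and your separate treatment of $E_\cE, E_\cO$ turns out to be non-binding.
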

\begin{proof}
Let us call a set $A\subseteq \cE, \cO$ \emph{sparse} if its $G^2$-connected components are all small. Note that $\mathcal G^{\cE}$ ($\mathcal G^{\cO}$) is in one to one correspondence with sparse subsets of $\cE$ ($\cO$) and so
\begin{align}
\tilde{Z}_G(\lam)&=(1+\lam)^{|\cO|}\sum_{\Gamma\in {\mathcal G^{\cE}}}\, \prod_{\gamma\in\Gamma}\frac{\lam^{|\gamma|}}{(1+\lam)^{|\partial \gamma|}}+(1+\lam)^{|\cE|}\sum_{\Gamma\in {\mathcal G^{\cO}}}\, \prod_{\gamma\in\Gamma}\frac{\lam^{|\gamma|}}{(1+\lam)^{|\partial \gamma|}}\\
&=(1+\lam)^{|\cO|}\sum_{\text{sparse }A\subseteq \cE}\frac{\lam^{|A|}}{(1+\lam)^{|\partial A|}}+(1+\lam)^{|\cE|}\sum_{\text{sparse }A\subseteq \cO}\frac{\lam^{|A|}}{(1+\lam)^{|\partial A|}}\, .\label{eqsparsedouble}
\end{align}

Consider an independent $I\in \cI(G)$ for which $I\cap \cE=A$ where $A$ is some fixed subset of $\cE$. The possible intersections $I\cap\cO$ are then precisely the subsets of $\cO\backslash \partial A$. The contribution to $Z_G(\lam)$ from independent sets $I$ such that $I\cap \cE=A$ is therefore $\lam^{|A|}(1+\lam)^{|\cO|-|\partial A|}$. Similarly the contribution to $Z_G(\lam)$ from independent sets $I$ such that $I\cap \cO=B$ is $\lam^{|B|}(1+\lam)^{|\cE|-|\partial B|}$. Let us call an independent set $I$ sparse if both $I\cap\cE, I\cap\cO$ are sparse. Since by Lemma~\ref{lemsmall}, for any independent set $I$, at least one of $I\cap\cE$, $I\cap\cO$ is small (and therefore sparse), the sums in~\eqref{eqsparsedouble} contain the contribution to $Z_G(\lam)$ from all $I\in \cI(G)$ and double count the contribution from precisely the sparse independent sets i.e.
\[
\tilde{Z}_G(\lam)=Z_G(\lam)+\sum_{I \text{ sparse}}\lam^{|I|}\, .
\] 

Let $I$ be a sparse independent set. Since $I\cap \cE$, $I\cap \cO$ are composed of small $G^2$-connected components it follows that $|\partial(I\cap\cE)|>(1+\alpha)|I\cap\cE|$ and $|\partial(I\cap\cO)|>(1+\alpha)|I\cap\cO|$. 
Since each vertex in $\partial(I\cap\cE)$ and $\partial(I\cap\cO)$ must be unoccupied, it follows that 
\[
|I\cap \cE|<|\cE|-(1+\alpha)|I\cap\cO|
\]

and 

\[
|I\cap \cO|<|\cO|-(1+\alpha)|I\cap\cE|\, .
\]

By summing these two inequalities we conclude that
\[
|I|<\frac{n}{2+\alpha}\, .
\]

It follows that
\[
\sum_{I \text{ sparse}}\lam^{|I|}\leq2^n\lam^{\tfrac{n}{2+\alpha}}\, .
\]
Using the crude bound $Z_G(\lam)\geq \lam^{n/2}$ we have
\[
\left| 1-\frac{\tilde Z_G(\lam)}{Z_G(\lam)}\right|<2^n\frac{\lam^{\tfrac{n}{2+\alpha}}}{\lam^{n/2}}= 2^n\lam^{-\tfrac{\alpha}{4+2\alpha}n}\le e^{-n}\, .
\]

\end{proof}

In order to approximate $Z_G(\lam)$ we approximate $\Xi^{\cE}(G)$ and $\Xi^{\cO}(G)$ separately.
We focus on approximating $\Xi^{\cE}(G)$, noting that the approximation algorithm for $\Xi^{\cO}(G)$ will be identical up to a change of notation.

We will verify conditions (i)--(iii) of Theorem~\ref{thmPolymerCount} for the polymer model defined on $\cC^{\cE}$ in order to obtain an FPTAS for $\Xi^{\cE}(G)$. Verifying condition (i) is essentially immediate. Given $ \gamma\subset V(G)$, determining  whether $\gamma \in \cC^\cE$ amounts to checking whether $\gamma\subset\cE$  and whether $G[\gamma^+]$ is connected. This can be done in $O(\Delta |\gamma|)$ time by a depth-first search algorithm. Computing $|\gamma|, |\partial \gamma|$ and thus $w_{\gamma}$ can also clearly be done in $O(\Delta |\gamma|)$ time. 

We now turn our attention verifying conditions (ii) and (iii) for an appropriate choice of function $g$.

\subsection{Verifying the Koteck\'y-Preiss condition}\label{subsecVKP}

We choose 
\[ g(\gamma)= |\gamma|   \,.\]
Condition (ii) of Theorem~\ref{thmPolymerCount} holds trivially with $\rho = 1$. 
It remains to show the Koteck\'y-Preiss condition holds.  That is, 
\begin{align*}\label{eqKPsimple}
\sum_{\gamma':d(\gamma', \gamma)\leq1} w_{\gamma'} e^{g(\gamma') +|  \gamma'|}\le |  \gamma | 
\end{align*}
for all $\gamma\in \cC^\cE$.
Note that since $G$ is a bipartite $\alpha$-expander, for $\gamma\in \cC^\cE$ we have
\[
w_\gamma=\frac{\lam^{|\gamma|}}{(1+\lam)^{|\partial \gamma|}}\leq \frac{\lam^{|\gamma|}}{(1+\lam)^{(1+\alpha)|\gamma|}}\leq (1+\lam)^{-\alpha|\gamma|}\, .
\]
It thus suffices to show that 

\[
\sum_{\gamma':d(\gamma', \gamma)\leq1}(1+ \lam)^{-\alpha|\gamma'|} \cdot e^{2 | \gamma'|}\le |  \gamma | \,.
\]

If we could show that for each $v\in V(G)$

\begin{equation}\label{eq:vertexhc}
\sum_{\gamma':\gamma'\ni v }(1+ \lam)^{-\alpha|\gamma'|} \cdot e^{2|\gamma'|}\le \frac{1}{\Delta^2}\, ,
\end{equation}
then by summing this inequality over all $v\in \cE$ at distance at most $2$ from $\gam$ (noting that there are at most $(\Delta(\Delta-1)+1) |\gamma|\le\Delta^2|\gamma|$ such vertices), we would be done. 

In order to establish \eqref{eq:vertexhc},
first observe that the graph $G^2$ has maximum degree at most $\Delta^2$
and so by Lemma~\ref{lemConCount} the number of $G^2$-connected
sets of size $t$ containing vertex $v$ is at most $(e\Delta^2)^t$.
We thus have 
\[
\sum_{\gamma':\gamma'\ni v } (1+\lam)^{-\alpha|\gamma'|} \cdot e^{2|\gamma'|}\leq \sum_{t=1}^\infty (e^3\Delta^2(1+\lam)^{-\alpha})^t\leq\frac{1}{\Delta^2}
\]
provided $\lam>(2e^3\Delta^4)^{1/\alpha}$.

\subsection{Proof of Theorem~\ref{thmhard-core}}
\label{secHCfinish}

We consider two cases separately.  If $\eps < 2^{-n}$, then we proceed by brute force, checking all subsets of $V(G)$ to see if they are independent. In this way we can calculate the partition function $Z_G(\lam)$ exactly in time $O(n2^n)$ and therefore count and sample in time $n/\eps$. 

Now we assume $\eps > 2^{-n}$ and take 
\begin{align}
\label{eqlambound}
\lam > \max\left\{(2e^3\Delta^4)^{1/\alpha},e^{11/\alpha}\right\} \,.
\end{align}
Assume without loss of generality that $|\cO|\ge|\cE|$. Note that by the definition of a bipartite expander we also have $|\cE|\ge |\cO|/2$. 

Using the FPTAS for $ \Xi^{\cE}(G)$ given by Theorem~\ref{thmPolymerCount}, we may find $Z^\cE_{\text{alg}}$, an $\eps/2$-relative approximation to $ (1+\lam)^{|\cO|}\Xi^{\cE}(G)$ in $(n/\eps)^{O(\log \Delta)}$-time.

In identical fashion we may find $Z^\cO_{\text{alg}}$, an $\eps/2$-relative approximation to $(1+\lam)^{|\cE|}\Xi^{\cO}(G)$ in $(n/\eps)^{O(\log \Delta)}$-time. It follows that $Z^{\text{alg}}:=Z^\cO_{\text{alg}}+Z^\cE_{\text{alg}}$ is an $\eps/2$-relative approximation to $\tilde Z_G(\lam)$ (as defined in Lemma~\ref{lemhcapprox}). By Lemma~\ref{lemhcapprox}, $\tilde Z_G(\lam)$ is an $\eps/2$-relative approximation to $Z_G(\lam)$ and so $Z^{\text{alg}}$ is an $\eps$-relative approximation to $Z_G(\lam)$ as required.

The proof  for the sampling algorithm is much like that for the Potts model except that we lack the exact symmetry between ground states.  Consider the distribution $\hat \mu$ on  $\cI(G)$ defined as follows. First choose $\cE$ or $\cO$ with probability proportional to $(1+\lam)^{|\cO|}\Xi^{\cE}(G)$ and  $(1+\lam)^{|\cE|}\Xi^{\cO}(G) $ respectively. Then, supposing we chose $\cO$, sample $\Gamma$  from the measure
\begin{align*}
\nu^\cO_{G}(\Gamma) &= \frac{ \prod_{\gamma \in \Gamma} w_\gamma }{ \Xi^\cO(G) }\, .   
\end{align*}
We then set 
\[
I=J\cup\bigcup_{ \gamma\in\Gamma} \gamma
\]
where we sample $J$ from the set $\cE\backslash \bigcup_{ \gamma\in\Gamma} \partial \gamma$ by including each vertex independently with probability $\frac{\lam}{1+\lam}$.
The distribution of $I$ is $\hat \mu$.

By Lemma~\ref{lemhcapprox} we have
\begin{align*}
\| \hat \mu - \mu_{G,\lam} \|_{TV} = O( e^{-n} ) \,
\end{align*}
 and so to obtain an $\eps$-approximate sample from $\mu_{G,\lam}$ efficiently, it suffices to obtain an $\eps/2$-approximate sample from $\hat \mu$.  We do this as follows:
 \begin{enumerate}
\item Compute $Z^\cE_{\text{alg}}$, an $\eps/8$-relative approximation to $ (1+\lam)^{|\cO|}\Xi^{\cE}(G)$, and $Z^\cO_{\text{alg}}$, an $\eps/8$-relative approximation to $ (1+\lam)^{|\cE|}\Xi^{\cO}(G)$. 
\item Choose $\cO$ with probability $\frac{Z^\cO_{\text{alg}}  }{Z^\cE_{\text{alg}}+Z^\cO_{\text{alg}}  }$ and $\cE$ otherwise.
\item Then, supposing  we chose $\cO$, take $\Gamma$, an $\eps/4$ approximate sample from  $\nu^\cO_{G}$, and let
\[
I=J\cup\bigcup_{ \gamma\in\Gamma} \gamma
\]
where we sample $J$ from the set $\cE\backslash \bigcup_{ \gamma\in\Gamma} \partial \gamma$ by including each vertex independently with probability $\frac{\lam}{1+\lam}$.
\end{enumerate}
The resulting distribution on independent sets is within $\eps/2$ total variation distance of $\hat \mu$, and we can obtain the sample in time polynomial in $n$ and $1/\eps$: the computation of $Z^\cE_{\text{alg}}$ and $Z^\cO_{\text{alg}}$ is done as above, and the approximate sample from  $\nu^\cO_{G}$ or  $\nu^\cE_{G}$ is obtained efficiently by applying Theorem~\ref{thmPolymerSample}.

\qed

\subsection{Proof of Theorem~\ref{thmHCrandombip}}
\label{secHCrandom}

To prove Theorem~\ref{thmHCrandombip} we need a result on the expansion of random regular bipartite graphs. In order to state the result we first generalize our notion of expansion slightly.

\begin{defn}\label{defbipexp}
For $\rho >0$ and $\sigma \in (0,1)$, a bipartite graph $G=(\cO,\cE, E)$ is a bipartite $(\sigma,\rho)$-expander if $|\partial S|\geq \rho |S|$ for all $S\subseteq \cO$ with $|S|\le \sigma |\cO|$ and all $S\subseteq \cE$ with $|S|\le \sigma |\cE|$.
\end{defn}

Note that our previous definition of a bipartite $\alpha$-expander is the same notion as a $(1/2, 1+\alpha)$-expander.

\begin{theorem}[Bassalygo \cite{bassalygo1981asymptotically}] \label{thm:bipexpand}
Almost every $\Delta$-regular bipartite graph is an $(\sigma, \rho)$-expander provided
\[
\Delta>\frac{H(\sigma)+H(\sigma\rho)}{H(\sigma)-\sigma\rho H(1/\rho)}\, ,
\]
where $H(p)=-p\log_2(p)-(1-p)\log_2(1-p)$ is the binary entropy function. 
\end{theorem}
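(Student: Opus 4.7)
The theorem is attributed to Bassalygo, so presumably the authors will simply cite it rather than prove it in the paper. Still, the plan for a direct proof is a standard first-moment calculation in the configuration model, which I sketch below.

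First, I would work in the pairing/configuration model for $\Delta$-regular bipartite graphs with parts $\cO, \cE$ of size $m$: give each vertex $\Delta$ half-edges and take a uniform matching between odd and even half-edges. This model is contiguous with the uniform distribution on $\cG^{\text{bip}}(n,\Delta)$ (with $n = 2m$), so it suffices to show the desired expansion holds with probability $1 - o(1)$ here. By symmetry between $\cO$ and $\cE$, it also suffices to control bad sets on one side.

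Second, I would union bound over bad pairs. For each $s \le \sigma m$, let $N_s$ be the number of subsets $S \subseteq \cO$ with $|S|=s$ and $|\partial S| \le t := \lfloor \rho s \rfloor$. Writing $N_s$ as a sum over pairs $(S,T)$ with $|S|=s$, $|T|=t$, and $\partial S \subseteq T$, the key probability in the configuration model is that all $s\Delta$ half-edges at $S$ pair to the $t\Delta$ half-edges at $T$, which equals $\binom{t\Delta}{s\Delta}/\binom{m\Delta}{s\Delta}$. Therefore
\[
\E N_s \le \binom{m}{s}\binom{m}{t}\cdot\frac{\binom{t\Delta}{s\Delta}}{\binom{m\Delta}{s\Delta}}.
\]

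Third, I would apply the entropy estimate $\binom{n}{k} = 2^{n H(k/n) + O(\log n)}$ to each binomial. Writing $x = s/m$, the logarithm of $\E N_s$ is (up to lower-order terms)
\[
m\bigl(H(x) + H(\rho x)\bigr) - \Delta m\bigl(H(x) - \rho x\, H(1/\rho)\bigr),
\]
where the second term comes from $\binom{t\Delta}{s\Delta}/\binom{m\Delta}{s\Delta} = 2^{\Delta[t H(s/t) - m H(s/m)] + O(\log m)}$ together with $s/t = 1/\rho$. Under the hypothesis on $\Delta$, this exponent is negative in $m$ at $x = \sigma$; summing over $s \le \sigma m$ and applying Markov's inequality concludes that almost surely no bad set exists on the odd side, and symmetry handles the even side.

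The main obstacle is not any single step but the uniformity over $x \in (0,\sigma]$ in step three: the stated bound on $\Delta$ only asserts negativity at $x=\sigma$, so one must verify that the ratio $F(x) := (H(x)+H(\rho x))/(H(x) - \rho x H(1/\rho))$ is maximized on $(0,\sigma]$ at the endpoint $x = \sigma$. A short analysis shows $F(x) \to 1+\rho$ as $x \to 0$ (using $H(x) \sim -x\log_2 x$) which is typically dominated by $F(\sigma)$ for the parameter regimes of interest, and monotonicity/convexity of $F$ in between can be verified by direct differentiation. Once this is in hand the exponent is bounded above by $-\delta m$ uniformly, and the polynomial number of values of $s$ is absorbed harmlessly.
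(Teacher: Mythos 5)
You are right that the paper simply invokes this as a black box with a citation to Bassalygo~\cite{bassalygo1981asymptotically}; there is no proof in the paper to compare against. Your first-moment sketch in the configuration model is the standard route and the skeleton is correct: the union bound over pairs $(S,T)$ with $\p S\subseteq T$, the pairing probability $\binom{t\Delta}{s\Delta}/\binom{m\Delta}{s\Delta}$, the passage to entropy exponents via $s/m=x$, $t/m=\rho x$, $s/t = 1/\rho$, and contiguity of the configuration model with the uniform simple $\Delta$-regular bipartite graph are all as one would do it.

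Two caveats worth making explicit. First, the uniformity issue you flag is genuine and is \emph{the} nontrivial analytic content: the hypothesis asserts $\Delta > F(\sigma)$ only at the endpoint, so a monotonicity or convexity argument on $G(x) := H(x) + H(\rho x) - \Delta\bigl(H(x)-\rho x H(1/\rho)\bigr)$ over $(0,\sigma]$ is required; you correctly identify $F(0^+)=1+\rho$, and since the theorem is only nonvacuous when $\Delta > 1+\rho$, $G(0)=0$ with $G'(0^+)=-\infty$, so the delicate part is ruling out a spurious sign change in the interior. Second, the ``exponent $\le -\delta m$ uniformly'' claim fails near $x=0$: for $s=O(1)$ the exponent is $o(m)$, and the crude bound only gives $\E N_s = O\bigl(m^{\,1+\lfloor\rho s\rfloor-\Delta s}\bigr)$, so these finitely many small $s$ need a separate (easy) check --- or one conditions on simplicity, in which case tiny sets automatically expand when $\rho<\Delta$. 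Neither gap is fatal, but a complete proof must address both; as written the argument is a correct sketch rather than a proof.
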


We will take advantage of the fact that small sets in the random regular bipartite graph expand by a lot.
\begin{lemma}\label{lem:bassexp}
There exists $\Delta_0$ such that for all $\Delta\ge\Delta_0$, almost every $\Delta$-regular bipartite graph is a $\left(\tfrac{4\log\Delta}{\Delta}, \tfrac{\Delta}{4\log\Delta}-\tfrac{1}{2}\right)$-expander.
\end{lemma}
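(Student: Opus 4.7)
The plan is to apply Bassalygo's theorem (Theorem~\ref{thm:bipexpand}) with the specific choices $\sigma = 4\log\Delta/\Delta$ and $\rho = \Delta/(4\log\Delta) - 1/2$, and verify that the ratio on the right-hand side of the stated inequality is bounded above by a quantity strictly less than $\Delta$ for all sufficiently large $\Delta$. The key observation driving the calculation is that these choices give $\sigma \rho = 1 - 2\log\Delta/\Delta$, which is very close to $1$, while $\sigma$ and $1/\rho$ are both of order $\log\Delta/\Delta$.

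For the numerator I would use $H(x) = x\log_2(1/x)(1+o(1))$ as $x \to 0$ together with the symmetry $H(1-x) = H(x)$. These yield $H(\sigma) \sim \tfrac{4(\log\Delta)^2}{\Delta \log 2}$ and $H(\sigma\rho) = H(2\log\Delta/\Delta) \sim \tfrac{2(\log\Delta)^2}{\Delta \log 2}$, so $H(\sigma) + H(\sigma\rho) \sim \tfrac{6(\log\Delta)^2}{\Delta \log 2}$.

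The delicate step is the denominator $H(\sigma) - \sigma\rho H(1/\rho)$, where the leading asymptotics of the two terms coincide. Writing $L = \log\Delta$ (natural log) and Taylor expanding in the small quantities $\sigma$ and $1/\rho$,
\[
(\log 2)\,H(\sigma) = -\sigma\log\sigma + \sigma - \tfrac{\sigma^2}{2} + O(\sigma^3),
\]
\[
(\log 2)\,\sigma\rho H(1/\rho) = \sigma\log\rho + \sigma - \tfrac{\sigma}{2\rho} + O(\sigma/\rho^2),
\]
so
\[
(\log 2)\bigl[H(\sigma) - \sigma\rho H(1/\rho)\bigr] = -\sigma\log(\sigma\rho) - \tfrac{\sigma^2}{2} + \tfrac{\sigma}{2\rho} + \text{lower order}.
\]
Substituting $\sigma\rho = 1 - 2L/\Delta$ and expanding $\log(1-2L/\Delta)$, I find $-\sigma\log(\sigma\rho) = 8L^2/\Delta^2 + O(L^3/\Delta^3)$, $\sigma^2/2 = 8L^2/\Delta^2$, and $\sigma/(2\rho) = 8L^2/\Delta^2 + O(L^3/\Delta^3)$. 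The first two cancel exactly at order $L^2/\Delta^2$, leaving the $\sigma/(2\rho)$ contribution, so $H(\sigma) - \sigma\rho H(1/\rho) = \tfrac{8(\log\Delta)^2}{\Delta^2\log 2}(1+o(1))$.

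Combining, the ratio appearing in Bassalygo's condition is asymptotically $\tfrac{6L^2/(\Delta\log 2)}{8L^2/(\Delta^2\log 2)} = \tfrac{3\Delta}{4}(1+o(1))$, strictly below $\Delta$ for all $\Delta$ larger than some absolute constant $\Delta_0$, which yields the lemma. The main obstacle is precisely this cancellation bookkeeping in the denominator: the $O(\sigma)$ contributions to $H(\sigma)$ and $\sigma\rho H(1/\rho)$ agree, and among the $O(L^2/\Delta^2)$ subleading terms there is further cancellation between $-\sigma\log(\sigma\rho)$ and $\sigma^2/2$, so one must identify and retain the $\sigma/(2\rho)$ term with the correct sign and magnitude to confirm the asserted $\Theta((\log\Delta)^2/\Delta^2)$ lower bound.
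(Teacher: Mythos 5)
Your proposal is correct and follows exactly the paper's approach: apply Bassalygo's theorem with $\sigma = 4\log\Delta/\Delta$ and $\rho = \Delta/(4\log\Delta) - 1/2$, then Taylor-expand the binary entropy to show the right-hand side of the condition is $\tfrac{3\Delta}{4}(1+o(1))$. Your careful handling of the cancellations in the denominator $H(\sigma)-\sigma\rho H(1/\rho)$ reproduces the paper's computation of the asymptotic $\tfrac{8\log^2\Delta}{\Delta^2\log 2}$ including the correct identification of which subleading terms survive.
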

\begin{proof}
By Theorem~\ref{thm:bipexpand} it suffices to verify that 
\begin{align}\label{eqdeltabass}
\Delta>\frac{H(\sigma)+H(\sigma\rho)}{H(\sigma)-\sigma\rho H(1/\rho)}\, ,
\end{align}
where $\sigma=4\log(\Delta)/\Delta$ and $\rho=1/\sigma-1/2$. Using the fact 
\begin{align}\label{eqHtaylor}
H(x)=\frac{x\log(1/x)+x-x^2/2}{\log 2}+O(x^3)\, ,
\end{align}
(noting that this expansion is also valid for $H(1-x)$ by symmetry of the entropy function)
 we have 
\[
H(\sigma)+H(\sigma\rho)= \frac{6}{\log 2}\cdot\frac{\log^2\Delta}{\Delta}+O\left(\frac{\log \Delta}{\Delta}\right)\, 
\]
and
\[
H(\sigma)-\sigma\rho H(1/\rho)=\frac{8}{\log 2}\cdot\frac{\log^{2} \Delta}{\Delta^2}+O\left(\frac{\log^3 \Delta}{\Delta^3}\right)\, .
\]
It follows that there exists $\Delta_0$ such that 
\eqref{eqdeltabass} holds for all $\Delta\geq\Delta_0$.
\end{proof}

Henceforth we will assume that $G=(\cO, \cE, E)$ is a $\Delta$-regular $\left(\tfrac{4\log\Delta}{\Delta}, \tfrac{\Delta}{4\log\Delta}-\tfrac{1}{2}\right)$-expander on $n=2m$ vertices. This strong expansion condition allows us to prove a strengthened version of Lemma~\ref{lemsmall}. Let us update our notion of a small set and say that a set $S\subseteq V(G)$ is \emph{tiny} if $|S|\leq \tfrac{4\log\Delta}{\Delta} m$.

\begin{lemma}\label{lemsmallrandom}
Let $I\in \cI(G)$, then at least one of the sets $I\cap\cO, I\cap\cE$ is tiny.
\end{lemma}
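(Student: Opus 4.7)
The plan is a direct pigeonhole argument using the strong $(\sigma,\rho)$-expansion from Lemma~\ref{lem:bassexp} with $\sigma = \tfrac{4\log\Delta}{\Delta}$ and $\rho = \tfrac{\Delta}{4\log\Delta} - \tfrac{1}{2}$. Suppose towards a contradiction that neither $I \cap \cO$ nor $I \cap \cE$ is tiny, so both contain strictly more than $\sigma m$ vertices. I can then choose an arbitrary subset $A' \subseteq I \cap \cO$ of size $t = \lfloor \sigma m \rfloor$, which satisfies $|A'| \leq \sigma |\cO|$. Applying the $(\sigma,\rho)$-expansion hypothesis to $A'$ gives $|\partial A'| \geq \rho t$.

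Because $A' \subseteq I$ and $I$ is independent, every vertex of $\partial A'$ must be unoccupied, and since $\partial A' \subseteq \cE$, this forces
\[
|I \cap \cE| \leq m - |\partial A'| \leq m - \rho t.
\]
A direct calculation gives $\sigma \rho = 1 - \tfrac{2\log\Delta}{\Delta}$, so $\rho t \geq \rho \sigma m - \rho$, and therefore
\[
|I \cap \cE| \leq \tfrac{2\log\Delta}{\Delta}\, m + \rho.
\]
This upper bound is at most $\sigma m = \tfrac{4\log\Delta}{\Delta}\, m$ once $m$ exceeds an explicit constant depending only on $\Delta$ (to absorb the additive $\rho$), a condition that holds automatically in the asymptotic regime $n \to \infty$ implicit in the statement. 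This contradicts the assumption $|I \cap \cE| > \sigma m$, completing the proof.

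The argument is essentially immediate once the strong expansion parameters of Lemma~\ref{lem:bassexp} are in hand; the only subtlety is the floor correction from $\lfloor \sigma m \rfloor$, which produces the additive $\rho$ term and is harmless asymptotically. No real obstacle is anticipated here — the heavy lifting has already been carried out in establishing the $(\sigma,\rho)$-expansion for almost every $\Delta$-regular bipartite graph.
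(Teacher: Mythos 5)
Your proposal is correct and follows essentially the same line as the paper: pick a subset of size roughly $\sigma m = \tfrac{4\log\Delta}{\Delta}m$ from whichever side of $I$ is assumed not tiny, apply the $(\sigma,\rho)$-expansion to it, and observe that the resulting large boundary forces the other side to be small. The paper tacitly ignores the floor correction (it obtains the cleaner bound $|I\cap\cO|\le \tfrac{2\log\Delta}{\Delta}m$, which has ample slack against the $\tfrac{4\log\Delta}{\Delta}m$ threshold), while you make the $\lfloor\sigma m\rfloor$ issue explicit and absorb the additive $\rho$ asymptotically; this is a harmless refinement, not a different argument.
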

\begin{proof}
Suppose that $|I\cap\cE|\ge \frac{4\log\Delta}{\Delta}m$, then by the expansion property

\[
|\partial (I\cap\cE)|\ge\left(1-\frac{2\log\Delta}{\Delta}\right)m
\]
and so $|I\cap\cO|\le\tfrac{2\log\Delta}{\Delta}m$. 
\end{proof}

We define two polymer models in identical fashion to Section~\ref{subsecapprox}:  An even (odd) \textit{polymer} is any \emph{tiny} $G^2$-connected set $\gamma$ which lies entirely in $\cE$ ($\cO$). We say two even (odd) polymers $\gamma_1, \gamma_2$ are \textit{compatible} if $d_G(\gamma_1, \gamma_2) >2$ and for each polymer $\gamma$ we assign a weight
\[
w_\gamma:=\frac{\lam^{|\gamma|}}{(1+\lam)^{|\partial \gamma|}}\, .
\]

We let $\cC^{\cE}$ denote the set of all even polymers of $G$ and let $\mathcal G^{\cE}$ denote the family of all sets of mutually compatible polymers from $\cC^\cE$. We define $\cC^{\cO}$, $\mathcal G^{\cO}$ similarly. Let us again denote the partition functions associated to these polymer models by $\Xi^{\cE}(G), \Xi^{\cO}(G)$ respectively. As before a linear combination of these two partition functions serves a good approximation to $Z_G(\lam)$.

\begin{lemma}\label{lemhcapproxrandom}
There exists $\Delta_0$ such that for $\Delta\ge\Delta_0$, $\lam>20\tfrac{\log^2\Delta}{\Delta}$, the polynomial 
\[
\tilde{Z}_G(\lam)=(1+\lam)^{m}\left(\Xi^{\cE}(G)+\Xi^{\cO}(G)\right)
\]
is a $(1+\lam)^{-n/4}$-relative approximation to $Z_G(\lam)$.
\end{lemma}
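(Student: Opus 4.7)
The plan is to mirror the proof of Lemma~\ref{lemhcapprox}, replacing ``small'' with ``tiny'' throughout and exploiting the much stronger expansion now available.  Unfolding the polymer weights exactly as there, each configuration in $\mathcal G^{\cE}$ is in bijection with a subset $A\subseteq\cE$ whose $G^2$-connected components are all tiny (call these \emph{tiny-sparse}), and for each such $A$ the factor $(1+\lam)^{m-|\partial A|}$ accounts for the free choice of $I\cap\cO$ inside $\cO\setminus\partial A$.  Summing over $A\subseteq\cE$ and $B\subseteq\cO$ gives
\[
\tilde Z_G(\lam)\;=\;\sum_{\substack{I\in\cI(G):\\ I\cap\cE\text{ tiny-sparse}}}\lam^{|I|}\;+\;\sum_{\substack{I\in\cI(G):\\ I\cap\cO\text{ tiny-sparse}}}\lam^{|I|}.
\]
By Lemma~\ref{lemsmallrandom} at least one of $I\cap\cO, I\cap\cE$ is tiny for every $I\in\cI(G)$, and a tiny set is automatically tiny-sparse, so every independent set is counted at least once; hence $\tilde Z_G(\lam)-Z_G(\lam)=\sum_{I\in\cS}\lam^{|I|}$, where $\cS$ is the family of $I$ whose even \emph{and} odd parts are both tiny-sparse.

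Next I would bound $|I|$ for $I\in\cS$.  The crucial observation is that two distinct $G^2$-connected components of a subset of $\cE$ have \emph{disjoint} neighborhoods in $\cO$ (two even vertices sharing an odd neighbor would be $G^2$-adjacent), so expansion is additive over components.  Applying the $(\sigma,\rho)$-expansion from Lemma~\ref{lem:bassexp} with $\rho=\Delta/(4\log\Delta)-1/2$ to each tiny component separately yields $|\partial(I\cap\cE)|\ge\rho|I\cap\cE|$ and likewise $|\partial(I\cap\cO)|\ge\rho|I\cap\cO|$.  Independence forces $|I\cap\cO|\le m-|\partial(I\cap\cE)|$ and symmetrically; adding and solving gives $|I|\le 2m/(1+\rho)\le K:=4n\log\Delta/\Delta$ for $\Delta\ge\Delta_0$.

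The final step is a routine volume estimate.  The number of subsets of $V(G)$ of size at most $K$ is at most $(K+1)\binom{n}{K}\le (K+1)(en/K)^K$ (since $K\le n/2$), and a small case analysis on whether $\binom{n}{j}\lam^j$ peaks at $j=0$ or $j=K$ gives
\[
\sum_{I\in\cS}\lam^{|I|}\;\le\;(K+1)\bigl(en\max(1,\lam)/K\bigr)^K.
\]
Combined with the crude lower bound $Z_G(\lam)\ge(1+\lam)^m$ (from restricting to $I\subseteq\cE$), proving the claimed $(1+\lam)^{-n/4}$ relative approximation reduces to checking
\[
\tfrac{4\log\Delta}{\Delta}\,\log\!\bigl(e\Delta\max(1,\lam)/(4\log\Delta)\bigr)\;\le\;\tfrac14\log(1+\lam)\;-\;o(1).
\]
The main obstacle I anticipate is making this single inequality hold uniformly across both the $\lam<1$ regime (where one uses $\log(1+\lam)\ge\lam/2$ on the right) and the $\lam\ge 1$ regime (where the $\log\lam$ term on the left must be absorbed into $\log\lam$ on the right).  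Both are comfortable for $\lam>20\log^2\Delta/\Delta$ and $\Delta$ large, because $K/n=4\log\Delta/\Delta$ is tiny: at the threshold $\lam=20\log^2\Delta/\Delta$ the left side is $\sim 4\log^2\Delta/\Delta$ while the right side is $\sim 5\log^2\Delta/\Delta$, which leaves a safe constant margin provided one is careful about the lower-order $\log\log\Delta$ and $\log(K+1)/n$ contributions.
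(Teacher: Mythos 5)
Your proof is correct and follows essentially the same route as the paper's: unfold the polymer partition functions into a sum over independent sets whose even or odd part is tiny-sparse, observe the error is supported on $I$ whose halves are both tiny-sparse, bound $|I|\le 4n\log\Delta/\Delta$ using component-wise expansion and the two independence inequalities, and finish with a crude volume estimate against $Z_G(\lam)\ge(1+\lam)^{n/2}$. The only substantive differences are cosmetic --- you use $(K+1)\binom{n}{K}\max(1,\lam)^K$ where the paper uses the entropy bound $2^{H(c)n}(1+\lam)^{cn}$ --- and you actually make explicit a step the paper leaves implicit, namely that distinct $G^2$-components of $I\cap\cE$ have disjoint neighborhoods in $\cO$, which is what justifies summing the expansion inequality over components.
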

\begin{proof}
As in the proof of Lemma~\ref{lemhcapprox}, we have 
\begin{equation}\label{eqsparsesum}
\tilde{Z}_G(\lam)=Z_G(\lam)+\sum_{I \text{ sparse}}\lam^{|I|}\, ,
\end{equation}
where we call an independent set $I$ \emph{sparse} if the $G^2$-connected components of $I\cap\cE$ and $I\cap\cO$ are all tiny.

Let $I$ be a sparse independent set. Since $I\cap \cE$, $I\cap \cO$ are composed of tiny $G^2$-connected components it follows that $|\partial (I\cap\cE)|> \left(\tfrac{\Delta}{4\log\Delta}-\tfrac{1}{2}\right)|I\cap\cE|$ and $|\partial (I\cap\cO)|> \left(\tfrac{\Delta}{4\log\Delta}-\tfrac{1}{2}\right)|I\cap\cO|$. 
Since each element of $\partial(I\cap\cE)$ and $\partial(I\cap\cO)$ must be unoccupied, it follows that 
\[
|I\cap \cE|<m- \left(\frac{\Delta}{4\log\Delta}-\frac{1}{2}\right)|I\cap\cO|
\]

and 

\[
|I\cap \cO|<m- \left(\frac{\Delta}{4\log\Delta}-\frac{1}{2}\right)|I\cap\cE|\, .
\]

By summing these two inequalities we conclude that
\begin{equation}\label{eqsparse}
|I|<\frac{n}{\tfrac{\Delta}{4\log\Delta}+\tfrac{1}{2}}\, .
\end{equation}

We use the following well-known estimate
\begin{equation}\label{eq:entropy}
\binom{n}{\le cn}:=\sum_{i=0}^{\lfloor cn \rfloor}\binom{n}{i}\le2^{H(c)n}\text{ \, \, for }c\le\tfrac{1}{2}\, .
\end{equation}
It follows from \eqref{eqsparse} and \eqref{eqHtaylor} that for $\Delta$ sufficiently large 

\begin{eqnarray*}
\sum_{I \text{ sparse}}\lam^{|I|}&\le& \binom{n}{\le\tfrac{4\log\Delta}{\Delta}n}(1+\lam)^{\tfrac{4\log\Delta}{\Delta}n}\\
&\le&\exp\left\{ \log2\cdot H\left(\frac{4\log\Delta}{\Delta}\right)n+\frac{4\log\Delta}{\Delta}\log(1+\lam)n\right\}\\
&\le& \exp\left\{ 5\log2\cdot \frac{\log^2\Delta}{\Delta}n+\frac{4\log\Delta}{\Delta}\log(1+\lam)n\right\}\\
&\le& \exp\left\{\log(1+\lam)n/4\right\}\\
&=&(1+\lam)^{n/4}\, .
\end{eqnarray*}

It follows from \eqref{eqsparsesum} and the crude bound $Z^\ast_\cO(\lam)\ge(1+\lam)^{n/2}$ that
\[
\left| 1-\frac{\tilde Z_G(\lam)}{Z_G(\lam)}\right|<(1+\lam)^{-n/4}\, .
\]

\end{proof}

\subsection{Verifying the Koteck\'y-Preiss condition}

We will verify condition~\eqref{eqKPsimple} with the function $g(\gamma) = | \gamma| \frac{\Delta}{10 \log \Delta} \log (1+ \lam)$.   In particular, we will show 
\begin{align*}
\sum_{\gamma \ni v} w_{\gamma} e^{|\gamma| +g (\gamma)} \le  \frac{1}{\Delta^2} \,.
\end{align*}
We have 
\[
w_\gamma=\frac{\lam^{|\gamma|}}{(1+\lam)^{|\partial \gamma|}}\le {\lam^{|\gamma|}}{(1+\lam)^{-\tfrac{\Delta }{5\log \Delta}|\gam|}}\, ,
\]
and so proceeding as in Section~\ref{subsecVKP} we have
\begin{align*}
\sum_{\gamma \ni v} w_{\gamma} e^{|\gamma| +g (\gamma)}  &\le \sum_{k \ge 1} \exp \left[k \left( 2 \log \Delta +2  +  \frac{\Delta}{10 \log \Delta} \log (1+ \lam)  + \log \lam - \frac{\Delta}{5 \log \Delta} \log (1+ \lam)  \right)   \right] \\
&= \sum_{k \ge 1} \exp \left[k \left( 2 \log \Delta +2  -  \frac{\Delta}{10 \log \Delta} \log (1+ \lam)  + \log \lam  \right)   \right]
\end{align*}
which, for $\Delta$ large enough and $\lam \ge \frac{50 \log^2 \Delta}{\Delta}$, is at most $1/\Delta^2$. We may now finish as in Section~\ref{secHCfinish}, noting that this time  we can take $\rho=\Delta/(10 \log \Delta)$, $c_1=1$, and $c_2=0$ in the application of Theorem~\ref{thmPolymerCount} so that the run-time of the approximate counting algorithm is
$(n/\eps)^{1+O(\log^2 \Delta/\Delta)}$. This completes the proof of Theorem~\ref{thmHCrandombip}.

\section{Proper colorings}
\label{secColorProof}
In this section we prove Theorem~\ref{thmColor}. 
Let $G\in\cG^{\text{bip}}(n , \Delta)$. 
As in the proof of Theorem~\ref{thmHCrandombip},
the only property we require of $G$ is that it is a
$\left(\tfrac{4\log\Delta}{\Delta}, \tfrac{\Delta}{4\log\Delta}-\tfrac{1}{2}\right)$-expander
which holds with high probability. 
As before we denote the two partition classes of $G$ by $\cO, \cE$ and let $V=V(G)$.
Let $m=n/2$ (so that $|\cO|=|\cE|=m$) and fix an integer $q\geq3$.

Throughout this section all colorings will be proper vertex $q$-colorings.  Let $\cX=\cX_{G,q}$ be the set of all proper colorings $f:V(G)\to [q]$. For a set $S\subset V(G)$ and $f\in\cX$, we let $f(S):=\{f(v) : v\in S\}$ and we let $f|_S$ denote the restriction of $f$ to $S$, that is, the map $f|_S: S\to [q]$ where $f|_S(v)=f(v)$ for all $v\in S$.

 Our aim is to obtain an FPTAS for $Z_G(q)=|\cX|$ and an efficient sampling algorithm for $\mu_{G,q}$, the uniform distribution over $\cX$. 

We have the following important class of colorings of $G$, which play the role of ground states. 

\begin{defn}
Let $A,B$ be disjoint subsets of $[q]$ such that $A\cup B=[q]$. 
We call a coloring $f\in \cX$ an \emph{$(A,B)$-coloring}
if $f(\cO)\subset A$, $f(\cE)\subset B$. 
We call the pair $(A,B)$ a \emph{pattern}.
\end{defn}

This notion was inspired by the work of Peled and Spinka \cite{peled2018rigidity} where such patterns play a similar role. 

Let $\cP$ denote the set of all patterns. 
Given a subset $S\subset V$, a coloring $f\in \cX$ and a pattern $(A,B)$,
we say that $f$ \emph{agrees} with $(A,B)$ at $v\in V$ if 
$v\in\cO$ and $f(v)\in A$ or if $v\in\cE$ and $f(v)\in B$.
We say that $f$ disagrees with $(A,B)$ at $v$ otherwise.  
Let $\chi_{A,B}(S)$ be the set of colorings $f\in\cX$ such that 
$f$ \emph{disagrees} with $(A,B)$ at each $v\in S$ and agrees at each $v\in V\bs S$.

Suppose $(A,B)\in\cP$, $S\subset V$. 
We record the following simple bound for future use.  Recall that for $S\subset V$, $\partial S$ denotes the set of vertices in $V\bs S$ that are adjacent to a vertex of $S$.
\begin{lemma}\label{lemdbound}
For $(A,B)\in\cP$ and $S\subset V$ we have 
\[
|\chi_{A,B}(S)|\le
|A|^m|B|^m
\left(1-\frac{1}{|A|}\right)^{|\p S\cap \cO|}
\left(1-\frac{1}{|B|}\right)^{|\p S\cap \cE|}
\left(\frac{|A|}{|B|}\right)^{|S\cap\cE|-|S\cap\cO|}\, .
\]
\end{lemma}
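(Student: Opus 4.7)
The plan is to partition $V$ into the four classes $S_\cO := S \cap \cO$, $S_\cE := S \cap \cE$, $T_\cO := (V\bs S) \cap \cO$, $T_\cE := (V\bs S)\cap \cE$, and count colorings in $\chi_{A,B}(S)$ by assigning colors in two stages, first to $S$ and then to $V \bs S$. By the definition of $\chi_{A,B}(S)$, any $f \in \chi_{A,B}(S)$ must satisfy $f(S_\cO) \subset B$, $f(S_\cE) \subset A$, $f(T_\cO) \subset A$, and $f(T_\cE) \subset B$.

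The key structural observation is that, because $G$ is bipartite and $A \cap B = \emptyset$, most properness constraints are automatic. Edges inside $S$ go between $S_\cO$ (colored from $B$) and $S_\cE$ (colored from $A$); edges inside $V \bs S$ go between $T_\cO$ (from $A$) and $T_\cE$ (from $B$). In both cases the endpoints are colored from disjoint palettes, so the edge is automatically proper. The only nontrivial constraints come from edges connecting $S$ to $V \bs S$, namely the $(T_\cO, S_\cE)$ edges (both endpoints in $A$) and the $(S_\cO, T_\cE)$ edges (both endpoints in $B$).

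Using this, I will count as follows. First, the number of ways to color $S$ is at most $|A|^{|S_\cE|}|B|^{|S_\cO|}$ since no constraints are active. Next, extend to $V \bs S$ vertex by vertex. A vertex $v \in T_\cO$ must be colored from $A$ and avoid the colors used on its $S_\cE$-neighbors; if $v \notin \p S$ this gives at most $|A|$ choices, while if $v \in \p S\cap \cO$ then $v$ has at least one neighbor in $S_\cE$ whose color is forbidden, so there are at most $|A|-1$ choices. The analogous statement holds for $v \in T_\cE$ with $|B|$ in place of $|A|$. Multiplying these factors gives
\[
|A|^{|S_\cE|}|B|^{|S_\cO|}\, |A|^{|T_\cO|}(1-1/|A|)^{|\p S\cap\cO|}\, |B|^{|T_\cE|}(1-1/|B|)^{|\p S\cap\cE|},
\]
and substituting $|T_\cO|=m-|S_\cO|$, $|T_\cE|=m-|S_\cE|$ and collecting the $|A|$ and $|B|$ exponents yields the claimed bound.

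There is no real obstacle here; the only subtlety is bookkeeping, namely to use the bipartition to check that the ``automatic'' edges really are proper and to verify that the order of exposure (color $S$, then $V \bs S$) legitimately justifies the $(1-1/|A|)$ and $(1-1/|B|)$ savings on the $\p S$ vertices.
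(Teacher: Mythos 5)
Your proof is correct and follows essentially the same counting argument as the paper: color $S$ first (at most $|A|^{|S\cap\cE|}|B|^{|S\cap\cO|}$ choices), then use the savings at $\p S$-vertices (factors $|A|-1$ or $|B|-1$), then color the rest freely, and finally collect exponents. The paper phrases the second stage as two separate steps (color $\p S$, then color $V\setminus S^+$), whereas you do them in one pass over $V\setminus S$, but the bookkeeping is identical.
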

\begin{proof}
For $f\in \chi_{A,B}(S)$,
there are at most $|A|^{|S\cap\cE|}|B|^{|S\cap\cO|}$ choices for $f|_S$
and given any such choice there are then at most
$(|A|-1)^{|\p S\cap \cO|}(|B|-1)^{|\p S\cap \cE|}$ choices for $f|_{\p S}$.
Finally, given any choice of $f|_{S^+}$
there are at most 
$|A|^{m-|S^+\cap \cO|}|B|^{m-|S^+\cap \cE|}$
choices for $f|_{V\bs S^+}$.
The result follows.
\end{proof}

Henceforth, let us call a set $S\subset V$ \emph{little} if
$|S| \le 4q\tfrac{\log\Delta}{\Delta}m$.

\begin{lemma}\label{lemalmost}
For every $f\in\cX$, there is a pattern $(A,B)$
and a little set $S\subset V$ such that
$f\in \chi_{A,B}(S)$.
\end{lemma}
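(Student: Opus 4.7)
The plan is to construct the pattern $(A,B)$ greedily from the coloring $f$, one color at a time, and to bound the number of disagreements using Lemma~\ref{lemsmallrandom} applied to each color class separately.

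The key observation is that for a proper coloring $f \in \cX$, each color class $T_c := f^{-1}(c) \subseteq V$ is an independent set of $G$. Since $G$ is a $\bigl(\tfrac{4\log\Delta}{\Delta}, \tfrac{\Delta}{4\log\Delta}-\tfrac{1}{2}\bigr)$-expander, Lemma~\ref{lemsmallrandom} applies and tells us that, for each $c \in [q]$, at least one of $|T_c \cap \cO|$ and $|T_c \cap \cE|$ is tiny, i.e.,
\[
\min\bigl(|T_c \cap \cO|,\, |T_c \cap \cE|\bigr) \;\le\; \frac{4\log\Delta}{\Delta}\,m.
\]

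Now I would define the pattern $(A,B)$ greedily to make the side containing color $c$ the \emph{larger} side of $T_c$: place $c \in A$ if $|T_c \cap \cE| < |T_c \cap \cO|$, and $c \in B$ otherwise (ties broken arbitrarily). With this choice, the set $S$ of vertices where $f$ disagrees with $(A,B)$ decomposes as
\[
S \;=\; \bigsqcup_{c \in A} (T_c \cap \cE) \;\sqcup\; \bigsqcup_{c \in B} (T_c \cap \cO),
\]
and each summand on the right is the smaller of $|T_c \cap \cO|$ and $|T_c \cap \cE|$. By the bound from the previous paragraph,
\[
|S| \;=\; \sum_{c \in [q]} \min\bigl(|T_c \cap \cO|,\, |T_c \cap \cE|\bigr)
\;\le\; q \cdot \frac{4\log\Delta}{\Delta}\,m,
\]
so $S$ is little. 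By construction $f \in \chi_{A,B}(S)$, completing the proof.

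There is no real obstacle here: the argument is essentially pigeonhole, and the heavy lifting is entirely contained in Lemma~\ref{lemsmallrandom}, which is exactly what lets us pay only a factor of $q$ when summing the per-color tiny bounds to obtain the little bound. A minor point worth noting is that although the greedy construction could in principle output $A = \emptyset$ or $B = \emptyset$, the bound $|S| \le 4q\tfrac{\log\Delta}{\Delta} m < m$ (valid in the parameter regime $\Delta \ge Cq^2\log^2 q$ of Theorem~\ref{thmColor}) rules this out, since an empty side would force $|S| \ge m$; so the produced pattern is automatically a genuine two-part partition, but this observation is not needed for the lemma itself.
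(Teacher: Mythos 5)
Your proof is correct and follows essentially the same argument as the paper: in both cases the expansion property forces each color class, being independent, to be tiny on at least one side of the bipartition, and the pattern places each color on its majority side so that summing the tiny minority pieces over the $q$ colors yields the little bound $\tfrac{4q\log\Delta}{\Delta}m$. The only cosmetic difference is that you invoke Lemma~\ref{lemsmallrandom} as a black box, whereas the paper re-derives the same expansion fact inline (two subsets across the bipartition each of size exceeding $\tfrac{4\log\Delta}{\Delta}m$ must span an edge) and chooses the pattern as any extension of the disjoint sets $A',B'$ of ``big'' colors rather than via your per-color majority rule.
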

\begin{proof}
Given $f\in\cX$ let 
\[
A'=\{i\in[q]: |f^{-1}(\{i\})\cap\cO|>4\tfrac{\log\Delta}{\Delta}m\}\, ,
\]
\[
B'=\{j\in[q]: |f^{-1}(\{j\})\cap\cE|>4\tfrac{\log\Delta}{\Delta}m\}\, .
\]
Note that since $G$ is a $\left(\tfrac{4\log\Delta}{\Delta}, \tfrac{\Delta}{4\log\Delta}-\tfrac{1}{2}\right)$-expander, we have that $E(X,Y)\neq\emptyset$ for any two subsets $X\subseteq \cO$, $Y\subseteq\cE$ such that $|X|, |Y|>4\tfrac{\log\Delta}{\Delta}m$.
It follows that $A'$ and $B'$ are disjoint.
Letting $(A,B)$ be any pattern such that $A'\subset A$ and $B'\subset B$,
we have $f\in \chi_{A,B}(S)$ for some little set $S$ by construction. 
\end{proof}

We now define a collection of polymer models and partition functions, a linear combination of which will serve as a good approximation to $Z_G(q)=|\cX|$. 
We define a polymer to be a \emph{little}, $G^3$-connected subset of $G$ (see Definition~\ref{defklink}).
We note that a subset $S\subset V$ is $G^3$-connected if and only if $G[S^+]$ is connected.
We say that two polymers $\gam_1, \gam_2$ are \emph{compatible} if $d_G(\gam_1, \gam_2)>3$
(i.e. $\gam_1\cup\gam_2$ is not $G^3$-connected).
We let $\cC=\cC(G)$ denote the set of all polymers of $G$ and let $\cG=\cG(G)$ denote the family of all sets of mutually compatible polymers from $\cC$. 
Let us now fix a pattern $(A,B)\in\cP$.
To each polymer $\gamma\in\cC$,
we assign a weight
\[
 w_{A,B}(\gam):=\frac{|\chi_{A,B}(\gam)|}
 {|A|^{m} |B|^{m}}\, .
\]
This defines a polymer model with partition function
\[
\Xi_{A,B}(G)=\sum_{\Gam\in \cG}\prod_{\gam\in\Gam} w_{A,B}(\gam)\, .
\]

\begin{lemma}\label{lemcolapprox}
There is an absolute constant $C$ such that if $\Delta\ge C q^2 \log^2 q$, then the sum
\[
\tilde Z_G(q):=
\sum_{(A,B)\in\cP}|A|^m|B|^m\cdot \Xi_{A,B}(G)
\]
is a $e^{-m/4q}$-relative approximation to $Z_G(q)$.
\end{lemma}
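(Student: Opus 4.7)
The plan is to mirror the strategy of Lemma~\ref{lemhcapproxrandom}: first show that $|A|^m|B|^m\,\Xi_{A,B}(G)$ enumerates exactly those colorings whose disagreement set with the pattern $(A,B)$ is sparse, then bound the resulting overcount across patterns.

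The main step is the factorization identity
\[
\prod_{i=1}^k w_{A,B}(\gamma_i)=\frac{|\chi_{A,B}(S)|}{|A|^m|B|^m}
\]
for any pairwise compatible polymers $\gamma_1,\ldots,\gamma_k\in\cC$ with $S=\bigsqcup_i\gamma_i$. To prove this, set $T_i:=\gamma_i\cup\p\gamma_i$ and $R:=V\setminus\bigcup_iT_i$. Compatibility ($d_G(\gamma_i,\gamma_j)>3$) ensures that the $T_i$ are pairwise disjoint with no edges between them, and that no edge of $G$ joins $R$ to any $\gamma_i$. For $f\in\chi_{A,B}(S)$ the vertices in $R\cup\bigcup_i\p\gamma_i$ carry pattern colors while those in $\gamma_i$ carry non-pattern colors; bipartiteness of $G$ together with $A\cap B=\emptyset$ then forces every edge outside a single $T_i$ to be automatically bichromatic. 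Hence $|\chi_{A,B}(S)|=|A|^{|R\cap\cO|}|B|^{|R\cap\cE|}\prod_iN_i$, where $N_i$ counts the admissible partial colorings of $T_i$ in isolation. The same decomposition applied with $k=1$ gives $|\chi_{A,B}(\gamma_i)|=|A|^{m-|T_i\cap\cO|}|B|^{m-|T_i\cap\cE|}N_i$, and dividing by $|A|^m|B|^m$ and multiplying over $i$ yields the identity (using the partition $V=R\sqcup\bigsqcup_iT_i$). Since $\Gamma\mapsto\bigcup\Gamma$ is a bijection between $\cG$ and the sparse subsets of $V$ (with polymers corresponding to $G^3$-connected components), this gives
\[
|A|^m|B|^m\,\Xi_{A,B}(G)=\sum_{\text{sparse }S\subseteq V}|\chi_{A,B}(S)|,
\]
so that, exchanging order of summation over patterns,
\[
\tilde Z_G(q)=\sum_{f\in\cX}N_s(f), \qquad N_s(f):=\#\{(A,B)\in\cP:S_{A,B}(f)\text{ is sparse}\},
\]
where $S_{A,B}(f)$ denotes the disagreement set of $f$ with $(A,B)$.

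The lower bound $\tilde Z_G(q)\ge Z_G(q)$ is now immediate from Lemma~\ref{lemalmost}: every $f\in\cX$ admits a pattern whose disagreement is little, hence sparse, so $N_s(f)\ge 1$. For the upper bound I use the trivial estimate $N_s(f)\le|\cP|=2^q$. Since $G\in\cG^{\text{bip}}(n,\Delta)$ is $\Delta$-regular bipartite we have $m\ge\Delta\ge Cq^2\log^2q$, so for $C$ a sufficiently large absolute constant and $q\ge 3$ one checks that $m/(4q)\ge Cq\log^2q/4\ge q\log 2$, giving $e^{m/4q}\ge 2^q$. Hence $\tilde Z_G(q)\le 2^qZ_G(q)\le e^{m/4q}Z_G(q)$, and combining with the lower bound yields $e^{-m/4q}\tilde Z_G(q)\le Z_G(q)\le e^{m/4q}\tilde Z_G(q)$, the desired approximation.

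The main obstacle is the factorization identity: one must check that $G^3$-compatibility is strong enough to make the constrained counts $|\chi_{A,B}(\cdot)|$ multiplicative over polymers, namely that the closed neighborhoods $T_i$ are pairwise non-interacting and that every edge leaving a $T_i$ lands either inside another $T_j$ (excluded) or on a pattern-agreeing vertex (auto-bichromatic). Once this is in place, everything else reduces either to Lemma~\ref{lemalmost} or to the trivial overcount $N_s(f)\le 2^q$, which is absorbed into $e^{m/4q}$ by the hypothesis $\Delta\ge Cq^2\log^2q$.
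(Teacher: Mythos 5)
Your proof is correct, and it takes a genuinely different (and shorter) route from the paper's. The factorization identity and the rewrite $\tilde Z_G(q)=\sum_{(A,B)\in\cP}\sum_{S\text{ sparse}}|\chi_{A,B}(S)|$ match the paper's first step exactly. But where the paper then introduces the intermediate quantity $\hat Z_G(q)=\sum_{(A,B)}\sum_{S\text{ little}}|\chi_{A,B}(S)|$ and runs a two-stage relative approximation $\tilde Z_G\to\hat Z_G\to Z_G$, using Lemma~\ref{lemdbound} to bound $|\chi_{A,B}(S)|$, a separate structural claim that sparse sets are not too large (the paper's Claim~\ref{claimsparse}), and an inclusion--exclusion estimate of $|\chi_{A,B}(S)\cap\chi_{C,D}(T)|$ for distinct patterns, you instead exchange the order of summation to get $\tilde Z_G(q)=\sum_{f\in\cX}N_s(f)$ and sandwich $1\le N_s(f)\le |\cP|=2^q$ for every $f$. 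The lower bound is Lemma~\ref{lemalmost} verbatim (every little set is a fortiori sparse), and the upper bound uses only $m\ge\Delta\ge Cq^2\log^2 q$ to absorb $2^q$ into $e^{m/(4q)}$. What the paper's version buys is a much tighter estimate, roughly a $e^{-\Theta(m/q)}$-relative approximation, whereas your global-overcount argument gives only the factor $2^q$; but since the lemma only claims $e^{-m/(4q)}$ and the degree hypothesis makes $m\ge Cq^2\log^2q$, your weaker bound is more than sufficient, and you avoid the paper's Claim~\ref{claimsparse}, Lemma~\ref{lemdbound}, and the inclusion--exclusion computation entirely. One small remark: your verification of the factorization is sound; the key fact you correctly isolate is that $d_G(\gamma_i,\gamma_j)>3$ forces the closed neighborhoods $\gamma_i^+$ to be pairwise disjoint with no edges between them or into $R$, and that any edge touching a pattern-agreeing vertex is automatically bichromatic by disjointness of $A$ and $B$ together with bipartiteness.
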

\begin{proof}
Let us fix $(A,B)\in\cP$ and let $S\subset V$.
Let $\hat \chi_{A,B}(S)$ be the set of colorings $f|_{S^+}$
where $f\in \chi_{A,B}(S)$.
Given any $g\in \hat \chi_{A,B}(S)$,
we may extend $g$ to an element of $\chi_{A,B}(S)$ 
by arbitrarily assigning vertices of $\cO\bs S^+$ with colors from $A$
and arbitrarily assigning vertices of $\cE\bs S^+$ with colors from $B$.
It follows that 
\begin{equation}\label{eqextend}
\frac{|\chi_{A,B}(S)|}{|A|^m|B|^m}=
\frac{|\hat \chi_{A,B}(S)|}{|A|^{|S^+\cap \cO|}|B|^{|S^+\cap \cE|}}\, .
\end{equation}
If the $G^3$-connected components of $S$ are
$\gamma_1,\ldots, \gamma_k$ we therefore have
\[
\frac{|\chi_{A,B}(S)|}{|A|^m|B|^m}=\prod_{i\in[k]}w_{A,B}(\gamma_i)\, .
\]
We call a set $S\subset V$ \emph{sparse}
if all of its $G^3$-connected components are little. 
We note that there is a one-one correspondence
between sparse subsets of $V$
and collections of mutually compatible polymers
(i.e. elements of $\cG$).
It follows that
\[
\sum_{S\text{ sparse}}|\chi_{A,B}(S)|=|A|^{m}|B|^{m}
\sum_{\Gam\in \cG}\prod_{\gam\in\Gam} w_{A,B}(\gamma)
=|A|^{m}|B|^{m}\cdot\Xi_{A,B}(G)\, 
\]

and so
\[
\tilde Z_G(q)=\sum_{(A,B)\in\cP}\sum_{S\text{ sparse}}|\chi_{A,B}(S)|\, .
\]
We first show that $\tilde Z_G(q)$ is a good approximation to the sum 
\[
\hat Z_G(q):=\sum_{(A,B)\in\cP}\sum_{S\text{ little}}|\chi_{A,B}(S)|\, .
\]
To this end we need the following claim.
\begin{claim}\label{claimsparse}
If $S$ is sparse
then $|S|\le12q\tfrac{\log\Delta}{\Delta}m$.
\end{claim}
\begin{proof}[Proof of Claim~\ref{claimsparse}]
Let $S$ be sparse and let 
$\Gam=\{\gam_1,\ldots, \gam_k\}$ be the $G^3$-connected components of $S$.
Suppose that 
$|S|>12q\tfrac{\log\Delta}{\Delta}m$.
Since each $\gamma_i$ is little by assumption,
we may partition $\Gam=\Gam_1\cup \Gam_2$ in such a way that
$\sum_{\gamma\in \Gam_j}|\gam|\ge\tfrac{4q \log \Delta}{\Delta}m$ for $j=1,2$.
Let $S_j=\bigcup_{\gam\in \Gam_j }\gam$ for $j=1,2$.
Suppose without loss of generality that 
$|S_1\cap\cO|\ge\tfrac{2q \log \Delta}{\Delta}m$.
Since $G$ is a 
$\left(\tfrac{4\log\Delta}{\Delta}, \tfrac{\Delta}{4\log\Delta}-\tfrac{1}{2}\right)$-expander, 
by considering a subset of $S_1\cap \cO$ of size $\tfrac{4\log\Delta}{\Delta}m$, we have 
\begin{align}\label{eqsubsetexp}
|\partial(S_1\cap\cO)|>\left(1-\frac{2\log\Delta}{\Delta}\right)m
\end{align} 
and so
$|S_1^+|>m$.
Similarly $|S_2^+|>m$ so that
$S_1^+\cap S_2^+\neq \emptyset$.
It follows that $\gam_i^+\cap\gam_j^+\neq\emptyset$
for some $\gam_i\in\Gam_1$, $\gam_j\in\Gam_2$, 
contradicting the fact that $\gam_i, \gam_j$ 
are distinct $G^3$-connected components of $S$.
\end{proof}

Suppose now that $S\subset V$ is sparse and not little, and
suppose without loss of generality that 
$|S\cap\cO|\ge 2q\tfrac{\log\Delta}{\Delta}m$.
As in \eqref{eqsubsetexp}
we have 
$|\partial(S\cap\cO)|\ge\left(1-\tfrac{2\log\Delta}{\Delta}\right)m$
and so by Claim~\ref{claimsparse}
 \[
 |\p S\cap\cE| >\left(1-\frac{14q\log\Delta}{\Delta}\right)m>\frac{m}{2}\, .
 \]
By Lemma~\ref{lemdbound} we then have 
 \begin{align}
 |\chi_{A,B}(S)|&
 \le |A|^m|B|^m\left(1-\frac{1}{|A|}\right)^{|\p S\cap \cO|}
 \left(1-\frac{1}{|B|}\right)^{|\p S\cap \cE|}
 \left(\frac{|A|}{|B|}\right)^{|S\cap\cE|-|S\cap\cO|}\\
 &\le |A|^m|B|^m
 \left(1-\frac{1}{|B|}\right)^{m/2}q^{12q\tfrac{\log\Delta}{\Delta}m}\, ,
 \end{align}
where for the final inequality we have used the crude bounds 
$|A|/|B|\in [1/q,q]$, $||S\cap\cE|-|S\cap\cO||\le |S|$
and Claim~\ref{claimsparse}. We thus have

\begin{align}
\tilde Z_G(q)-\hat Z_G(q) &=\sum_{(A,B)\in\cP}\sum_{\substack{
S\text{ sparse,}\\ \text{not little}}}|\chi_{A,B}(S)|\, ,\\
&\le\sum_{(A,B)\in\cP}\binom{2m}{12q\tfrac{\log\Delta}{\Delta}m}
 |A|^m|B|^m
  \left(1-\frac{1}{|B|}\right)^{m/2}q^{12q\tfrac{\log\Delta}{\Delta}m}\, ,\\
 &\le \sum_{(A,B)\in\cP} |A|^m|B|^m
 \exp\left\{2\log 2  \cdot H\left(6q\frac{\log\Delta}{\Delta}\right)m + 12q\log q\frac{\log\Delta}{\Delta}m- \frac{m}{2q } \right\}\, ,\\
 & \le e^{-m/(3q)}\sum_{(A,B)\in\cP} |A|^m|B|^m\, ,
\end{align}
where for the first inequality we used the fact that there are at most 
$\binom{2m}{12qm{\log\Delta}/{\Delta}}$ sparse subsets of $V$ and
for the final inequality we used the assumed lower bound on $\Delta$.
Using the crude bound $\tilde Z_G(q) \ge \sum_{(A,B)\in\cP} |A|^m|B|^m$
 we see that $\hat Z_G(q)$ is a $e^{-m/(3q)}$-relative approximation to $\tilde Z_G(q)$.

By Lemma~\ref{lemalmost} we have
\begin{align}\label{eqfunion}
\cX=
\bigcup_{(A,B)\in\cP}\bigcup_{S \text{ little} } \chi_{A,B}(S)\, .
\end{align}
Suppose now that $(A,B), (C,D)\in \cP$,
$S,T\subset V$ and
$\chi_{A,B}(S)\cap \chi_{C,D}(T)\neq \emptyset$.
Unless $(A,B,S)= (C,D,T)$ we must have $(A,B)\neq(C,D)$.
WLOG suppose that $|A|=\max\{|A|, |B|, |C|, |D|\}$.
 Then $|A\cap C|\le |A| -1$ and $|B\cap D|\le|B|$.
If $S, T$ are both little we then have
\begin{align}
|\chi_{A,B}(S)\cap \chi_{C,D}(T)|&\le 
(|A|-1)^m |B|^m
q^{8q\tfrac{\log\Delta}{\Delta}m}\, \\
&\le \left(\frac{q-1}{2} \right)^{2m}
\exp\left\{8q\log q\frac{\log\Delta}{\Delta}m\right\}\, .
\end{align}
The first inequality comes from the fact that there are at most $(|A|-1)^m$
ways to color the vertices of $\cO$ that agree with both $(A,B)$ and $(C,D)$ 
(and so are colored with elements of $A\cap C$), there are at most  $|B|^m$
ways to color the vertices of $\cE$ that agree with both $(A,B)$ and $(C,D)$,
and there are at most $q^{8q\tfrac{\log\Delta}{\Delta}m}$ ways to color the 
vertices of $S$ and $T$.
We may therefore bound $|\cX|$ by inclusion-exclusion as follows: 
\begin{align}
\hat Z_G(q) -2^{2q}
\binom{2m}{4q\tfrac{\log\Delta}{\Delta}m}^2
\left(\frac{q-1}{2} \right)^{2m}
\exp\left\{8q\log q\frac{\log\Delta}{\Delta}m\right\}
 \le |\cX| 
 \le \hat Z_G(q)\, ,
\end{align}
where we have used the bound $|\cP|\le 2^q$ and that there are at most
$\binom{2m}{4qm{\log\Delta}/{\Delta}}$
little subsets of $V$.
Using the bound $\hat Z_G(q)\ge \left\lceil \tfrac{q}{2} \right\rceil^m\left\lfloor \tfrac{q}{2} \right\rfloor^m $,
it follows that $\hat Z_G(q)$ is a $e^{-m/q}$-relative approximation to $|\cX|$.
The result follows.
\end{proof}

We may therefore focus on approximating each partition function $\Xi_{A,B}(G)$ individually.
Henceforth let us fix a pattern $(A,B)\in\cP$.

We will verify conditions (i)--(iii) of Theorem~\ref{thmPolymerCount} for the polymer model defined on $\cC$ with weight function $w_{A,B}$ in order to obtain an FPTAS for $\Xi_{A,B}(G)$. 

Verifying condition (i) is essentially immediate. Given $ \gamma\subset V(G)$, determining  whether $\gamma \in \cC$ amounts to checking whether $\gamma$ is $G^3$-connected or equivalently whether $G[\gamma^+]$ is connected. This can be done in $O(\Delta |\gamma|)$ time by a depth-first search algorithm.
 By~\eqref{eqextend}, $w_{A,B}(\gamma)$ can be calculated in
 $e^{O(\Delta \log q \cdot |\gamma|)}$ time by checking all possible colorings of $\gamma^+$.

We now turn our attention verifying conditions (ii) and (iii) for an appropriate choice of function $g$.

\subsection{Verifying the Koteck\'y-Preiss condition}\label{subsecVKPcol}
For brevity we denote $w_{A,B}$ simply by $w$.
We choose 
\[ g(\gamma)= \frac{\Delta}{10q^2\log \Delta}|\gam|   \,.\]
It remains to show the Koteck\'y-Preiss condition holds.  That is, 

\begin{align*}\label{eqKPsimple}
\sum_{\gamma':d(\gamma', \gamma)\leq3} 
w(\gamma')
 e^{| \gamma'| +g(\gamma')}\le | \gamma | 
\end{align*}

for all $\gamma\in \cC$.
If we could show that for each $v\in V(G)$
\begin{equation}\label{eqvertexcol}
\sum_{\gamma':\gamma'\ni v }w(\gam') \cdot e^{|\gamma'|+g(\gamma')}
\le \frac{1}{\Delta^3}\, ,
\end{equation}
then by summing this inequality over all $v$ at distance at most 3 from $\gam$ in $G$ (noting that there are at most $\Delta (\Delta-1)^2 +1 \le\Delta^3|\gamma|$ such vertices) we would be done.

The graph $G^3$ has maximum degree at most $\Delta^3$
and so by Lemma~\ref{lemConCount} the number of $G^3$-connected
sets of size $t$ containing vertex $v$ is at most $(e\Delta^3)^t$.
Suppose now that $S\subset V$ is little. 
Since $G$ is a $\left(\tfrac{4\log\Delta}{\Delta}, \tfrac{\Delta}{4\log\Delta}-\tfrac{1}{2}\right)$-expander, by considering a subset $T\subset S\cap \cO$ such that $|T|=\tfrac{|S\cap\cO|}{q}\le \tfrac{4\log\Delta}{\Delta}m$ we have 
\begin{align}
|\partial(S\cap \cO)|\ge \left(\frac{\Delta}{4\log\Delta}-\frac{1}{2}\right)\frac{|S\cap\cO|}{q}
\end{align}
and similarly
\begin{align}
|\partial(S\cap \cE)|\ge \left(\frac{\Delta}{4\log\Delta}-\frac{1}{2}\right)\frac{|S\cap\cE|}{q}\, .
\end{align}
By summing the above two inequalities we obtain
\begin{align}
|\p S|\ge |\partial(S\cap \cO)|+|\partial(S\cap \cE)|-|S|\ge \left(\frac{\Delta}{4\log\Delta}-q-\frac{1}{2}\right)\frac{|S|}{q}\, .
\end{align}
If $\gam\in\cC$, so that in particular $\gam$ is little,
by Lemma~\ref{lemdbound} we then have
\begin{align}
w(\gam)\le \left(1-\frac{1}{q}\right)^{|\p \gam|}q^{|\gam|}
\le \exp\left\{-\frac{\Delta}{5q^2\log \Delta}|\gam|\right\}\, ,
\end{align}
where we've assumed that $\Delta> Cq^2\log^2q$ for a large absolute constant $C$ as in Lemma~\ref{lemcolapprox}. Putting everything together we have
\begin{align}\label{eqKPcol}
\sum_{\gamma':\gamma'\ni v }w(\gam') \cdot e^{|\gamma'|+g(\gamma')}\le
 \sum_{t=1}^{\infty}(e^2\Delta^3)^t
  \exp\left\{-\frac{\Delta}{10q^2\log \Delta}t\right\}\leq\frac{1}{\Delta^3}\, .
\end{align}

\subsection{Proof of Theorem~\ref{thmColor}}
\label{secColfinish}

We consider two cases separately.  If $\eps < e^{-n/(8q)}$, then we proceed by brute force, checking all possible elements of $[q]^{V(G)}$ to see if they are a proper $q$-coloring. In this way we can calculate $Z_G(q)$ exactly in time $O(n\Delta q^n)$ and therefore count and sample in time polynomial in $1/\eps$. 

Now we assume $\eps > e^{-n/(8q)}$ and take 
\begin{align}
\label{eqDelbound}
\Delta> C q^2 \log^2 q \,,
\end{align}
where $C$ is chosen large enough so that Lemmas~\ref{lem:bassexp}, \ref{lemcolapprox}  and inequality \eqref{eqKPcol} all hold.

For each $(A,B)\in \cP$, using the FPTAS for $ \Xi_{A,B}(G)$ given by Theorem~\ref{thmPolymerCount}, we may find $Z_{A,B}^{\text{alg}}$, an $\eps/3$-relative approximation to $ |A|^m|B|^m \Xi_{A,B}(G)$ in time polynomial in $n$ and $1/\eps$.
It follows that $Z^{\text{alg}}:=\sum_{(A,B)\in\cP}Z_{A,B}^{\text{alg}}$ is an $\eps/3$-relative approximation to $\tilde Z_G(q)$ (as defined in Lemma~\ref{lemcolapprox}). By Lemma~\ref{lemcolapprox}, $\tilde Z_G(q)$ is an $\eps/3$-relative approximation to $Z_G(q)$ and so $Z^{\text{alg}}$ is an $\eps$-relative approximation to $Z_G(q)$ as required.

We now turn our attention to the sampling algorithm.
Let $\mu$ denote the uniform distribution on $\cX$.
Consider the distribution $\hat \mu$ on  $\cX$ defined as follows. First choose $(A,B)\in\cP$ with probability proportional to 
$|A|^m|B|^m\Xi_{A,B}(G)$. Then, sample $\Gam\in\cG$  from the measure
\begin{align*}
\nu_{A,B}(\Gamma) &= \frac{ \prod_{\gamma \in \Gamma} w_{A,B}(\gamma) }{ \Xi_{A,B}(G) }\, .   
\end{align*}
Let $S=\bigcup_{\gamma\in\Gam}\gam$.
Uniformly choose a coloring of $S^+$
which disagrees with $(A,B)$ on $S$ and agrees on $\p S$.  Then extend this coloring to a coloring of $G$ by
coloring vertices of $\cO\bs S^+$ with elements of $A$ 
and vertices of $\cE\bs S^+$ with elements of $B$ uniformly at random.
The distribution of the resulting coloring is $\hat \mu$.
By Lemma~\ref{lemcolapprox} we have
\begin{align*}
\| \hat \mu - \mu \|_{TV} \le \eps/2 \, .
\end{align*}
Then to obtain an $\eps/2$-approximate sample from $\hat \mu$ efficiently, we proceed as in Section~\ref{secHCfinish}- first we approximate $ \Xi_{A,B}(G)$ for $(A,B) \in \cP$ in order to approximately sample $(A,B)\in\cP$ and then we apply Theorem~\ref{thmPolymerSample} to obtain an approximate sample from $\nu_{A,B}$.  As in Section~\ref{secPottsproof},  the polymer sampling algorithm of Theorem~\ref{thmPolymerSample} guarantees that $|S^+| = O( \log(n/\eps))$ and so a uniformly random coloring of $S^+$ which disagrees with $(A,B)$ on $S$ and agrees on $\p S$ can be obtained by brute force in time polynomial in $n$ and $1/\eps$.

\qed

\section{Concluding Remarks}
\label{secconclude}
The algorithms presented here are the first provably efficient counting and sampling algorithms for \#BIS-hard problems for the class of expander graphs. However, they are presumably not optimal in terms of either their running time or the range of parameters for which they are provably efficient.

One natural choice for more efficient algorithms would be those based on Markov chains.  A  candidate algorithm for the Potts model is the Swendsen-Wang dynamics~\cite{swendsen1987nonuniversal}.  It is natural to conjecture that the Swendsen-Wang dynamics are rapidly mixing on expander graphs at sufficiently low temperatures, and this would give a more efficient sampling algorithm than the one presented here.  Similarly, for the hard-core model on a bipartite expander graph with symmetry between the sides of the bipartition, one could follow the suggestion of~\cite{helmuth2018contours} and start the Glauber dynamics in either the all even or all odd occupied state with equal probability.  Proving that such sampling algorithms are indeed efficient is left as an open problem.  Recently Chen, Galanis, Goldberg, Perkins, Stewart, and Vigoda~\cite{PolymerMarkov} have made some progress in this direction by showing that a restricted version of Glauber dynamics for the Potts and hard-core models mixes in polynomial time under similar conditions to those of Theorems~\ref{thmhard-core} and~\ref{thm:pottspartition}.

For random graphs, we can ask if efficient algorithms for the problems addressed in this paper exist for \textit{all} ranges of parameters.  

\begin{question}
Is there a polynomial-time sampling algorithm for the Potts model on random regular graphs at all inverse temperatures $\beta$?  Are there efficient sampling algorithms for the hard-core model and proper colorings on random regular bipartite graphs at all fugacities $\lam$ and for all $q$?
\end{question}

\end{document}